\documentclass[11pt,letterpaper]{article}
\usepackage{amsmath,amsfonts,amsthm,amssymb}
\usepackage{setspace}
\usepackage{fancyhdr}
\usepackage{lastpage}
\usepackage{extramarks}
\usepackage{xspace}
\usepackage{chngpage}

\usepackage[ruled,linesnumbered]{algorithm2e} 

\usepackage{soul,color}
\usepackage{graphicx,float,wrapfig}
\usepackage[font=small,labelfont=bf]{caption}
\usepackage[margin=1in]{geometry}
\usepackage{enumitem}
\usepackage{thmtools,thm-restate}
\usepackage{bbm}

\setlist{nosep}

\allowdisplaybreaks

\usepackage{mdframed}
\newtheorem*{mdresult}{Result}

\usepackage[dvipsnames]{xcolor}
\usepackage[linktocpage=true,breaklinks,colorlinks,citecolor=blue,linkcolor=BrickRed]{hyperref}

\usepackage{wrapfig}
  
\title{\textmd{\bf Learning Anonymous Pricing  for \\  Online Resource Allocation
  }}

\date{\today}
\author{Yifeng Teng\footnote{Google Research. Email: yifengt@google.com.}\and  Yifan Wang\footnote{School of Computer Science, Georgia Institute of Technology, Atlanta, GA, USA. Email: ywang3782@gatech.edu. Supported in part by NSF awards CCF-2327010 and CCF-2440113.}}

\usepackage[usenames,dvipsnames]{xcolor}
\usepackage[linktocpage=true,breaklinks,colorlinks,citecolor=blue,linkcolor=BrickRed]{hyperref}
\usepackage{cleveref}

\DeclareUnicodeCharacter{2217}{*}

\newtheorem{Theorem}{Theorem}[section]
\newtheorem{Lemma}[Theorem]{Lemma}

\newtheorem{Claim}[Theorem]{Claim}

\newcommand{\parta}{(\text{\uppercase\expandafter{\romannumeral1}})}
\newcommand{\partb}{(\text{\uppercase\expandafter{\romannumeral2}})}
\newcommand{\partc}{(\text{\uppercase\expandafter{\romannumeral3}})}
\newcommand{\be}{\mathtt{BE}}
\newcommand{\bd}{\mathtt{Good}}
\newcommand{\pbe}{P_\mathtt{BE}}
\newcommand{\opt}{\mathsf{Opt}}
\newcommand{\var}{\mathsf{Var}}

\newcommand{\wel}{\mathsf{Wel}}

\newcommand{\alg}{\mathsf{Alg}}
\newcommand{\algt}{\mathsf{Alg}^{(T)}}

\newcommand{\F}{\mathcal{F}}
\newcommand{\E}{\mathbb{E}}

\newcommand{\one}{\mathbf{1}}%
\newcommand{\pr}{\mathbf{Pr}} 

\newcommand{\ignore}[1]{{}}
\newcommand{\R}{\mathbb{R}}

\newcommand{\D}{\mathcal{D}}
\newcommand{\bD}{\boldsymbol{\mathcal{D}}}
\newcommand{\bgamma}{\boldsymbol{\gamma}}

\newcommand{\poly}{\mathsf{poly}}

\newcommand{\eps}{\epsilon}

\newcommand{\IGNORE}[1]{}

\newcommand{\price}{\lambda}
\newcommand{\initprice}{\lambda^{\mathtt{INIT}}}
\newcommand{\esopt}{\widehat{\mathsf{Opt}}}
\newcommand{\val}{v}
\newcommand{\bprice}{\pmb{\lambda}}
\newcommand{\bpricet}{{\pmb{\lambda}}^{(t)}}

\newcommand{\B}{\mathbf{B}}
\newcommand{\calD}{\mathcal{D}}

\newcommand{\alloc}{\boldsymbol{a}}
\newcommand{\cons}{\boldsymbol{c}}
\newcommand{\consbase}{c}
\newcommand{\realallocbase}{{d}^{(t)}}

\newcommand{\realallocit}{\boldsymbol{d}^{(t)}_{i}}
\newcommand{\realalloct}{\boldsymbol{d}^{(t)}}

\newcommand{\hist}{\mathcal{H}}

\begin{document}

\maketitle \thispagestyle{empty}

\begin{abstract}






We study the online resource allocation problem, where a seller sequentially receives independent requests for $m$ types of resources with limited supplies from $n$ heterogeneous agents arriving in an \emph{unknown} order. Each request from an agent can be fulfilled in different ways, with resource consumption in $[0,1]^m$, and generates different values for the agent. The objective of the seller is to maximize the social welfare, which is the sum of the values obtained from each agent.

Recently, Ghuge, Singla, and Wang \cite{GSW-STOC25} studied the learnability of the online resource allocation problem with heterogeneous agents and proposed a learnable pricing algorithm using only a single sample. However, their core algorithm is a \emph{dynamic} pricing algorithm, which may introduce fairness concerns, as different agents face different prices. Furthermore, the algorithm crucially needs to know the arrival order of the agents in advance. To address these issues, in this paper, we study the learnability of \emph{anonymous} pricing algorithms for online resource allocation using samples and queries to agents’ value distributions.
First, we show that a polynomial number of samples suffices to learn the classic dual pricing algorithm. Second, we show that a polynomial number of pricing queries suffices to learn a near-optimal anonymous pricing algorithm, in which the item pricing vector faced by each agent is drawn from the same predetermined distribution.






\end{abstract}

\newpage




\section{Introduction}


Consider an online resource allocation setting where a seller offers $m$ shared resources for sale, and $n$ buyers with requests having different preferences over combinations of resources arrive over time. We want to design an online algorithm for allocating the resources to buyers to satisfy their request in order to maximize \textit{social welfare}, which is the total value obtained by all buyers, while satisfying the predefined budget constraints. When each agent arrives, the seller makes an irrevocable allocation decision and does not wait for future arrivals. 

We study a general stochastic setting where the request type of each buyer is drawn independently from unknown and possibly \textit{non-identical} distributions. Recently, \cite{GSW-STOC25} proved the first bounded sample complexity result for such an online resource allocation problem with heterogeneous agents, that under a large-budget constraint (where each type of resource has $\widetilde \Omega (\poly(\eps))$ supply), it is possible to design a $(1-\eps)$-approximation algorithm given just a single sample from each of the $n$ request distributions\footnote{We remark that it is possible to obtain a $(1-\eps)$-approximation for online resource allocation problem only when the supply of each resource is at least $\Omega(\eps^{-2})$, even if there is only single resource; see, e.g., \cite{kleinberg2005multiple}.}. The proposed algorithm is a \textit{dynamic pricing} algorithm, that decides the prices of the items according to the previous arrivals and the current arriving agent. In particular, the algorithm proposed is based on a pricing scheme, where the price of a resource adjusts exponentially based on the difference between its actual consumption and its expected consumption up to the current time. While this yields a strong theoretical guarantee with minimal samples needed, the resulting mechanism introduces the following two limitations.
\begin{itemize}
    \item \textit{Lack of Anonymity:} Two identical agents requesting the same resources with the same value may face different prices simply because they arrived at different positions in the queue. This form of discrimination is often undesirable.
    \item \textit{Sensitivity to Ordering:} The learning algorithm proposed by \cite{GSW-STOC25} requires knowledge of the arrival sequence to compute the expected consumption at each step. It is not robust to an unknown or adversarial arrival order.
\end{itemize}

Given the limitations mentioned above, we want to understand whether a simpler mechanism is learnable. In particular, we ask the following question, arising from the above limitations.

\begin{quote}
\emph{Given \textbf{sample or query} access to the agents' value distributions, is it possible to learn an \textbf{anonymous pricing} mechanism for agents arriving in \textbf{unknown} order with a small number of samples that obtains near-optimal welfare?}
\end{quote}

\subsection{Model}

Before we introduce our main result, we start with introducing the model we study. We will define the online resource allocation problem that we study, different types of item pricing mechanisms, and what we mean by learning through queries.

\subsubsection{Online Resource Allocation}

We study an online resource allocation problem, where the seller receives a sequence of $n$ requests from buyers arriving in arbitrary order, for $m$ kinds of limited resources with budget $\B=(B_1,B_2,\cdots,B_m)\in \R^m_{\geq 0}$ for all resources. The request from each buyer $i \in [n]$ is denoted by $\gamma_i = (\val_i, \alloc_i, \Theta_i)$, where $\Theta_i$ denotes the decision set, and choosing any decision $\theta_i\in\Theta_i$ will generate value $\val_i(\theta_i)\in\R_{\geq 0}$ for buyer $i$, while consuming at most a unit of each resource $\alloc_i(\theta_i)=(a_{i,1}(\theta_i),a_{i,2}(\theta_i),\cdots,a_{i,m}(\theta_i))\in[0,1]^m$. Without loss of generality assume $B_j\leq n$ for each resource $j$. We use $|\Theta_{\max}|$ to denote the maximum size of the decision set, and always assume that there is a null decision $\phi\in\Theta_i$ for each buyer, such that $\val_i(\phi)=0$ and $\alloc_i(\phi)=\mathbf{0}$. The objective of the seller is to maximize social welfare $\sum_{i=1}^{n} \val_i(\theta_i)$, while not exceeding the budget of any resource.


We study the stochastic setting with underlying request distributions $\bD = (\D_1, \cdots, \D_n)$, i.e.,  the request $\gamma_i$ of each buyer is drawn independently from a type distribution $\calD_i$ over a type space $\Gamma_i$. Each type $\gamma_i = (\val_i, \alloc_i, \Theta_i)\sim\calD_i$ is realized for each agent $i \in [n]$. The benchmark of the optimization problem is the \emph{hindsight fractional optimal solution} $\opt$, which is the optimal value of the configuration linear program defined below. Let $x_{i,\gamma_i, \theta}$ denote the fractional allocation to agent $i$ with type $\gamma_i$ and decision $\theta \in \Theta_i$. Then we can write the configuration LP as follows:
\begin{align}
    \opt\quad:=\quad\text{maximize } \quad \quad  &  \textstyle  \quad\sum_{i \in [n]}  \E_{\gamma_i \sim \calD_i} \left[ \sum_{\theta \in \Theta_i} \val_i(\theta) \cdot x_{i,\gamma_i, \theta} \right],\notag \\
    \text{s.t.}\qquad\qquad & \textstyle \quad   \sum_{i \in [n]} \E_{\gamma_i \sim \calD_i} \left[ \sum_{\theta \in \Theta_i} \alloc_i(\theta) \cdot x_{i,\gamma_i, \theta} \right]  \leq \B, \tag{$\text{LP}_{\textsc{Relax}}$} \label{program:ora-ub}  \\
    \forall i \in [n], \gamma_i\in\Gamma_i, &  \textstyle \quad \sum_{\theta \in \Theta_{i}} x_{i,\gamma_i, \theta} \leq 1, \notag \\
    \forall i \in [n],  \theta \in \Theta_{i},&  \quad  0 \leq x_{i,\gamma_i, \theta} \leq 1. \notag
\end{align}

We use notation \ref{program:ora-ub}$(\bD; \B)$ to represent the linear program \ref{program:ora-ub} with underlying request distributions $\bD$ and budget $\B$.
Then, given an online resource allocation instance with underlying distributions $\bD$ and budget $\B$, the fractional solution of \ref{program:ora-ub}$(\bD; \B)$ is an upper bound of the hindsight optimal solution (i.e. when the seller knows the realization and arrival order of the agents in advance).

Next, we characterize the dual of the primal program \ref{program:ora-ub}. Define the dual variables as follows. Let $\bprice \in \mathbb{R}^m_{\geq 0}$ be the dual variable associated with the resource budget constraint. For each $i \in [n]$ and each type $\gamma_i \in \Gamma_i$, let $u_{i, \gamma_i} \geq 0$ be the dual variable associated with the demand constraint $\sum_{\theta \in \Theta_i} x_{i, \gamma_i, \theta} \leq 1$. The dual problem can be written as follows:

\begin{align}
    \opt_{\textsc{dual}} \quad:=\quad \text{minimize } \quad \quad & \langle\bprice,\mathbf{B}\rangle + \sum_{i \in [n]} \mathbb{E}_{\gamma_i \sim \mathcal{D}_i} [u_{i, \gamma_i}] \notag \\
    \text{s.t.} \qquad \qquad  & \bprice \geq 0, \notag \\
    \forall i \in [n], \gamma_i \in \Gamma_i, \theta \in \Theta_i,\quad & u_{i, \gamma_i} + \langle\bprice,\alloc_i(\theta)\rangle \geq \val_i(\theta), \tag{$\text{LP}_{\textsc{dual}}$} \label{program:dual} \\
    \forall i \in [n], \gamma_i \in \Gamma_i,\quad & u_{i, \gamma_i} \geq 0. \notag
\end{align}

The dual variables $\bprice$ in the optimal dual solution correspond to the item prices such that under a specific tie-breaking rule, setting static pricing $\bprice$ to all agents obtains the optimal welfare. Similarly, we use notation \ref{program:dual}$(\bD; \B)$ to represent the linear program \ref{program:dual} with underlying request distributions $\bD$ and budget $\B$.




\subsubsection{Pricing Mechanisms}
In this paper, when designing the allocation mechanisms for each agent, we consider simple \emph{posted pricing mechanisms}. When each agent $i$ with type $\gamma_i=(v_i,\alloc_i,\Theta_i)$ arrives, she faces a price vector $\bprice_i\in\R_{\geq 0}^m$ with a non-negative price $\lambda_{i,j}$ for each resource $j\in[m]$. The agent then selects the utility-maximizing decision $\theta_i\in\Theta_i$, i.e.
$$\theta_i=\mathop{\arg\max}_{\theta\in\Theta_i} \left(v_i(\theta)-\langle \alloc_i(\theta),\bprice_i\rangle\right),$$
which is the decision that maximizes the difference between the agent's value and the price for the resource consumption.\footnote{We require that the agent respects the inventory constraints, that the agent makes the utility-maximizing decision with resource consumption capped by the current remaining units for each resource. While this may result in overload cascades in the online resource allocation problems \cite{chawla2017stability}, in the analysis of our algorithms we will treat the overload of any resource as a bad event in the online process, and exclude the overall welfare contribution whenever a bad event happens.} In this paper, we discuss the following different types of pricing mechanisms:
\begin{itemize}
\item \textbf{Adaptive (Dynamic) Pricing.} When each agent $i$ arrives, the seller sets a price vector $\bprice_i\in\R_{\geq 0}^m$,  depending on the history of previous allocations and possibly the identity of the arriving agent. Then the arriving agent makes a utility-maximizing decision $\theta_i$.
\item \textbf{Anonymous Pricing.} When any agent $i$ arrives at time $t$, the seller sets a pricing $\bprice_t$ that does not depend on the allocation history, the buyer's identity, and the current time. Then the arriving agent makes a utility-maximizing decision $\theta_i$.
\item \textbf{Static Pricing.} Before any agent arrives, the seller sets a fixed pricing $\bprice\in\R_{\geq 0}^m$. Any arriving agent $i$ faces the same price vector $\bprice$ for the resources, and makes a utility-maximizing decision $\theta_i$. \footnote{It can be a little confusing that \emph{Static Pricing} and \emph{Anonymous Pricing} are defined differently. In fact, this is due to the fact that \emph{Static Pricing} has a canonical meaning in the literature, that refers to a fixed deterministic pricing mechanism that does not change over time. It is not the same as the meaning of ``static'' in the online learning literature, which corresponds to algorithms that do not change over time. Therefore, in this paper we use \emph{Anonymous Pricing} to denote the pricing algorithms that are possibly randomized, but are oblivious to the arrival order, allocation history, and the identity of the current agent.}
\end{itemize}

In our main results, we allow \textit{adversarial} tie breaking, so that when the utility of two decisions is the same for an arriving agent, the agent can choose an arbitrary decision. We may use consistent tie breaking to derive some intermediate results. 

\subsubsection{Learning from Samples / Queries}
We study the setting where the type distribution $\calD_i$ of each buyer $i$ is not fully available to the seller, but is available via two types of \emph{query access}. The first type of access model is the classic \textit{\textbf{sample access}} model, where the learner is allowed to query each type distribution $\calD_i$ of agent $i$, and can obtain a fresh type sample $\gamma_i\sim\calD_i$ in each query. The second type of access model is the \textit{\textbf{pricing query access}} model (and we may slightly abuse the notation and call it the \textbf{\textit{query access}} model when the context is clear), where the learner is allowed to query each type distribution $\calD_i$ with a price vector $\bprice$. Then a type sample $\gamma_i\sim\calD_i$ is drawn from the type distribution of an agent, but the learner is only able to observe any utility-maximizing decision of the agent with type $\gamma_i$. In other words, the output of the query comes from a demand oracle $\mathcal{O}$ that, given any $\bprice$ and $\gamma_i$, $\mathcal{O}$ returns an arbitrary decision $\theta_i\in \Theta_i$ in the utility maximizing set $\mathop{\arg\max}_{\theta\in\Theta_i} \left(v_i(\theta)-\langle \alloc_i(\theta),\bprice_i\rangle\right)$.\footnote{As noted by \cite{GSW-STOC25} this is enough to solve the configuration LP efficiently.} 
The goal of the learner is to learn a pricing strategy, such that for any arrival order of the agent, the expected welfare is at least $(1-\eps)\opt$ under the budget constraint $\B$.



\subsection{Main Results}

In this paper, we give an affirmative answer to the question we asked in two slightly different settings. 
Our first result shows that when each agent has a \textit{generic} valuation distribution \footnote{The \textit{generic} property is the condition that, for any item pricing, the probability that at least two decisions of an agent are both utility-maximizing is 0. This kind of atomless property is usually needed to deal with the tie breaking (Eg. in \cite{FOCS24-toapper}). We will discuss more in Section~\ref{subsec:discussion-static} why this is needed.}, it is possible to learn a \textit{static item pricing} through polynomial number of \textit{samples}. In particular, it is close to the dual prices computed from the configuration LP of the optimization problem. 

\begin{Theorem}[Informal \Cref{thm:sample-main}]\label{thm:sample-main-intro}
Given $\poly(m,\eps^{-1}, \log n, \log |\Theta|_{\max})$ samples from the agents' value distributions, where $|\Theta|_{\max}$ represents the maximum size of the action set, we can compute a \textbf{static} item pricing mechanism that obtains a $(1-\eps)$ fraction of optimal welfare in hindsight for the online resource allocation problem with non-identical and generic distributions, provided that each resource has $\Omega(\poly( \log(mn\eps^{-1})) \cdot \epsilon^{-2})$
budget. 
\end{Theorem}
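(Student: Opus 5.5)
The plan is to learn a slightly shrunk-budget dual price vector from samples and then show that posting it statically yields near-optimal welfare with high probability. First I would set a reduced budget $\B' = (1-\eps)\B$ and define, for any static price $\bprice$, the expected demand $D(\bprice) = \sum_i \E_{\gamma_i\sim\D_i}[\alloc_i(\theta_i(\bprice))]$, where $\theta_i(\bprice)$ is the utility-maximizing decision. Genericity makes this well-defined (ties occur with probability zero) and makes $\bprice\mapsto D(\bprice)$ behave continuously. Complementary slackness for \ref{program:dual}$(\bD;\B')$ shows that any optimal dual $\bprice^*$ satisfies two conditions: $D_j(\bprice^*)\le B'_j$, with equality whenever $\lambda^*_j>0$; and the expected welfare $\sum_i\E[\val_i(\theta_i(\bprice^*))]$ equals the value of \ref{program:ora-ub}$(\bD;\B')$, which is at least $(1-\eps)\opt$ by scaling.

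Second, I would learn this price. I would draw $N=\poly(m,\eps^{-1},\log n,\log|\Theta|_{\max})$ independent sample profiles and form the empirical LP in which each $\D_i$ is replaced by its empirical distribution. Then I would prove uniform convergence over all $\bprice$ in a bounded box $[0,\bar\lambda]^m$ of the empirical demand vector and the empirical value $\sum_i \val_i(\theta_i(\bprice))$. Each coordinate is a sum of $n$ independent $[0,1]$ terms, so a multiplicative Chernoff bound gives error $\eps B_j$ per fixed $\bprice$ once $B_j \gtrsim \eps^{-2}\log(\cdot)$. For uniformity, the class of maps $\gamma_i \mapsto \theta_i(\bprice)$ is determined by the signs of $\binom{|\Theta|}{2}$ linear functions in $\bprice\in\R^m$. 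Its growth function is therefore $(|\Theta|^2 N)^{O(m)}$, which produces the dependence $m\log|\Theta|_{\max}$ in the sample size. The same reasoning lets the empirical dual price $\hat\bprice$ (the price solving the empirical dual) inherit approximate complementary slackness for the true distributions: true demand at most $(1-\eps/2)\B$, and true expected welfare at least $(1-O(\eps))\opt$. The bound $\bar\lambda$ can be taken polynomial after normalizing values, because prices beyond $\max$ value change nothing.

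Third, I would analyze the online run under an arbitrary arrival order. With a static $\hat\bprice$, each agent's decision is independent across agents as long as no resource is exhausted. Since true expected demand is at most $(1-\eps/2)B_j$ and $B_j=\Omega(\poly\log(mn/\eps)\eps^{-2})$, Chernoff plus a union bound over the $m$ resources shows the bad event (some budget overflow) has probability at most $\eps$. Outside the bad event, welfare equals $\sum_i \val_i(\theta_i(\hat\bprice))$. Its expectation restricted to the good event is at least $(1-O(\eps))\opt$ minus the contribution on the bad event. I would bound that contribution by Cauchy–Schwarz, or by noting that each agent's value conditional on a small-probability event is controlled. Rescaling $\eps$ then finishes the argument.

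I expect the main obstacle to be the uniform convergence step, specifically transferring near-optimality of the empirical dual to the true welfare. Complementary slackness only holds approximately, and adversarial ties in the empirical solution could break it. The first thing I would try is a perturbation: use genericity so that, at the empirical optimum, fractional tie-splitting is negligible, and run a Lagrangian argument. This argument says true welfare $\ge$ true Lagrangian $-\langle\hat\bprice, \B'-D(\hat\bprice)\rangle$ and compares it to $\opt$ via weak duality. It also needs $\hat\bprice$ not too large, so that the slack term is $O(\eps\opt)$; I would obtain that from the bound $\langle\hat\bprice,\B\rangle\le \widehat{\opt}\lesssim\opt$.
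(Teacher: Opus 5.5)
Your skeleton matches the paper's: an empirical dual price for a shrunk budget, genericity to show that the deterministic best response at $\widehat\bprice$ departs from the fractional empirical primal on at most $m$ tied samples, a hyperplane-arrangement bound for uniform convergence of consumption, then weak duality and concentration. The gap is that the steps you use to certify welfare rest on properties of the values that are not available. Values are arbitrary nonnegative reals, and the sample bound may not depend on their scale.

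\begin{itemize}
\item Uniform convergence of the empirical welfare $\sum_i v_i(\theta_i(\bprice))$ fails even at a single price. An agent whose value is $w/p$ with probability $p \ll 1/N$ (with $w$ a constant fraction of $\opt$) is invisible in the samples yet contributes a constant fraction of $\opt$. ``Normalizing'' does not bound this.
\item For the same reason, the high-probability bound $\langle\widehat\bprice,\B\rangle\le \widehat{\opt}\lesssim\opt$ that you want for the slack term is not available. Here $\widehat{\opt}$ is the empirical LP value, and it can exceed $\opt$ by a large factor with non-negligible probability.
\item Cauchy--Schwarz on the overflow event would need second moments of values, which you do not have.
\item Your per-price claim ``error $\eps B_j$'' holds only where $c_j(\bprice)\lesssim B_j$. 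Near $\bprice=\mathbf{0}$ the demand can be $\Theta(n)$, and the empirical demand then fluctuates by about $\sqrt{c_j(\bprice)/N}$. An additive $\eps B_j$ guarantee uniformly over your box therefore forces $N$ to grow with $n$.
\end{itemize}

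The missing idea is that only consumption has to be learned, and no bound on $\widehat\bprice$ is needed. Suppose the true consumption satisfies $\cons(\widehat\bprice)\le(1-\eps)\B$, and $c_j(\widehat\bprice)\ge(1-3\eps)B_j$ whenever $\hat\lambda_j>0$. Set $u_{i,\gamma_i}=\max_{\theta}\big(v_i(\theta)-\langle\widehat\bprice,\alloc_i(\theta)\rangle\big)\ge 0$, which together with $\widehat\bprice$ is dual feasible. Then
\[
\wel(\widehat\bprice)=\langle\widehat\bprice,\cons(\widehat\bprice)\rangle+\sum_{i}\E[u_{i,\gamma_i}]\ge(1-3\eps)\Big(\langle\widehat\bprice,\B\rangle+\sum_{i}\E[u_{i,\gamma_i}]\Big)\ge(1-3\eps)\,\opt .
\]
So weak duality absorbs the slack multiplicatively, and the scale of $\widehat\bprice$ never enters; this is \Cref{lma:budget-to-value}.

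For the online run, the event that every resource still has a unit left when agent $i$ arrives is determined by earlier agents, so it is independent of $\gamma_i$. Hence agent $i$ contributes at least $\wel_i(\widehat\bprice)\big(1-\sum_j\pr[\sum_{i'}\rho_{i',j}>B_j-1]\big)$, where $\rho_{i',j}$ is agent $i'$'s unconstrained best-response consumption of resource $j$. Bernstein's inequality then finishes the argument with no moment assumption on values.

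To obtain the two-sided consumption bound from samples, you have two options:
\begin{itemize}
\item Prove a relative-deviation uniform bound, with error $\eps\max\{c_j(\bprice),B_j\}$. This suffices, because $\hat c_j(\widehat\bprice)\le(1-1.9\eps)B_j$ rules out $c_j(\widehat\bprice)>B_j$.
\item Follow the paper: prove uniform convergence for the truncated consumption $\min\{\sum_i a_{i,j},2B_j\}$, whose pseudo-dimension is $O(m\log(mn|\Theta|_{\max}))$. Then convert back to untruncated consumption using Bernstein in each direction.
\end{itemize}

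With these repairs your plan essentially becomes the paper's proof.
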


We remark that our sample complexity depends on the maximum size $|\Theta|_{\max}$ of each request’s action set. While this dependence may prevent a direct application of our result to settings with continuous action sets, a logarithmic dependence on $|\Theta|_{\max}$ still suffices to yield a polynomial sample complexity bound for a broad class of online resource allocation problems. In particular, our result applies even when the action set is exponentially large, such as in online combinatorial auctions, where $|\Theta|_{\max} \le 2^m$. We present our proofs of the sample complexity bound in \Cref{sec:sample}.

Our second result shows that if we only have \textit{query} access to the value distributions, we can still obtain an \textit{anonymous} pricing mechanism through a polynomial number of queries, even for request distributions beyond \textit{generic}.

\begin{Theorem}[Informal \Cref{thm:main-formal}]\label{thm:query-main-intro}
Given $T = n$ pricing queries from each of the $n$ request distributions, we can compute an \textbf{anonymous} item pricing mechanism that obtains a $(1-\eps)$ fraction of optimal welfare in hindsight for the online resource allocation problem with non-identical distributions, provided that each resource has $\Omega(\poly( \log(mn\eps^{-1})) \cdot \epsilon^{-2})$
budget. 
\end{Theorem}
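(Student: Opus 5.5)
We plan to build the anonymous pricing as a uniform mixture of $T=n$ price vectors $\bprice^{(1)},\dots,\bprice^{(T)}$, produced by an online-learning (no-regret) procedure on the Lagrangian of \ref{program:ora-ub}, and to use the $n$ pricing queries per distribution to simulate this procedure. For each round $t$ we query every $\calD_i$ once at the current price $\bpricet$ and record the resource consumption $\realallocit=\alloc_i(\theta_i^{(t)})$ of the returned demand-oracle decision. The total consumption $\realalloct=\sum_i \realallocit$ is then an unbiased estimate of the expected consumption $\sum_i \E_{\gamma_i}[\alloc_i(\theta_i(\bpricet))]$ under the tie-breaking used by the oracle. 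We update prices by multiplicative weights or projected gradient on the dual: $\lambda^{(t+1)}_j=\lambda^{(t)}_j\exp\bigl(\eta(\realallocbase_j/B_j-1)\bigr)$, suitably normalized by an estimate $\esopt$ of $\opt$ (obtained, for example, from the first queries at $\initprice$), so that prices stay in a bounded range $[0,\esopt/B_j]$.

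\textbf{Welfare and feasibility in expectation.} By weak duality, for every $t$ we have $\sum_i \E[u_i(\bpricet)]+\langle \bpricet,\B\rangle\ge \opt$. Each agent's utility-maximizing choice gives $\E[v_i(\theta_i)]=\E[u_i]+\langle\bpricet,\E[\alloc_i]\rangle$. Summing over rounds and applying the regret bound of the price player against the comparator $\bprice=0$ (and against scaled unit vectors), we aim to show two things: the average expected welfare $\frac1T\sum_t \sum_i\E[v_i(\theta_i(\bpricet))]$ is at least $(1-\eps)\opt$, and the average expected consumption of each resource is at most $(1-\eps)B_j$, after first shrinking the budget to $(1-\eps)\B$. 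The deviation between realized $\realalloct$ and its mean is a martingale difference sequence bounded by $m$ per coordinate, since there are $n$ agents with entries in $[0,1]$. Freedman's inequality then keeps the accumulated error at order $\sqrt{B_j T\log(mn/\eps)}$, which is $\eps B_j T$ once $B_j=\Omega(\poly\log(mn\eps^{-1})\eps^{-2})$. Adversarial tie-breaking is harmless here because each per-round bound holds for whichever decision the oracle returns, and the same guarantee applies to whichever argmax the real agent picks, since the duality inequality uses only utility maximization.

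\textbf{Online deployment.} Each arriving agent, in any order, faces a fresh $\bprice$ drawn uniformly from $\{\bprice^{(t)}\}$. Because each agent's price is independent of history and of the order, the consumption of agent $i$ has mean $\frac1T\sum_t \E[\alloc_i(\theta_i(\bpricet))]$, regardless of arrival order. Summing over agents gives expected total consumption at most $(1-\eps)\B$. The consumptions are independent across agents and lie in $[0,1]$, so a Chernoff bound with the large-budget assumption shows that the bad event of exceeding some $B_j$ has probability $\le\eps/m$ per resource. On that event we discard the welfare, as in the footnote on overload, and the lost contribution is bounded by $\eps\opt$ using bounded per-agent values relative to $\opt$, or by Cauchy–Schwarz. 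The remaining welfare is at least $(1-O(\eps))\opt$.

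\textbf{Main obstacle.} The hardest step is controlling the regret of the price player when values and prices are unbounded relative to $\opt$ and we only observe noisy, single-sample consumption. Concretely, we must choose the learning rate and the price range so that the regret term is $O(\eps\opt)$ with only $T=n$ rounds, while the realized feedback $\realalloct$ deviates from its mean. I would first try to normalize losses by $B_j$ so that each round's gradient is $O(1)$, and then show that the regret is $O(\sqrt{T\log m})\cdot \esopt/\min_j B_j$. The per-round contribution must be amortized against $\opt/T$-scale rewards. If this fails, I would rescale time so that each round is one agent (a single query per round across a random agent, with $T n$ steps total), which matches the budget-based concentration and is the likely origin of the choice $T=n$.
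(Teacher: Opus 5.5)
You have the right skeleton: a uniform mixture of the round prices, one query per distribution per round, the welfare/weak-duality identity, and regret against the zero price. But the step that carries the whole proof is left open. That step is showing $\sum_t\langle\realalloct-(1-\epsilon)\B,\bpricet\rangle\ge-O(\epsilon)\,T\cdot\opt$ with only $T=n$ rounds and a polylogarithmic budget, and the route you sketch for it fails.

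Normalizing by $B_j$ does not make the per-round gradient $O(1)$. At low prices a single round can consume up to $n$ units, so $\realallocbase_j/B_j-1$ can be of order $n/B_j$. A worst-case-gradient (additive) regret bound over prices in a box of width $\Theta(\opt/B_j)$ is then of order $\opt\cdot\frac{n}{B_j}\sqrt{T}$, up to logarithms. With $T=n$ this is at most $\epsilon T\opt$ only if $B_j\gtrsim\sqrt{n}/\epsilon$, which is a polynomial rather than polylogarithmic budget.

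The paper treats each round as one large request with consumption at most $n$, against an aggregate budget $TB_j$. It uses step size $\epsilon/n$, uncapped prices, and $\initprice=\esopt\epsilon^2/m$. The key ingredient is the \emph{relative} no-regret-against-zero bound for exponential prices (\Cref{lma:noRegret}), $\sum_t\lambda_{t,j}r_{t,j}\ge-2\initprice/\delta-4\delta\sum_t\lambda_{t,j}r^+_{t,j}$. This charges the regret to an $O(\epsilon)$ fraction of revenue, hence of learning-phase welfare, which \Cref{lma:algtsmall} bounds by $2T\opt$. The remaining additive term is $2nm\initprice/\epsilon=2\epsilon n\esopt$, which is $O(\epsilon)T\opt$ exactly because $T\ge n$; that is where $T=n$ comes from. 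Your multiplicative update with rate $\eta\approx\epsilon B_j/n$ and no projection essentially coincides with the paper's, so what is missing is the analysis, not the rule. Your per-agent time-rescaling fallback would reintroduce agent-specific consumption targets, which the batching is designed to avoid.

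The remaining steps have gaps too.

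\emph{Feasibility.} You get average expected consumption $\le(1-\epsilon)B_j$ from regret against scaled unit vectors. That argument hits the same gradient-range problem. Your cap $\esopt/B_j$ is also too low when $\esopt\le\opt$, since rationing can require prices up to $\opt/B_j$. The paper instead keeps prices uncapped and argues as follows (\Cref{clm:realallocsmall}): if $\sum_t\realallocbase_j>(1-3\epsilon/4)TB_j$, then over the last $0.1\epsilon TB_j$ units the price is at least $\initprice\exp(0.05\epsilon^2TB_j/n)$, which forces an impossibly large collected value. Freedman's inequality then transfers this to $\sum_t\cons(\bpricet)$, and Bernstein's inequality handles the auction phase.

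\emph{Correlation.} Bounding the discarded welfare at each fixed $\Lambda$ is not enough, because $\pbe(\Lambda)$ is correlated with the learning-phase draws. You must lower-bound $\E_\Lambda[(1-\pbe(\Lambda))\algt(\Lambda)]$. The paper does this term by term, using the positive-part bound of \Cref{lma:plussmall}.

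\emph{The estimate $\esopt$.} It cannot be obtained from the first queries. Pricing queries reveal only decisions, so the scale of values is not identifiable, and $\initprice$ is itself defined from $\esopt$. \Cref{thm:main-formal} instead takes a $\beta$-approximate $\esopt$ as input, at the cost of a $\log\beta$ factor in the budget.

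\emph{Tie-breaking.} Your claim that adversarial tie-breaking is harmless is unsupported. Feasibility relies on consumption measured under the oracle's tie-breaking, which need not match the deployed agent's. The paper assumes a consistent rule in this section.
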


The detailed algorithm underlying our query complexity result is presented in \Cref{sec:query-complexity}. 
The anonymous pricing mechanism we construct takes the following form. 
We first learn, from samples, a set of $T$ price vectors $\Lambda = \{\bprice^{(1)}, \bprice^{(2)}, \ldots, \bprice^{(T)}\} \subseteq \mathbb{R}^m$. 
Upon the arrival of each agent, the seller selects a price vector from $\Lambda$ uniformly at random and posts it to the agent.

This mechanism is anonymous and non-adaptive: it does not depend on the identity of the arriving agent or on the current state of the remaining resources. 
Moreover, despite having access only to limited query feedback, our algorithm does not require solving the computationally expensive configuration LP~(\ref{program:ora-ub}). 
To the best of our knowledge, this is the first algorithm for online resource allocation with non-identical requests that avoids solving the configuration LP altogether.

\subsection{Technical Overview}
\label{sec:technical-overview}

\vspace{0.5em} \noindent \textbf{Ideas for sample complexity.} Our proof of a polynomial sample complexity consists of the following three ingredients:
\begin{itemize}
    \item \textbf{Converting the learning objective.} 
    Instead of aiming to learn a price vector that approximates the dual prices, i.e., the optimal solution to~\ref{program:dual}, we focus on learning a price vector whose expected consumption is close to, but does not exceed, the budget constraint~$\B$. 
    By applying weak duality, we show that ensuring the expected consumption is close to~$\B$ is sufficient to guarantee an expected welfare close to~$\opt$.

    \item \textbf{Uniform convergence via pseudo-dimension.} 
    We bound the sample complexity by determining the number of samples required to construct an empirical distribution such that the expected consumption under the empirical distribution is close to that under the true distribution. 
    To this end, we follow a similar approach of~\cite{FOCS24-toapper} and establish a uniform convergence result for the expected consumption over all price vectors by bounding the pseudo-dimension of the entire allocation process.
    
    \item \textbf{Improving sample complexity via truncating the consumption.} 
    Observe that the expected consumption takes values in the range $[0, n]$. 
    A uniform convergence guarantee over this range would lead to a sample complexity that depends polynomially on~$n$. 
    However, our interest is restricted to price vectors whose expected consumption is close to~$B_j$, which may be much smaller than~$n$. 
    To improve the sample complexity, we truncate the consumption at $O(B_j)$ and show that establishing uniform convergence for the truncated consumption suffices. 
    This refinement yields an improved sample complexity bound that no longer depends polynomially on~$n$.
\end{itemize}

\vspace{0.5em} \noindent \textbf{Ideas for pricing query complexity.} The main algorithm for pricing query feedback is a learning algorithm that runs for $T$ rounds. 
In each round $t$, we present an identical price vector $\bprice^{(t)}$ to $n$ requests independently drawn from distributions $\D_1, \ldots, \D_n$. Based on the query feedback, i.e., the consumption of each request, we update the price vector to $\bprice^{(t+1)}$ for round $t+1$.

It remains to specify the rule for updating the price vector. 
Our main intuition is to view the above learning algorithm as a large online resource allocation problem with identical distributions: there are $T$ identical large requests, $m$ resources, and each resource $j$ has a total budget of $T \cdot B_j$. 
The $t$-th large request corresponds to the $n$ small requests independently sampled in round $t$, and hence all large requests are identically distributed.

To update prices for this online resource allocation problem with identical distributions, we apply the \emph{exponential pricing} algorithm introduced in \cite{GSW-STOC25}. 
Exponential pricing is a dynamic pricing algorithm that aims to match the optimal expected consumption, with the key idea being to adjust prices \emph{exponentially} according to the discrepancy between the current consumption and the optimal expected consumption. Since all large requests are identical, the optimal expected consumption for each request is $\B$. 
Applying this optimal expected consumption, scaled down by a factor of $1 - \epsilon$, yields our price update rule.

Since exponential pricing achieves a $1 - O(\epsilon)$ competitive ratio when the budget is at least $\widetilde{\Omega}(\epsilon^{-2})$ times the maximum consumption of a single request (see \cite{GSW-STOC25}), our learning algorithm achieves a total welfare of $(1 - \epsilon)\cdot T \cdot \opt$ whenever
\[
T \cdot B_j \ge \widetilde{\Omega}(\epsilon^{-2}) \cdot n,
\]
noting that each large request consumes at most $n$ units of any resource. 
This condition in particular implies that $T \ge n$.

The final step of the proof is to convert the price vectors learned over the $T$ rounds into an anonymous pricing mechanism. 
Specifically, we show that independently selecting a price vector uniformly at random from the set $\{\bprice^{(1)}, \ldots, \bprice^{(T)}\}$ and presenting it to incoming requests still achieves a $1 - O(\epsilon)$ competitive ratio. Establishing this result constitutes the main technical challenge we face: \cite{GSW-STOC25} establishes strong performance guarantees for the exponential pricing algorithm in a dynamic online resource allocation setting. 
However, in our setting, the learned price vector set $\{\bprice^{(1)}, \ldots, \bprice^{(T)}\}$ is subsequently evaluated on a fresh batch of requests. This requires an additional guarantee showing that prices learned via exponential pricing continue to perform well when the online resource allocation instance is effectively re-run on new data.

We establish such a guarantee by modeling the exponential pricing process as a martingale. 
Using a martingale-based concentration bound together with careful analysis to bypass some correlation issues we introduced during the analysis, we show that the performance in the re-running phase and the learning phase both cannot deviate significantly from its expectation. 
As a result, the welfare achieved in the re-running phase is comparable to that attained during the learning phase.

\subsection{Discussion on Static Pricing Assumptions}
\label{subsec:discussion-static}

If we compare Theorem~\ref{thm:sample-main-intro} and Theorem~\ref{thm:query-main-intro}, we find that to get the static pricing result, we need a stronger access model (sample vs. query) and the assumption that the distribution is generic. Here, we elaborate on why both requirements are necessary.

\subsubsection{Why Not Query Complexity Bound?}
In Theorem~\ref{thm:sample-main-intro}, we obtain a polynomial sample complexity for agents with generic value distributions. With query access, such a finite bound is typically not achievable without additional assumptions. For example, consider a seller with $B$ units of a single resource. There are $\frac{3}{2}B$ agents arriving, each consuming 1 unit of the resource when allocated. $\frac{1}{2}B$ agents have a deterministic value of 1, and $B$ agents have a value drawn from a uniform distribution $U[t,t+\delta]$, where $t$ is an unknown constant. The static welfare-optimal price falls in the range $[t,t+\delta]$, as the optimal offline allocation has all agents with value $1$ allocated, and half of the other agents with value around $t$ allocated. However, when $\delta\to0$, it is impossible to find $t$ with a number of queries that does not depend on $\delta$. We leave it as an open question to explore whether it is possible to obtain a polynomial pricing query complexity for static pricing via some definition of ``degree of genericness''. 

\subsubsection{Why Do We Need Generic Distributions?}

While we have designed an anonymous pricing mechanism with nice properties with pricing queries in Theorem~\ref{thm:query-main-intro}, we may ask whether an even simpler static pricing mechanism is possible without the generic assumption. When value distributions are publicly known, setting item prices through the dual optimal solution of the configuration LP can give a $(1-\eps)$ approximation to the optimal welfare. However, even for this known-distribution setting, there is a subtle point that such a \textit{\textbf{static}} pricing mechanism is actually \textit{\textbf{not anonymous}}. How is this even possible?

In fact, to make dual pricing work, an important observation is that the pricing mechanism needs to have an accurate \textit{tie-breaking rule}. That is, for any arriving agent that has two favorite actions with different allocations, the seller has the power to decide which allocation the agent selects. Such a tie-breaking can be discriminatory, making the dual pricing mechanism not truly anonymous. In fact, such an observation has been studied in combinatorial auctions. \cite{cohen2016invisible} points out that in a matching market (where all agents have fixed unit-demand valuations), while a static item pricing (which are usually called Walrasian prices) that clears the market with optimal welfare exists, it is critical that the seller has the tie-breaking power and breaks ties in a coordinated fashion. Suppose the buyers hold the tie-breaking power and can select any utility-maximizing bundle of items at arrival, there exist instances that no static pricing can obtain more than $\frac{2}{3}$ of the optimal welfare in hindsight, and the result holds when each resource has a large supply. \footnote{One simple example is as follows. Consider a seller who wants to sell 3 items with one copy each to 3 unit-demand buyers. Buyer 1 has value $1$ for either item 2 or 3; buyer 2 has value 1 for either item 1 or 3; buyer 3 has value 1 for either item 1 or 2. While a static pricing mechanism with price 1 for each item can obtain the optimal welfare 3, the seller needs to hold the tie-breaking power to decide which item to sell when each agent arrives. In fact, for any static item pricing mechanism, there always exists an arrival order of the buyers such that at most two of the three items are sold. The example can be generalized to the setting where each item has $B$ copies, by having $B$ copies of each buyer.} 
To solve this problem, \cite{cohen2016invisible} proposes an adaptive item pricing mechanism, that adjusts the prices of each item based on the current remaining, and obtains the optimal welfare for the online matching market, and the result can be further generalized to buyers with gross-substitute valuations \cite{berger2020power}. In a more complicated tollbooth setting, where each agent demands an interval on a line graph, \cite{tan2023worst} shows that while the static item pricing can achieve the optimal welfare when the seller has tie-breaking power, it can only obtain a tight $\frac{2}{3}$ fraction of the optimal welfare when the buyers hold the tie-breaking power.

On the other hand, our mechanism for Theorem~\ref{thm:query-main-intro} has a pre-determined set of price vectors, and when each agent arrives, the seller randomly draws a price vector from the set, and neither needs to know the identity of the arriving buyer, nor needs to adjust the price based on the current inventory constraint. Each arriving agent can also select any utility-maximizing action. This means that our mechanism is truly \textit{anonymous}.

\subsection{Related Work}

\vspace{0.5em} \noindent \textbf{Online Resource Allocation through Samples} For the online resource allocation model studied in our paper, \cite{GSW-STOC25} was the first paper to study almost optimal online resource allocation through samples for heterogeneous agents in the large-budget setting. There have been many papers studying online resource allocation with i.i.d. agents or non-identical agents arriving in random order, E.g., \cite{kleinberg2005multiple,DevenurHayes-EC09,AWY-OR14,MR-MOR14,agrawal2014fast,GM-MOR16,KRTV-SICOMP18,devanur2019near,li2022online,bray2025logarithmic}. The main ideas of this series of works are either to use the first $\eps n$ arriving agents as samples to learn an empirical distribution, and then to use a mechanism constructed from the empirical distribution on the rest of the agents; or to run a no-regret learning algorithm on all of the agents and obtain an overall $(1-\eps)$ approximation.

Another line of research studies the online resource allocation problem with regret minimization objectives. \cite{BKMSW-ICML23} and \cite{JLZ-MS25} are the two prior works closest to our setting. Specifically, \cite{BKMSW-ICML23} studies the regret minimization version of the online resource allocation problem with a single resource and non-identical request distributions and designs a static pricing algorithm that requires only a single sample from each request distribution. However, it remains unclear whether their static pricing algorithm can be generalized beyond the single-resource setting. Meanwhile, \cite{JLZ-MS25} investigates the online resource allocation problem with multiple resources. Rather than discussing the sample complexity required to learn a high-quality empirical distribution, \cite{JLZ-MS25} focuses on how the Wasserstein distance between the learned empirical distribution and the true distribution impacts the regret. Additionally, they propose a dynamic pricing algorithm that is not directly comparable to our results, as their algorithm is non-anonymous.

\vspace{0.5em} \noindent \textbf{Online Resource Allocation with Less LP Solving} A distinct line of research focuses on reducing the computational burden of solving large-scale LPs in the online resource allocation problem. While early heuristics based on frequently re-solving the certainty-equivalent LP were shown to achieve small regret in different settings \cite{reiman2008asymptotically,jasin2012resolving,jasin2014reoptimization}, the high computational cost of frequent re-optimization limits their applicability. To address this, \cite{bumpensanti2020resolving} demonstrated that re-solving the LP only a constant number of times is sufficient to maintain a uniformly bounded revenue loss. More recently, literature has challenged the necessity of re-solving altogether. \cite{balseiro2023best} and \cite{he2025online} established that simple primal-dual policies, which dynamically update dual variables based on resource consumption without ever re-solving the LP, can effectively achieve asymptotic optimality. This line of work highlights the effectiveness of low-complexity policies in i.i.d. settings, while our query complexity algorithm does not even solve the LP in a non-i.i.d. environment.

\vspace{0.5em} \noindent \textbf{Sample Complexity of Multi-dimensional Mechanism Design} Since \cite{cole2014sample} introduced sample complexity to the mechanism design literature, there has been much work on learning (approximately) optimal mechanisms. For multi-dimensional mechanism design, the proof of sample complexity follows two very different  techniques. One line of papers \cite{dughmi2014sampling,gonczarowski2021sample,brustle2020multi,GHTZ-COLT21,cai2022computing,teng2025learning} focus on showing that the empirical distribution that is close to the original distribution can be efficiently learned from samples, while the optimal objective of the mechanism design problem satisfies some kind of smoothness with respect to the input distributions. The other line of works \cite{balcan2016sample,morgenstern2016learning,cai2017learning,syrgkanis2017sample,balcan2021much,balcan2023generalization} utilize the pseudo-dimension style analysis to prove uniform convergence bounds for specific mechanism hypotheses classes. The proof of our sample complexity result exactly follows the second line of work.

\vspace{0.5em} \noindent \textbf{Query Complexity} Recently, there has been an increasing attention to learning distributions or mechanisms through pricing / threshold queries \cite{kleinberg2003value,meister2021learning,leme2023pricing,okoroafor2023non,paes2023description, SW-EC24,teng2025learning,chen2025query}. The closest work to our setting is \cite{teng2025learning}, in which they prove a tight bound on learning the optimal item pricing algorithm for unit-demand buyers through pricing queries on single distributions. Our online resource allocation algorithm works on general preference that goes well beyond unit-demand agents. In addition, we use a demand oracle that takes in price vectors instead of querying individual distributions. This also solves an open question from \cite{teng2025learning} on getting reasonable pricing query complexity results with the more natural query access in a multi-dimensional setting.

\vspace{0.5em} \noindent \textbf{Anonymous Pricing for Multi-Dimensional Online Resource Allocation} The anonymous pricing mechanism we have is a special case of Sequential Posted Pricing mechanisms \cite{chawla2010multi}, where the seller has a posted pricing mechanism for each arriving agents. In the multi-dimensional setting, there has been much literature focusing on combinatorial prophet inequality, where the distribution of the arriving agents are known, but the arriving order is unknown. In the single-unit supply setting, it is known that static item pricing achieves a $1/2$-approximation to the optimal welfare in hindsight for agents with XOS valuations \cite{feldman2014combinatorial, DFKL-SICOMP20}, and the result is obtainable through samples \cite{FOCS24-toapper}. For subadditive agents, static item pricing can achieve an $\Omega(1/\log\log m)$ approximation \cite{dutting2020log}, and whether it can achieve constant approximation is still open. Beyond subadditive agents, \cite{chawla2019pricing} shows that static bundle pricing gives an $\Omega(\log \log L/\log L)$ approximation for allocating intervals of length at most $L$. In the multi-unit setting, there are much fewer works in the literature. \cite{chawla2017stability} shows that static item pricing gives $(1-\eps)$ approximation for online interval allocation in the large budget setting. The closest work to our setting is \cite{chawla2025multi}, which gives a non-adaptive anonymous item pricing for XOS-valued agents in a large budget setting. As in our setting they allow the \textit{anonymous} pricing to be \textit{not static}, but they permit the prices to be updated when the remaining budget of each resource changes.

\section{Sample Complexity for Learning Dual Prices}
\label{sec:sample}

In this section, we prove our sample complexity result. To be specific, we show the following:

\begin{Theorem}
    \label{thm:sample-main}
    Let $\epsilon \in (0, 0.5]$ be an error parameter. Consider an online resource allocation instance with underlying generic request distributions $\bD$ and budget $\B$, where a request distribution $\D_i$ is \emph{generic} if for any price vector $\bprice$,  different actions of request $\gamma_i \sim \D_i$ have distinct utilities almost surely, i.e.
    \begin{align*}
        \pr_{\gamma_i = (v_i, \alloc_i, \Theta_i)} \left[\exists \theta_1, \theta_2 \in \Theta_i: \theta_1 \neq \theta_2 \land v_i(\theta_1) - \langle \bprice, \alloc_i(\theta_1) \rangle = v_i(\theta_2) - \langle \bprice, \alloc_i(\theta_2) \rangle  \right] ~=~ 0
    \end{align*}
    
    Let $\opt$ be the objective of \ref{program:ora-ub}$(\bD; \B)$. 
    Given $N =\Omega\left(\frac{m}{\eps^2}\cdot\poly\log(m,n,|\Theta|_{\max})\right)$ samples from $\bD$, there exists a static pricing algorithm that obtains expected total value at least $(1 - O(\epsilon)) \cdot \opt$, provided that $B_j \geq \Omega(\log(nm/\epsilon) \cdot \epsilon^{-2})$ for each resource $j \in [m]$.
\end{Theorem}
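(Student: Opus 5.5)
Our plan follows the three ingredients of the technical overview. For a price vector $\bprice\in\R^m_{\ge 0}$, write $\cons_i(\bprice,\gamma_i)=\alloc_i(\theta_i(\bprice,\gamma_i))$ for the consumption of agent $i$ at its (a.s.\ unique, by genericity) utility-maximizing decision. Let $C_j(\bprice)=\sum_i \E[\consbase_{i,j}(\bprice,\gamma_i)]$ be the expected total consumption. Let $\wel(\bprice)$ be the expected welfare ignoring budgets.

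\textbf{Step 1 (reduction to matching consumption).} Suppose $\bprice$ satisfies $C_j(\bprice)\le(1-\eps)B_j$ for all $j$. Suppose moreover that $C_j(\bprice)\ge(1-3\eps)B_j$ whenever $\price_j>0$. We first claim $\wel(\bprice)\ge(1-O(\eps))\opt$. By weak duality, setting $u_{i,\gamma_i}$ to the utility at $\bprice$ gives a feasible dual solution, so
\[
\opt\le\langle\bprice,\B\rangle+\sum_i\E[u_{i,\gamma_i}]=\wel(\bprice)+\sum_j\price_j\bigl(B_j-C_j(\bprice)\bigr).
\]
The residual is at most $3\eps\langle\bprice,\B\rangle$. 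To bound it, compare with $(1-\eps)\opt$, or run the same argument with budget $(1-\eps)\B$. Every sold unit pays at most its value, so $\wel(\bprice)\ge\sum_j\price_j C_j(\bprice)\ge(1-3\eps)\langle\bprice,\B\rangle$ on the support of $\bprice$. Combining the two inequalities gives $\opt\le(1+O(\eps))\wel(\bprice)$.

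Then we pass to the online process. Since $C_j\le(1-\eps)B_j$ and $B_j\ge\Omega(\log(nm/\eps)\eps^{-2})$, a Bernstein/Chernoff bound plus a union bound over $j$ shows that no resource is exhausted, except with probability $\eps/\poly(nm)$. On the bad event we charge at most $n$ times the maximum per-agent value; this is negligible after normalizing via $\opt$, for instance by truncating values or using the tail bound. This step is independent of arrival order.

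\textbf{Step 2 (existence of such a price).} We must show some $\bprice$ satisfies the two-sided condition. Take an optimal dual of $\text{LP}_{\textsc{dual}}(\bD;(1-2\eps)\B)$. By complementary slackness, under the LP's (fractional) tie-breaking the consumption equals $(1-2\eps)B_j$ on coordinates with $\price_j>0$ and is at most that elsewhere. Genericity makes ties measure zero, so $C(\bprice)$ is well defined; we must also control continuity of $C$ in $\bprice$. We expect instead to argue directly that the empirical LP's dual prices work. Tie probabilities vanish for the true distribution, but not for the empirical one.

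\textbf{Step 3 (uniform convergence with truncation), the main obstacle.} On samples, we compute $\hat\bprice$ satisfying the empirical two-sided condition with slack: empirical consumption in $[(1-2.5\eps)B_j,(1-1.5\eps)B_j]$. We then need $|\hat C_j(\bprice)-C_j(\bprice)|\le\eps B_j/2$ uniformly over $\bprice$. Here $\hat C_j$ averages over $N$ sampled profiles. The class $\{\gamma\mapsto\consbase_{i,j}(\bprice,\gamma)\}$ is determined by $O(|\Theta|_{\max}^2)$ linear comparisons in $\bprice\in\R^m$. Following \cite{FOCS24-toapper}, the pseudo-dimension of the summed consumption over $n$ agents is $O(m\log(nm|\Theta|_{\max}))$, by counting sign patterns of $n|\Theta|_{\max}^2$ hyperplanes. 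The range $[0,n]$ would give $n$-dependence. We therefore truncate, using $\min(\sum_i\consbase_{i,j},2B_j)$. Its range is $2B_j$, and additive error $\eps B_j$ becomes relative error $\eps$, so $N=\widetilde O(m/\eps^2)$ samples suffice. The delicate point is that truncated expectation equals true expectation up to $O(\eps B_j)$ only when $C_j(\bprice)\lesssim B_j$. For prices with large $C_j$, monotonicity fails in multiple dimensions, so we will argue through the truncated quantity itself. The truncated expected consumption exceeding $1.5B_j$ certifies $C_j$ is large, and a concentration bound makes truncated and untruncated values agree whenever either is $\le(1-\eps)B_j$. We expect the analysis in the sample complexity proof to concentrate on this step, together with the handling of ties in the empirical solution.
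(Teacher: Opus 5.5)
Your architecture matches the paper's. Step 1 is the paper's weak-duality lemma; allowing coordinates with $\price_j=0$ is a fine substitute for the paper's dummy-request pre-processing. The price is the dual of $\text{LP}_{\textsc{dual}}(\widehat{\bD};(1-2\eps)\B)$. Uniform convergence is for consumption truncated at $2B_j$, with pseudo-dimension $O(m\log(mn|\Theta|_{\max}))$.

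There is a genuine gap exactly where you stop. You assume you can compute $\hat\bprice$ whose empirical consumption lies in $[(1-2.5\eps)B_j,(1-1.5\eps)B_j]$, but you never justify this under a fixed deterministic tie-breaking rule. Such a rule is required: the pseudo-dimension bound, and hence the transfer to $\bD$, is stated for it, and your ``a.s.\ unique decision'' is not well defined on the empirical distribution. Complementary slackness gives consumption $(1-2\eps)B_j$ only under the LP's fractional split. Moreover, $\hat\bprice$ is by construction a point where sampled requests are indifferent. Each tied sample can move $\hat c_j(\hat\bprice)$ by up to $1/N$, and nothing in your plan bounds how many sampled requests are tied there.

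The missing idea is that genericity puts the $nN$ samples in general position.
\begin{itemize}
\item For a sample $\gamma$ and actions $\theta\neq\theta'$, the indifference set is the hyperplane $\{\bprice: v(\theta)-v(\theta')=\langle\bprice,\alloc(\theta)-\alloc(\theta')\rangle\}$.
\item An independent new sample ties at any fixed price with probability $0$. So its hyperplane almost surely does not contain the (nonempty) affine intersection of the hyperplanes already chosen from other samples, and each additional tie lowers the dimension.
\item Taking a finite union over sample subsets and action pairs, almost surely at most $m$ sampled requests are tied at any $\bprice\in\R^m$, including the data-dependent $\hat\bprice$.
\item For every untied sample, dual feasibility plus complementary slackness force the LP allocation to coincide in consumption with the unique best response.
\item Each tied sample has mass $1/N$ and changes each coordinate by at most $1$. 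Hence $|\hat c_j(\hat\bprice)-(1-2\eps)B_j|\le m/N\le 1\le 0.1\eps B_j$ on positive-price coordinates, with the upper bound holding on the others.
\end{itemize}
With this step, the rest of your plan goes through.

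A smaller point concerns Step 1. ``Charge $n$ times the maximal per-agent value on the bad event'' is not justified as written, because the lost welfare is correlated with budget exhaustion. Use instead that the event ``every resource has at least one unit left when agent $i$ arrives'' depends only on earlier agents, so it is independent of $\gamma_i$. This yields expected welfare at least $(1-\pr[\text{exhaustion}])\cdot\wel(\bprice)$ for any arrival order.
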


\subsection{Proving \Cref{thm:sample-main}}

\vspace{0.5em} \noindent \textbf{Notations.} To prove \Cref{thm:sample-main}, we rely on the following notations:
\begin{itemize}
    \item $\theta^*(\bprice;\gamma_i)$: It represents the best action for request $\gamma_i$ that maximizes the utility, with $\bprice$ being the price vector for each resource, i.e., 
    \[
    \theta^*(\bprice;\gamma_i) ~\in~ \arg \max_{\theta \in \Theta_i} \left(v_i(\theta) - \langle \alloc_i(\theta), \bprice \rangle \right).
    \]
    When there are multiple actions with the same utility, function $\theta^*(\bprice;\gamma_i)$ represents the (possibly random) tie-breaking rule of request $i$, i.e., $\theta^*(\bprice;\gamma_i)$ can be viewed as a sample drawn from a fixed distribution over those actions that maximizes the utility. We will specify the tie-breaking rule we use when using notation $\theta^*(\bprice;\gamma_i)$.
    \item $\wel(\bprice)$ and $\wel_i(\bprice)$: $\wel_i(\bprice)$ represents the expected welfare achieved by satisfying request $\gamma_i$, with $\bprice$ being the price vector for each resource, i.e.,
    \[
    \wel_i(\bprice) ~:=~ \E_{\gamma_i}\left[v_i\big(\theta^*(\bprice; \gamma_i)\big) \right].
    \]
    $\wel(\bprice)$ represents the summation of $\wel_i(\bprice)$, i.e., $\wel(\bprice) = \sum_{i \in [n]} \wel_i(\bprice)$.
    \item $\cons_i(\bprice)$ and $\cons(\bprice)$: $\cons_i(\bprice)$ represents the expected consumption vector of request $\gamma_i$, with $\bprice$ being the price vector for each resource, i.e.,
    \[
    \cons_i(\bprice) = (c_{i, 1} (\bprice), \cdots, c_{i, m}(\bprice)) ~:=~ \E_{\gamma_i}\left[\alloc_i\big(\theta^*(\bprice; \gamma_i)\big) \right].
    \] We further define
    \[
    \cons(\bprice) =  (c_{1} (\bprice), \cdots, c_{m}(\bprice)) ~:=~ \sum_{i \in [n]} \cons_i(\bprice)
    \]
    to be the expected total consumption vector for all requests.
\end{itemize}

\vspace{0.5em} \noindent \textbf{Pre-processing.} Given an online resource allocation instance, we first add $m$ groups of dummy requests into the instance. For $j \in [m]$, the $j$-th group contains $B_j + 1$ dummy requests, such that each request is a unit-demand request which is only interested in taking one unit of resource $j$, with a  valuation independently drawn from $[\delta, 2\delta]$ uniformly at random, where $\delta\to 0$ is an arbitrarily small parameter that does not change the objective of \ref{program:ora-ub}. The main purpose for us to introduce these groups of dummy requests is to guarantee that each resource has enough demand, and therefore the optimal dual prices with respect to either the original distributions or the empirical distributions must be non-zero. Throughout this chapter, we assume that all online resource allocation instances under consideration have undergone this pre-processing step. Consequently, the dual prices corresponding to the optimal solution of \ref{program:dual} with respect to both the original distributions and the empirical distributions are guaranteed to be strictly positive.

\vspace{0.5em} \noindent \textbf{Near-optimal allocation implies near-optimal value.} To show the learnability of the dual prices, a standard approach is to directly analyze the convergence of the optimal dual price vector itself. However, because static pricing can be highly sensitive in an online resource allocation setting, convergence of the dual price vector alone does not guarantee good performance. 

For example, consider a \(B\)-unit Single-Item Prophet Inequality problem, which is a special case of an online resource allocation problem with a single resource and \(n\) unit-demand requests. Assume that each request’s value is independently drawn uniformly at random from an interval \([\ell, r]\), where \(r\) and \(\ell\) are arbitrarily close, i.e., \(r - \ell \to 0\). Let \(\lambda^* \in [\ell, r]\) denote the optimal dual price, and let \(\widehat{\lambda}\) be an estimate of it. When \(r - \ell \to 0\), any convergence bound of the form \(|\lambda^* - \widehat{\lambda}| \le \varepsilon\) for a fixed \(\varepsilon\) fails to provide a performance guarantee for posting the price \(\widehat{\lambda}\). Indeed, it is possible that \(\lambda^* - \varepsilon < \ell\), resulting in a price that may over-allocate the resource, or that \(\lambda^* + \varepsilon > r\), resulting in a price that allocates nothing.

The above observation motivates us to directly consider the convergence of the \emph{expected consumption}, i.e., we wish to find a price vector $\widehat \bprice$ via samples, such that the expected consumption is close but no more than the budget constraint $\B$, i.e., we hope
\begin{align*}
    \cons(\widehat \bprice) ~\in~ \left[(1 - O(\epsilon)) \cdot \B, \B\right].
\end{align*}

We show the feasibility of the above idea by presenting the following \Cref{lma:budget-to-value}, which suggests that if the expected consumption of $\widehat \bprice$ is indeed close to $\B$, the resulting welfare is also sufficiently large.

\begin{Lemma}
    \label{lma:budget-to-value}
    Given an online resource allocation instance with generic distributions, assume for a price vector $\bprice$ and a predefined deterministic tie-breaking rule that only depends on the request,  we have
    \[
    \cons(\bprice) ~=~ \sum_{i \in [n]} \E_{\gamma_i}\left[\alloc_i\big(\theta^*(\bprice; \gamma_i)\big) \right] ~\in~ \left[(1 - 3\epsilon) \cdot \B, (1 - \epsilon) \cdot \B\right],
    \]
    Then, there must be
    \[
    \wel(\bprice)  ~=~ \sum_{i \in [n]} \E_{\gamma_i}\left[v_i\big(\theta^*(\bprice; \gamma_i)\big) \right] ~\geq~ (1 - 3\epsilon) \cdot \opt.
    \]
    Furthermore, the static pricing mechanism with such price vector $\bprice$ achieves an expected reward of at least $(1 - 4\epsilon) \cdot \opt$.
\end{Lemma}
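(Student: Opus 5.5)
The argument has two parts. The first compares $\wel(\bprice)$ with $\opt$ through weak duality. The second passes from the "expected" welfare $\wel(\bprice)$ to the realized welfare of the online static pricing mechanism, using concentration for the large budgets.

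\textbf{Step 1 (welfare bound).} For the given $\bprice$, set $u_{i,\gamma_i} := \max_{\theta\in\Theta_i}\bigl(v_i(\theta) - \langle \bprice, \alloc_i(\theta)\rangle\bigr) \ge 0$; the null decision makes this nonnegative. The pair $(\bprice,u)$ is feasible for \ref{program:dual}. Since $\theta^*(\bprice;\gamma_i)$ attains the maximum,
\[
\sum_i \E[u_{i,\gamma_i}] ~=~ \wel(\bprice) - \langle \bprice, \cons(\bprice)\rangle .
\]
Weak duality then gives
\[
\opt ~\le~ \langle \bprice, \B\rangle + \wel(\bprice) - \langle \bprice, \cons(\bprice)\rangle ~\le~ \wel(\bprice) + 3\epsilon \langle \bprice, \B\rangle ,
\]
where the last inequality uses $\cons(\bprice) \ge (1-3\epsilon)\B$ coordinatewise. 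It remains to show $\langle \bprice, \B\rangle \lesssim \opt$, and more precisely that $3\epsilon\langle\bprice,\B\rangle$ is at most $3\epsilon\,\opt$ up to lower-order terms. The upper constraint $\cons(\bprice)\le(1-\epsilon)\B \le \B$ shows that the allocation induced by $\bprice$ is feasible for \ref{program:ora-ub}, so $\wel(\bprice)\le\opt$. For the revenue term, individual rationality of utility-maximizing choices gives $\wel(\bprice) \ge \langle \bprice, \cons(\bprice)\rangle \ge (1-3\epsilon)\langle\bprice,\B\rangle$. Combining, $\opt \le \wel(\bprice) + \frac{3\epsilon}{1-3\epsilon}\wel(\bprice)$, hence $\wel(\bprice) \ge (1-3\epsilon)\,\opt$. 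The dummy requests are harmless here: their values are $O(\delta)\to0$, and this also handles strictly positive prices.

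\textbf{Step 2 (online mechanism).} Under static pricing $\bprice$, each agent, whatever the arrival order, picks $\theta^*(\bprice;\gamma_i)$ unless a resource runs out. Genericity makes this choice a.s.\ unique, so adversarial tie-breaking is irrelevant and the consumptions $\alloc_i(\theta^*(\bprice;\gamma_i))$ are independent across agents, each in $[0,1]^m$. Define the bad event $\be$: some resource $j$ has total realized demand exceeding $B_j$. Since the expected demand is at most $(1-\epsilon)B_j$ and $B_j \ge \Omega(\log(nm/\epsilon)\epsilon^{-2})$, a Chernoff bound together with a union bound over $j$ gives $\pr[\be] \le \epsilon/(nm)$, or any $\poly(\epsilon/(nm))$. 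Off $\be$, every agent receives her preferred decision, so the realized welfare is $\sum_i v_i(\theta^*(\bprice;\gamma_i))$.

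\textbf{Main obstacle.} Subtracting the welfare on $\be$ requires controlling the correlation between $\be$ and the values; I expect this to be the hardest step. My plan is to bound $\E[v_i(\theta^*)\mathbf{1}_{\be}]$ for each agent $i$ by conditioning on $\gamma_i$. Agent $i$ contributes at most one unit to each resource, so the event that the remaining agents alone (nearly) exhaust a resource is independent of $\gamma_i$, and its probability is still at most $\epsilon/(nm)$ by the same Chernoff bound with $B_j-1$ in place of $B_j$. This yields a loss of at most about $\epsilon\cdot\wel(\bprice)$. The realized reward is therefore at least $(1-\epsilon)\wel(\bprice) \ge (1-4\epsilon)\opt$.
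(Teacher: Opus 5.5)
Your proof is correct and follows essentially the paper's route: the same dual certificate $(\bprice, \{u_{i,\gamma_i}\})$ with weak duality, where the paper simply writes $\wel(\bprice) = \langle \bprice, \cons(\bprice)\rangle + \sum_i \E[u_{i,\gamma_i}] \ge (1-3\epsilon)\bigl(\langle \bprice, \B\rangle + \sum_i \E[u_{i,\gamma_i}]\bigr) \ge (1-3\epsilon)\,\opt$ (so your detour through $\langle \bprice, \B\rangle \le \wel(\bprice)/(1-3\epsilon)$ is valid but unnecessary), followed by Bernstein and a union bound over resources. Your leave-one-out event plays the same decoupling role as the paper's event that every resource still has at least one unit left when agent $i$ arrives, which is likewise independent of $\gamma_i$.
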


\vspace{0.5em} \noindent \textbf{Uniform convergence for truncated consumption.} Now, we show how to learn a price vector with expected  consumption close to $\B$ via samples. Our core idea is a uniform convergence argument via Pseudo Dimension, similar to the proofs in \cite{FOCS24-toapper} on proving sample complexity for combinatorial prophet inequalities. To be specific, we show the following:

\begin{Lemma}
    \label{lma:uniform-convergence}
    Given an online resource allocation instance with generic request distributions $\D_1, \cdots, \D_n$. For a price vector $\bprice$, let $\cons(\bprice) = (c_1(\bprice), \cdots, c_m(\bprice))$ be the expected consumption of a static pricing mechanism with price vector $\bprice$ and a predefined deterministic tie-breaking rule that only depends on the request, and let $\cons'(\bprice)$ be the expected truncated consumption, such that
    \[
    \cons'(\bprice) ~=~ (c'_1(\bprice), \cdots, c'_m(\bprice)), \text{ where } c'_j(\bprice) = \E_{\bgamma \sim \bD} \left[\min \left\{\sum_{i \in [n]} a_{i,j}\big(\theta^*(\bprice; \gamma_i)\big), 2B_j\right\}\right].
    \]

     For each $i \in [n]$, consider to independently draw $N$ samples $\gamma^{(1)}_i, \cdots, \gamma^{(N)}_i$ and let $\widehat \D_i$ be the empirical distribution over $N$ samples. For a price vector $\bprice$, let $\hat \cons(\bprice) = (\hat c_1(\bprice), \cdots, \hat c_m(\bprice))$ be the expected consumption for the empirical distributions under the same predefined deterministic tie-breaking rule with respect to price vector $\bprice$, and let $\hat \cons'(\bprice) = (\hat c'_1(\bprice), \cdots, \hat c'_m(\bprice))$ be the expected truncated  consumption,  i.e., we have
     \[
     \hat \cons(\bprice) ~=~ \sum_{i \in [n]} \frac{1}{N} \cdot \sum_{k \in [N]} \alloc^{(k)}_i\big(\theta^*(\bprice; \gamma^{(k)}_i)\big), \text{ and } \hat c'_j(\bprice) = \E_{\bgamma \sim \widehat \bD} \left[\min \left\{\sum_{i \in [n]} a_{i,j}\big(\theta^*(\bprice; \gamma_i)\big), 2B_j\right\}\right].
     \]
     Given the assumption that $N \geq C \cdot \frac{m}{\eps^2}\log^2\left(\frac{mn|\Theta|_{\max}}{\eps}\right)$ and $B_j \geq C\cdot \frac{\log(mn/\eps)}{\eps^2}$ for a sufficiently large $C$, with probability at least $1 - \delta$, we have
     \[
   \left \| \frac{\hat \cons'(\bprice)}{2\B} - \frac{\cons'(\bprice)}{2\B} \right \|_{\infty} ~\leq~ \epsilon/4,
     \]
     i.e., for any price vector $\bprice$ and $j \in [m]$, we have $|\hat c'_j(\bprice) - c'_j(\bprice)| \leq \epsilon B_j/2$.
\end{Lemma}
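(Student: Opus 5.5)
The plan is a uniform convergence argument over the hypothesis class indexed by price vectors, in the style of \cite{FOCS24-toapper}. There is one subtlety. $\hat c'_j(\bprice)$ is an expectation over the product of empirical distributions, so it is not an average of $N$ i.i.d.\ terms. The sum inside the $\min$ couples different agents, so the product space $\widehat\bD$ has $N^n$ points.

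\textbf{Step 1: reduction to i.i.d.\ columns.} I would first pass to the coupled empirical quantity
\[
\tilde c'_j(\bprice) ~=~ \frac1N\sum_{k\in[N]} \min\Big\{\sum_i a_{i,j}\big(\theta^*(\bprice;\gamma_i^{(k)})\big),\,2B_j\Big\},
\]
which uses the $k$-th sample of every agent together as one profile. This is an average of $N$ i.i.d.\ random variables bounded in $[0,2B_j]$, each with mean $c'_j(\bprice)$. To relate $\hat c'_j$ to $\tilde c'_j$, I would note that a uniformly random permutation of each agent's samples makes the columns a random draw from $\widehat\bD^{\otimes N}$ (sampling without replacement). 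Because the sample multiset is exchangeable, $\hat c'_j$ equals the expectation of $\tilde c'_j$ over independent random re-pairings $\sigma_i$ of the columns. Hence it suffices to prove a uniform bound for $\tilde c'_j$ that holds simultaneously for all re-pairings, in expectation over the pairing. Equivalently, I would prove a uniform bound for $\tilde c'_j$ with high probability over the i.i.d.\ columns, and then average over pairings, each of which is again an i.i.d.\ sample from $\bD$. A union bound over pairings is not available, but linearity suffices: $\E_\sigma|\tilde c'^{\sigma}-c'|$ is controlled because for each fixed $\sigma$ the failure probability is at most $\delta$. I would therefore bound $\pr[\,|\hat c'-c'|>\eps B_j/2\,]$ via Markov applied to the fraction of bad pairings.

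\textbf{Step 2: pseudo-dimension.} Normalize by $2B_j$ so the functions $f_{\bprice,j}(\bgamma)=\min\{\sum_i a_{i,j}(\theta^*(\bprice;\gamma_i)),2B_j\}/(2B_j)$ take values in $[0,1]$. Fix $N$ profiles. Every agent's choice $\theta^*(\bprice;\gamma_i)$ is determined by the signs of at most $|\Theta|_{\max}^2$ linear functions of $\bprice\in\R^m$. The tie-breaking is deterministic and depends only on the request, so on each cell of the arrangement the choice is fixed. Over $Nn|\Theta|_{\max}^2$ hyperplanes, the number of cells is $(Nn|\Theta|_{\max})^{O(m)}$. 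Adding the threshold hyperplane used to witness pseudo-dimension, each function is constant in $\bprice$ on each cell. Shattering $N$ points therefore requires $2^N\le (Nn|\Theta|_{\max})^{O(m)}$, which gives pseudo-dimension $d=O(m\log(mn|\Theta|_{\max}))$.

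\textbf{Step 3: variance-sensitive bound.} A plain additive bound of $\eps/4$ on $[0,1]$-valued functions needs $N\gtrsim d/\eps^2$, which matches the claimed $N\ge C\frac{m}{\eps^2}\log^2(\cdot)$ once the extra log from the standard bound $O((d\log(1/\eps)+\log(m/\delta))/\eps^2)$ is included. I would then union bound over $j\in[m]$. This holds for the normalized quantity $|\tilde c'_j-c'_j|\le \eps\cdot 2B_j/4=\eps B_j/2$, as the statement requires. The condition $B_j\ge C\log(mn/\eps)/\eps^2$ would enter only if I instead normalize by the untruncated range, so I expect to need it only to ensure the truncation at $2B_j$ is harmless elsewhere, not for this lemma.

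\textbf{Main obstacle.} I expect Step 1, transferring the bound from i.i.d.\ columns to the product empirical distribution $\widehat\bD$, to be the hardest step. If the averaging-over-pairings argument is too lossy, I would try the approach of \cite{FOCS24-toapper}: a symmetrization directly over the product of empirical measures, bounding the Rademacher complexity agent-by-agent using the same cell-counting.
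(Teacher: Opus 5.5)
Your proposal is correct. Steps~2--3 coincide with the paper's proof: the same normalized class $f_{\bprice}$, the same count of cells of an arrangement of $D n|\Theta|_{\max}^2$ hyperplanes giving pseudo-dimension $O(m\log(mn|\Theta|_{\max}))$, the standard pseudo-dimension uniform convergence bound, and a union bound over $j$. The paper also never uses the lower bound on $B_j$ here.

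The real difference is your Step~1, which the paper skips. The paper applies the i.i.d.\ bound directly to $\E_{\bgamma\sim\widehat\bD}[f_{\bprice}(\bgamma)]$. But this product-of-empirical-marginals expectation is an average over $N^n$ dependent tuples, not an empirical mean of $N$ i.i.d.\ profiles. The two differ because truncation makes $f_{\bprice}$ non-additive across agents. Nor can one simply redefine $\widehat\bD$ as the empirical distribution of the $N$ columns, since the proof of \Cref{thm:sample-main} applies Bernstein to $\xi_i$ that are independent under $\widehat\bD$.

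Your Hoeffding-style re-pairing closes this gap. Under independent uniform permutations $\sigma_i$, each column is marginally uniform on the product support. Hence $\hat c'_j(\bprice)=\E_\sigma[\tilde c^{\sigma}_j(\bprice)]$ exactly, where $\tilde c^{\sigma}_j$ is your column average after re-pairing. For each fixed $\sigma$ the re-paired columns are again $N$ i.i.d.\ draws from $\bD$. Saying the columns become ``a random draw from $\widehat\bD^{\otimes N}$'' is inaccurate, since only the per-column marginals are $\widehat\bD$, but linearity needs nothing more.

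When writing it out, fix the parameters:
\begin{itemize}
\item ask for per-pairing accuracy $\epsilon B_j/4$ with failure probability $\delta\epsilon/(8m)$;
\item by Fubini and Markov, with probability at least $1-\delta/m$ at most an $\epsilon/8$ fraction of pairings is bad;
\item since every $\tilde c^{\sigma}_j\in[0,2B_j]$, this gives $\sup_{\bprice}|\hat c'_j(\bprice)-c'_j(\bprice)|\le \epsilon B_j/4+2B_j\cdot\epsilon/8=\epsilon B_j/2$.
\end{itemize}
This costs only an extra $\log(1/\epsilon)$ inside the logarithm. So the paper's route is shorter but leaves the transfer to $\widehat\bD$ unjustified, while yours pays one averaging step for a complete argument.

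A minor fix in Step~2: no ``threshold hyperplane'' in $\bprice$-space is needed. What matters is that each agent's choice, and hence $f_{\bprice}(\bgamma^{(d)})$, is constant on each face of the arrangement. Using sign vectors in $\{-,0,+\}$ handles deterministic tie-breaking, and there are still $(Dn|\Theta|_{\max})^{O(m)}$ faces.
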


We remark that an easier way to derive a polynomial sample complexity bound is to directly argue the uniform convergence for $\cons(\bprice)$ and $\hat \cons(\bprice)$. However, since $c_j(\bprice)$ and $\hat c_j(\bprice)$, representing the expected consumption of resource $j$ with respect to the original distribution and the empirical distribution respectively, both lie in the range of $[0, n]$, to prove uniform convergence using Pseudo dimension, we must normalize both  $c_j(\bprice)$ and $\hat c_j(\bprice)$ to the interval $[0, 1]$. This implies a super high accuracy requirement for $\hat c_j$ in the form of $|\hat c_j (\bprice) - c_j(\bprice)| \leq O(1)$, resulting in a sample complexity with a polynomial dependency on $n$.

However, for proving \Cref{thm:sample-main}, a much looser accuracy in the form of $|\hat c_j (\bprice) - c_j(\bprice)| \leq O(\epsilon) \cdot B_j$ is already sufficient. To bridge this gap, we introduce the \emph{truncated} expected consumption functions $\cons'(\bprice)$ and $\hat \cons'(\bprice)$. Since both $\cons'(\bprice)$ and $\hat \cons'(\bprice)$ are bounded in $[0, 2B_j]$, \Cref{lma:uniform-convergence} gives an accuracy bound in the form of $|\hat c'_j (\bprice) - c'_j(\bprice)| \leq O(\epsilon) \cdot B_j$ with a sample complexity that does not depend polynomially on $n$. We will further show in the proof of \Cref{thm:sample-main} that the accuracy bound for the truncated consumption can be converted to an accuracy bound for the original consumption function, and therefore a uniform convergence for $\cons'(\bprice)$ and $\hat \cons'(\bprice)$ is sufficient.

\vspace{0.5em} \noindent \textbf{Proving \Cref{thm:sample-main}.} We defer the proofs of \Cref{lma:budget-to-value} and \Cref{lma:uniform-convergence} to \Cref{sec:sample-omitted}, and first prove \Cref{thm:sample-main} via \Cref{lma:budget-to-value} and \Cref{lma:uniform-convergence}.

\begin{proof}[Proof of \Cref{thm:sample-main}]
Let $N = C \cdot \frac{m}{\eps^2}\log^2\left(\frac{mn|\Theta|_{\max}}{\eps}\right)$ for a sufficiently large $C$, and for $i \in [n]$, let $\widehat \D_i$ be the empirical distribution over $N$ samples $\gamma^{(1)}_i, \gamma^{(N)}_i$ independently drawn from $\D_i$. Let $(\widehat \bprice, \{u^*_{i, \gamma_i}\})$ be the optimal solution of \ref{program:dual}$(\widehat \bD; (1 - 2\epsilon) \cdot \B)$, where $\widehat \bD = (\widehat \D_1, \cdots, \widehat \D_n)$, and let $\{\hat x^*_{i, \gamma_i, \theta}\}$ be the optimal solution of \ref{program:ora-ub}$(\widehat \bD; (1 - 2\epsilon) \cdot \B)$. Note that each $\widehat \D_i$ is a discrete distribution with support size at most $N$. Therefore, both \ref{program:ora-ub}$(\widehat \bD; (1 - 2\epsilon) \cdot \B)$ and \ref{program:dual}$(\widehat \bD; (1 - 2\epsilon) \cdot \B)$ are solvable, and the optimal solutions $\{\hat x^*_{i, \gamma_i, \theta}\}$ and $(\widehat \bprice, \{u^*_{i, \gamma_i}\})$ can be derived explicitly.

Our goal is to show that
\begin{align}
    \cons(\widehat \bprice) ~\in~ \left[(1 - 3\epsilon) \cdot \B, (1 - \epsilon) \cdot \B\right], \label{eq:dual-price-converge}
\end{align}
where consumption function $\cons$ is defined on an arbitrary tie-breaking rule independent to the empirical distribution $\widehat \bD$. Then,  \Cref{lma:budget-to-value} guarantees that the static pricing mechanism with $\widehat \bprice$ achieves an expected reward of at least $(1 - 4\epsilon) \cdot \opt$, which finishes the proof of \Cref{thm:sample-main}.

It remains to show \eqref{eq:dual-price-converge}. To achieve this, we follow the following three steps. We first show that the expected consumption $\hat \cons(\widehat \bprice)$ is bounded: we have
\begin{align}
    \hat \cons(\widehat \bprice) ~\in~ \left[(1 - 2.1\epsilon) \cdot \B, (1 - 1.9\epsilon) \cdot \B\right]. \label{eq:dual-price-converge-1}
\end{align}
Next, we show that a bounded expected consumption implies a bounded truncated consumption: we have
\begin{align}
    \hat \cons'(\widehat \bprice) ~\in~ \left[(1 - 2.5\epsilon) \cdot \B, (1 - 1.9\epsilon) \cdot \B\right]. \label{eq:dual-price-converge-2}
\end{align}
Then, applying \Cref{lma:uniform-convergence} gives
\begin{align}
    \cons'(\widehat \bprice) ~\in~ \left[(1 - 3\epsilon) \cdot \B, (1 - 1.4\epsilon) \cdot \B\right]. \label{eq:dual-price-converge-3}
\end{align}
Finally, we show that \eqref{eq:dual-price-converge-3} is sufficient to prove \eqref{eq:dual-price-converge}, which finishes the proof of \Cref{thm:sample-main}.

\vspace{0.5em} \noindent \textbf{Proving \eqref{eq:dual-price-converge-1}.} Note that each $\widehat \D_i$ is a discrete distribution with support size at most $N$. Therefore, both \ref{program:ora-ub}$(\widehat \bD; (1 - 2\epsilon) \cdot \B)$ and \ref{program:dual}$(\widehat \bD; (1 - 2\epsilon) \cdot \B)$ are solvable, and the solution $\{\hat x^*_{i, \gamma_i, \theta}\}$ can be derived explicitly.  Since $\widehat \bprice$ is the optimal solution of the dual, by complementary slackness, for every $x^*_{i, \gamma_i, \theta} > 0$, we have
\[
v_i(\theta) - \langle \widehat \bprice, \alloc_i(\theta) \rangle ~=~u^*_{i, \gamma_i}.
\]
Furthermore, for every $x^*_{i, \gamma_i, \theta'} = 0$, there must be
\[
v_i(\theta') - \langle \widehat \bprice, \alloc_i(\theta) \rangle ~< u^*_{i, \gamma_i},
\]
i.e., when presenting price vector $\widehat \bprice$ to request $\gamma_i$, the utility maximizing action must come from some $\theta \in \Theta_i$ such that $x^*_{i, \gamma_i, \theta} > 0$. 

Now consider the static pricing mechanism with price vector $\widehat \bprice$ for the empirical distribution $\widehat \bD$. Here, we apply a tie-breaking rule that depends on the solution $\{\hat x^*_{i, \gamma_i, \theta}\}$: if the revealed request $i$ equals to $\theta_i \in \text{support}(\widehat \D_i)$, we take action $\theta$ with probability $x^*_{i, \gamma_i, \theta}$. Then, the expected consumption\footnote{Note that notations $\hat \cons(\widehat \bprice)$ or $\theta^*(\widehat \bprice;\gamma_i)$ can't be applied, as we are performing a different tie-breaking rule, instead of the predefined deterministic tie-breaking rule that defines $\hat \cons(\widehat \bprice)$ and $\theta^*(\widehat \bprice;\gamma_i)$.} of the above mechanism is exactly
\begin{align*}
    \sum_{i \in [n]} \E_{\gamma_i \sim \widehat \D_i} \left[x^*_{i, \gamma_i, \theta} \cdot \alloc_i(\theta) \right] ~=~ (1 - 2\epsilon) \cdot \B,
\end{align*}
where the above expression is an equality because pre-processing step guarantees that each resource has enough demand.

Note that the above consumption is using a specifically defined tie-breaking rule, which is different from the predefined deterministic tie-breaking rule that defines notations $\hat \cons(\widehat \bprice)$ and $\theta^*(\widehat \bprice;\gamma_i)$. We now bound $\hat \cons(\widehat \bprice)$ via showing that different tie-breaking rules only bring minor changes to the expected consumption, and therefore $\hat \cons(\widehat \bprice)$ is still close to $ (1 - 2\epsilon) \cdot \B$. 

Recall that for $i \in [n]$, each distribution $\D_i$ is generic. Therefore, all $n \cdot N$ samples $\{\gamma^{(k)}_i\}_{k \in [N], i \in [n]}$ that form the empirical distribution $\widehat \bD$ are in general position almost surely. For a price vector $\bprice$, consider the number of possible ties, i.e., the number of tuples $(k, i): k \in [N], i \in [n]$ such that request $\gamma^{(k)}_i = (v^{(k)}_i, \alloc^{(k)}_i, \Theta^{(k)}_i)$ satisfies the following: there exists $\theta, \theta' \in \Theta^{(k)}_i$ such that 
\[
v^{(k)}_i(\theta) - \langle \bprice, \alloc^{(k)}_i(\theta) \rangle ~=~ v^{(k)}_i(\theta') - \langle \bprice, \alloc^{(k)}_i(\theta') \rangle.
\]
Note that each tie gives the price vector a linear constraint. Since $\bprice$ is an $m$-dimensional vector and all $n \cdot N$ samples are in general position almost surely, the total number of ties is at most $m$ almost surely. Then, price vector $\widehat \bprice$ can meet at most $m$ ties when facing requests from $\widehat \bD$. Since the  consumption changes by $O(1)$ when the predefined deterministic tie-breaking rule directs a request to a different action from the rule of following $\{\hat x^*_{i, \gamma_i, \theta}\}$, there must be
\[
\big|\hat c_j(\widehat \bprice) - (1 - 2\epsilon) \cdot B_j\big| ~\leq~ \frac{m}{N} ~\leq~ 1 ~\leq~ 0.1\epsilon \cdot B_j,
\]
where the first inequality follows the fact that each sample $\gamma^{(k)}_i$ leading to a tie shows up with probability $|\text{support}(\widehat \D_i)|^{-1} = N^{-1}$, and the last inequality holds when $B_j \geq 10\epsilon^{-2}$. Therefore, we have
\[
\hat \cons(\widehat \bprice) ~\in~ [(1 - 2.1\epsilon) \cdot \B, (1 -1.9\epsilon) \cdot \B ].
\]

\vspace{0.5em} \noindent \textbf{Proving \eqref{eq:dual-price-converge-2} via \eqref{eq:dual-price-converge-1}.}
Recall that $\hat \cons(\widehat \bprice)$ represents the expected consumption of $\widehat \bprice$ with respect to the empirical distribution $\widehat \bD$, and $\hat \cons'(\widehat \bprice)$ represents the expected truncated consumption. Therefore,
\[
\hat \cons'(\widehat \bprice) ~\leq~ \hat \cons(\widehat \bprice) ~\leq~ (1 -1.9\epsilon) \cdot B_j 
\]
immediately holds. It remains to show the lower bound of \eqref{eq:dual-price-converge-2}, which is sufficient to show that for every $j \in [m]$, we have
\begin{align}
    \hat c'_j(\widehat \bprice) ~\geq~ \hat c_j(\widehat \bprice) - 0.4 \epsilon \cdot B_j, \label{eq:dual-price-converge-mid}
\end{align}
provided that $\hat c_j(\widehat \bprice) \leq (1 - 1.9\epsilon) \cdot B_j \leq B_j$. 

Fix $j$. For $i \in [n]$, let random variable
\begin{align*}
    \xi_i ~:=~ a_{i,j}\big(\theta^*(\widehat \bprice; \gamma_i) \big)~:~ \gamma_i = (v_i, \alloc_i, \Theta_i) ~\sim \widehat \D_i
\end{align*}
represent the random consumption of request $i$ when facing price vector $\widehat \bprice$. By the definition of $\xi_i$, we have
\[
\hat c_j(\widehat \bprice) ~=~ \E \left[\sum_{i \in [n]} \xi_i\right] \qquad \text{and} \qquad \hat c'_j(\widehat \bprice) ~=~ \E \left[\min \left\{\sum_{i \in [n]} \xi_i, 2B_j\right\}\right].
\]
Note that $\xi_i \in [0, 1]$, and therefore $\sum_{i \in [n]} \xi_i \leq n$ always holds. If $n \leq 2B_j$, then  $\hat c_j(\widehat \bprice) = \hat c'_j(\widehat \bprice)$ must hold. Otherwise, the definition of  $\hat c_j(\widehat \bprice) $ and $\hat  c'_j(\widehat \bprice)$ gives
\begin{align*}
    \hat c'_j(\widehat \bprice) ~&=~ \E \left[\sum_{i \in [n]} \xi_i ~\Big | ~ \sum_{i \in [n]} \xi_i \leq 2B_j\right] \cdot \pr \left[\sum_{i \in [n]} \xi_i \leq 2B_j \right] + 2B_j \cdot \pr \left[\sum_{i \in [n]} \xi_i > 2B_j \right] \\
    ~&=~\E \left[\sum_{i \in [n]} \xi_i ~\Big | ~ \sum_{i \in [n]} \xi_i \leq 2B_j\right] \cdot \pr \left[\sum_{i \in [n]} \xi_i \leq 2B_j \right] + (2B_j - n + n) \cdot \pr \left[\sum_{i \in [n]} \xi_i > 2B_j \right] \\
    ~&\geq~ \hat c_j(\widehat \bprice) + (2B_j - n) \cdot \pr \left[\sum_{i \in [n]} \xi_i > 2B_j \right] ~\geq~ \hat c_j(\widehat \bprice) - n \cdot \pr \left[\sum_{i \in [n]} \xi_i > 2B_j \right].
\end{align*}
where the first inequality uses the fact that $\sum_{i \in [n]} \xi_i \leq n$ always holds. Therefore, to show \eqref{eq:dual-price-converge-mid}, it remains to show 
\[
\pr \left[\sum_{i \in [n]} \xi_i > 2B_j \right] ~\leq~ \frac{1}{n} ~\leq~ \frac{0.4 \epsilon B_j}{n},
\]
where the second inequality holds when $B_j \geq 3\epsilon^{-2}$. To show the first inequality, note that $\{\xi_i - \E[\xi_i]\}$ is a set of mean-zero random variables bounded in $[-1, 1]$, such that
\[
\var \left(\sum_{i \in [n]} \xi_i - \E[\xi_i]\right) ~=~ \sum_{i \in [n]} \var \left( \xi_i \right) ~\leq~ \sum_{i \in [n]} \E[\xi_i^2] ~\leq~ \hat c_j(\bprice) ~\leq~ B_j.
\]
Then, by Bernstein's Inequality (\Cref{Bernstein-bounded}), we have
\begin{align*}
    \pr \left[\sum_{i \in [n]} \xi_i > 2B_j \right] ~\leq~ \pr \left[\sum_{i \in [n]} \xi_i - \E[\xi_i] > \epsilon \cdot B_j \right] ~\leq~ \exp \left(- \frac{\epsilon^2 B^2_j/2}{ B_j + \epsilon B_j/3}\right) ~\leq~ \frac{1}{n},
\end{align*}
where the first inequality uses the fact that $\sum_{i \in [n]} \E[\xi_i] = \hat c_j(\bprice) \leq B_j$, and the last inequality holds when $B_j \geq 3\log n \cdot \epsilon^{-2}$.

\vspace{0.5em} \noindent \textbf{Proving \eqref{eq:dual-price-converge} via \eqref{eq:dual-price-converge-3}.}
Recall that $\cons(\widehat \bprice)$ represents the expected consumption of $\widehat \bprice$ with respect to the original distribution $\bD$, and $\cons'(\widehat \bprice)$ represents the expected truncated consumption. Therefore,
\[
\cons(\widehat \bprice) ~\geq~ \cons'(\widehat \bprice) ~\geq~ (1 -3\epsilon) \cdot B_j 
\]
immediately holds. It remains to show the upper bound of \eqref{eq:dual-price-converge}, which is sufficient to show that for every $j \in [m]$, we have
\begin{align}
    c_j(\widehat \bprice) ~\leq~ (1 - \epsilon) \cdot B_j, \label{eq:dual-price-converge-mid-2}
\end{align}
provided that $c'_j(\widehat \bprice) \leq (1 - 1.4\epsilon) \cdot B_j$. 

We prove via contradiction. Fix $j$. Assume $c_j(\widehat \bprice) > (1 - \epsilon) \cdot B_j$. For $i \in [n]$, let random variable 
\begin{align*}
    \eta_i ~:=~ a_{i,j}\big(\theta^*(\widehat \bprice; \gamma_i) \big)~:~ \gamma_i = (v_i, \alloc_i, \Theta_i) ~\sim \D_i
\end{align*}
represent the random consumption of request $i$ when facing price vector $\widehat \bprice$. Note that $\{\eta_i - \E[\eta_i]\}$ is a set of mean-zero random variables bounded in $[-1, 1]$, such that
\[
\var \left(\sum_{i \in [n]} \eta_i - \E[\eta_i]\right) ~=~ \sum_{i \in [n]} \var \left(\eta_i \right) ~\leq~ \sum_{i \in [n]} \E[\eta_i^2] ~\leq~ c_j(\bprice).
\]
Then, by Bernstein's Inequality (\Cref{Bernstein-bounded}), we have
\begin{align*}
    \pr \left[\sum_{i \in [n]} \eta_i < (1 - 1.1\epsilon) \cdot B_j \right] ~&\leq~ \pr \left[\sum_{i \in [n]} \eta_i - c_j(\bprice) <  -0.1 \epsilon \cdot c_j(\bprice) \right] \\
    ~&\leq~ \exp \left(- \frac{0.01\epsilon^2 \cdot c^2_j(\bprice)/2}{ c_j(\bprice) + 0.1\epsilon \cdot c_j(\bprice)/3}\right) ~\leq~ 0.1\epsilon,
\end{align*}
where the last inequality holds when $\epsilon \leq 0.5$ and $B_j \geq 1000 \log \epsilon^{-1} \cdot \epsilon^{-2}$. Then, the definition of $c'_j(\widehat \bprice)$ gives
\begin{align*}
    c'_j(\widehat \bprice) ~&\geq~ (1 - 1.1\epsilon) \cdot B_j \cdot \pr \left[\sum_{i \in [n]} \eta_i \geq (1 - 1.1\epsilon) \cdot B_j \right] + 0 \cdot \pr \left[\sum_{i \in [n]} \eta_i < (1 - 1.1\epsilon) \cdot B_j \right] \\
    ~&\geq~ (1 - 1.1\epsilon) \cdot B_j \cdot (1 - 0.1\epsilon) ~\geq~ (1 - 1.2\epsilon) \cdot B_j,
\end{align*}
which is in contrast to the assumption that $c'_j(\widehat \bprice) \leq (1 - 1.4\epsilon) \cdot B_j$. Therefore, $c_j(\widehat \bprice) \leq (1 - \epsilon) \cdot B_j$ must hold.
\end{proof}

\section{Pricing Query Complexity for Online Resource Allocation}
\label{sec:query-complexity}

In this section, we prove our main pricing query complexity result. To be specific, we show the following:

\begin{Theorem}
    \label{thm:main-formal}
    Let $\epsilon \in (0, 0.5]$ be an error parameter. For an online resource allocation instance with underlying request distributions $\bD$ and budget $\B$, let $\opt$ be the objective of \ref{program:ora-ub}$(\bD; \B)$. Given an estimate $\widehat \opt$ that satisfies $\widehat \opt \leq \opt \leq \beta \cdot \widehat \opt$ and $T \geq n$ queries from each request distribution $\D_i$, there exists an anonymous pricing algorithm that obtains expected total value at least $(1 - O(\epsilon)) \cdot \opt$, provided that $B_j \geq \Omega(\log(nm\beta/\epsilon) \cdot \epsilon^{-2})$ for each resource $j \in [m]$.
\end{Theorem}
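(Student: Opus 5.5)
The plan follows the outline in the technical overview. We run a learning phase of $T$ rounds and then deploy the learned prices. In round $t$ we post $\bpricet$ to one fresh query of each $\D_i$ and observe the realized consumption vector $\realalloct = \sum_i \realallocit$. We view this as an i.i.d. online resource allocation instance with $T$ ``large requests'', each consuming at most $n$ units of every resource, and per-round target consumption $(1-\epsilon)\B$.

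Prices are updated by the exponential pricing rule of \cite{GSW-STOC25}:
\[
\lambda^{(t+1)}_j = \frac{\widehat\opt}{B_j}\,\exp\Big(\kappa\Big(\textstyle\sum_{s\le t} d^{(s)}_j - t(1-\epsilon)B_j\Big)/n\Big),
\]
with the step $\kappa$ tuned to $\epsilon$ and to $\log(m\beta/\epsilon)$. The estimate $\widehat\opt$, valid up to the factor $\beta$, sets the price scale. The output mechanism posts, to every arriving agent, an independent uniform draw from $\Lambda=\{\bprice^{(1)},\dots,\bprice^{(T)}\}$. It uses only demand-oracle feedback, so it needs neither the configuration LP nor the arrival order.

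\textbf{Step 1 (learning-phase guarantee).} We invoke the analysis of \cite{GSW-STOC25} on the large instance, which has budget $TB_j \ge \widetilde\Omega(\epsilon^{-2})\, n$. This is exactly where $T\ge n$ and $B_j=\Omega(\log(nm\beta/\epsilon)\epsilon^{-2})$ are used. The analysis gives two bounds with high probability. First, $\sum_t d^{(t)}_j \le T B_j$ for all $j$. Second, the realized welfare $\sum_t \sum_i v_i(\theta^*(\bpricet;\gamma_i^{(t)}))$ is at least $(1-O(\epsilon))T\opt$. The welfare bound comes from the usual potential argument: comparing each round against the dual value $\langle\bpricet,\B\rangle+\sum_i \E[u_i(\bpricet)]\ge\opt$ and telescoping the exponential potential.

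\textbf{Step 2 (from realized to expected quantities).} Let $\hist_t$ be the history up to round $t$. The random variables $d^{(t)}_j - c_j(\bpricet)$ and $\wel$-realized minus $\wel(\bpricet)$ are martingale differences with respect to $\hist_t$. Each increment is a sum of $n$ independent $[0,1]$ terms given $\hist_{t-1}$, so we can apply Freedman/Bernstein with conditional variance at most $c_j(\bpricet)$. This yields, with high probability,
\[
\frac1T\sum_t c_j(\bpricet)\le(1-\epsilon/2)B_j,\qquad \frac1T\sum_t\wel(\bpricet)\ge(1-O(\epsilon))\opt.
\]
Relative error suffices because the total scale is $TB_j\ge \epsilon^{-2}n\log(\cdot)$. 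Tie-breaking is handled by letting $\theta^*$ denote whatever the oracle and the agents do. Since the same adversarial tie-breaking distribution applies in both phases, we treat it as a fixed rule.

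\textbf{Step 3 (deployment).} Under the randomized anonymous mechanism, each agent $i$, whatever its arrival position, independently gets a price drawn from $\Lambda$. Its expected consumption is therefore $\bar{\cons}_i=\frac1T\sum_t\cons_i(\bpricet)$, and its unconstrained welfare is $\frac1T\sum_t\wel_i(\bpricet)$. Summed over agents, expected consumption is at most $(1-\epsilon/2)\B$. Because the draws are independent across agents, Bernstein gives that the total consumption of each resource exceeds $B_j$ with probability at most $\epsilon/(m\beta)$. On that bad event, or whenever an agent is blocked by inventory, we discard welfare. The lost welfare is at most $\opt$-scale times $\epsilon/\beta$ after a union bound, which is where $\beta$ enters the logarithm. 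We argue this loss is $O(\epsilon)\opt$ by bounding each agent's value contribution by the realized sum on the good event. The arrival order does not matter, since the good event depends only on the independent draws.

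\textbf{Main obstacle.} The hardest part is Step 2 combined with the correlation noted in the overview. The prices $\bpricet$ are functions of the learning-phase samples, so the welfare guarantee of Step 1 is about realized values on those same samples. For a fresh batch we need the conditional expectations, so the martingale argument must be done carefully: the welfare increments are unbounded in scale, so we normalize by $\widehat\opt$ and use $\beta$. The failure probability of the learning phase must also be folded into an $O(\epsilon)$ loss in expectation. If a direct Freedman bound on the welfare martingale fails due to large values, I would instead lower bound $\wel(\bpricet)$ through duality, as $\opt - \langle\bpricet,\B-\cons(\bpricet)\rangle$ up to utility terms. This expresses welfare via consumption, which is bounded, and reduces everything to concentration of consumption.
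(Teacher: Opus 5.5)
Your overall architecture matches the paper's: exponential pricing on the i.i.d.\ ``large'' instance, a uniform mixture over $\Lambda$, Freedman to pass from realized to expected consumption, and Bernstein at deployment. Your consumption side mirrors the paper's proof of \Cref{lma:pbesmall}. The gap is on the welfare side.

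Step~1 claims that the \emph{realized} learning-phase welfare is at least $(1-O(\epsilon))T\opt$ with high probability, and Step~2 transfers this to $\frac1T\sum_t\wel(\bpricet)$ via Freedman. Neither step holds, because values are unbounded and the guarantee of \cite{GSW-STOC25} is only in expectation. For example, take one resource with $B=n$, one agent with value $1/p$ with probability $p=T^{-2}$, and all other agents having only the null action. Then $\opt=1$, yet the realized welfare over $T$ rounds is $0$ with probability about $1-1/T$. Normalizing by $\esopt$ fixes the scale, not this tail.

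Nor can you replace the high-probability claim by the easy bound $\E_\Lambda[\algt(\Lambda)]\ge(1-O(\epsilon))T\opt$ and intersect it with your good event $G$. The quantity $\algt(\Lambda)$ is not bounded pointwise by $O(T\opt)$, only by $nT\opt$. Its mass can also sit where deployment fails: prices that are too low make $\wel(\bpricet)$ large precisely because consumption overshoots. So $\pr[\lnot G]\le O(\epsilon)$ does not control $\E[\one_{\lnot G}\cdot\algt(\Lambda)]$.

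Your duality fallback is the paper's starting point, but it does not reduce to concentration of consumption. Summing $\wel(\bpricet)\ge(1-\epsilon)\opt+\langle\cons(\bpricet)-(1-\epsilon)\B,\bpricet\rangle$ leaves two price-weighted terms.
\begin{itemize}
\item The first, $\sum_t\langle\realalloct-(1-\epsilon)\B,\bpricet\rangle$, is handled pointwise by the no-regret lemma (\Cref{lma:noRegret}). With your initial price $\esopt/B_j$, its additive loss is of order $\frac{n}{\epsilon}\sum_j\esopt/B_j$. At $T=n$ this is $O(\epsilon)T\opt$ only if $B_j\gtrsim m\epsilon^{-2}$, which the hypothesis does not give; the paper uses $\initprice=\esopt\epsilon^2/m$.
\item The second, $\sum_t\langle\cons(\bpricet)-\realalloct,\bpricet\rangle$, is a martingale with random, unbounded weights that is correlated with $\pbe(\Lambda)$. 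Concentration of $\sum_t\realallocbase_j$ says nothing about it.
\end{itemize}

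The missing ingredients are the following.
\begin{itemize}
\item Drop the factor using $(1-\pbe(\Lambda))\sum_t\langle\cons(\bpricet)-\realalloct,\bpricet\rangle\ge-\sum_t\langle(\realalloct-\cons(\bpricet))^+,\bpricet\rangle$.
\item Bound each round conditionally on $\hist_{t-1}$ by $\E[(\realallocbase_j-\consbase_j(\bpricet))^+]\le\sqrt{\consbase_j(\bpricet)}\le\epsilon\,\consbase_j(\bpricet)+\epsilon B_j$ (\Cref{lma:plussmall}).
\item Convert $\sum_t\langle\B,\bpricet\rangle$ back into revenue through the no-regret inequality.
\item Absorb the resulting $O(\epsilon)$ multiples of revenue using $\E_\Lambda[\algt(\Lambda)]\le 2T\opt$ (\Cref{lma:algtsmall}). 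This needs its own argument, since the learning phase enforces no budget.
\end{itemize}

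Relatedly, Step~1's bound $\sum_t\realallocbase_j\le TB_j$ is too weak to give Step~2's $\frac1T\sum_t\consbase_j(\bpricet)\le(1-\epsilon/2)B_j$. You need $\sum_t\realallocbase_j\le(1-3\epsilon/4)TB_j$, which the paper derives from the price blow-up argument of \Cref{clm:realallocsmall}. That argument, not a union bound at deployment, is where $\log\beta$ enters.
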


We remark that \Cref{thm:main-formal} requires an estimate $\widehat{\opt}$ of the optimal value~$\opt$. 
Such an estimate is necessary because it provides the \emph{scale} of the problem: without any information about the scale, it is impossible to determine, using only a polynomial number of queries, whether the values in the underlying online resource allocation instance are on the order of $10^{-100}$ or $10^{100}$. To give a more concrete intuition, consider a simple online resource allocation instance with single resource having $B$ supplies. There are $n = 1.5 B$ requests, where each request requires one unit of resource. $B$ requests are with value $w = 10^k$ for some arbitrary $k \in \mathbb{R}$, while the remaining $0.5 B$ requests are with value $10w$. The arrival order of requests are random. Then, as the optimal algorithm needs to save $0.5 B$ units of the resource for high-value requests, it's necessary for the algorithm to determine the value of $w$ (up to a constant factor). However, as $w = 10^k$ while $k$ can be some arbitrary real number, it's impossible for the algorithm to determine $w$ using $\poly(B)$ pricing queries with only comparison feedback. Therefore,  an estimate of $\opt$ is necessary.

Importantly, only a coarse approximation of $\opt$ is needed. 
If the estimate $\widehat{\opt}$ differs from $\opt$ by a multiplicative factor of~$\beta$, this inaccuracy results in only an additional $\log \beta$ factor in our large-budget requirement.

We show that \Cref{alg:exp} is the desired algorithm for \Cref{thm:main-formal}. As discussed in \Cref{sec:technical-overview}, the learning phase of \Cref{alg:exp} is a $T$-rounds learning algorithm that can be viewed as a large online resource allocation problem. We use the exponential pricing algorithm to update prices between rounds. In the auction phase, we collect all the price vectors appeared in the learning phase as set $\Lambda$, and present to each newly arriving request a price vector independently drawn from $\Lambda$.

\begin{algorithm}[tbh]
\caption{\textsc{Exponential Pricing via Estimates}}
\label{alg:exp}
\KwIn{ instance $\mathcal{I}$, budgets $\{B_j\}_{j \in [m]}$, error parameter $\epsilon$, estimates $\esopt$}
\tcp{Learning Phase}
Initialize $\bprice^{(1)}$: for $j \in [m]$, we set $\price^{(1)}_j = \initprice$, where $\initprice \gets \esopt \cdot \frac{\epsilon^2}{m}$. \\
\For{$t = 1, \ldots, T$}
{
    \For{$i = 1, \ldots n$}
    {
        $\gamma^{(t)}_i = (\val^{(t)}_i, a^{(t)}_i, \Theta^{(t)}_i) \gets $ request $i$ in $t$-th round. \\
        $\theta^{(t)}_i = \arg\max_{\theta \in \Theta^{(t)}_i} \left( \val^{(t)}_i(\theta) - \langle\bprice^{(t)}, a^{(t)}_i(\theta) \rangle \right)$.
    }
Let $\realallocit = (\realallocbase_{i,1}, \cdots, \realallocbase_{i, m})$ be the resource vector consumed by request $\gamma^{(t)}_i$. \\
Let $\realalloct = (\realallocbase_{1}, \cdots, \realallocbase_{m}) := \sum_{i \in [n]} \realallocit$ be the total resource consumed in $t$-th round. \\
Update $\bprice^{(t+1)}$: for $j \in [m]$, we set
\[
\textstyle \price^{(t+1)}_j ~:=~ \price^{(t)}_j \cdot \exp\left(\frac{\eps}{n} \cdot \big(\realallocbase_{j} - (1 - \epsilon) \cdot B_j\big)\right)
\]
}
\tcp{Auction Phase}
Let $\Lambda=\{\bprice^{(t)}\}_{t \in [T]}$. \\
\For{$n$ {requests arriving in arbitrary order}}
{
    Choose a price vector from $\Lambda$ uniformly at random, and propose to the arriving request. \\
    The arriving request performs \emph{best response} against the price vector. \\
    \textbf{Terminate} if any resource $j$ exceeds budget $B_j - 1$
}
\end{algorithm}

\subsection{Notations.} To prove \Cref{thm:main-formal}, we will reuse some notations used in the proof of \Cref{thm:sample-main}, which are briefly restated below for convenience:
\begin{itemize}
    \item $\theta^*(\bprice;\gamma_i)$: It represents the best action for request $\gamma_i$ that maximizes the utility, with $\bprice$ being the price vector for each resource, i.e., 
    \[
    \theta^*(\bprice;\gamma_i) ~\in~ \arg \max_{\theta \in \Theta_i} \left(v_i(\theta) - \langle \alloc_i(\theta), \bprice \rangle \right).
    \]
    When there are multiple actions with the same utility, we assume in this section that each request applies an arbitrary (and possibly random) but consistent tie-breaking rule.
    \item $\wel(\bprice)$ and $\wel_i(\bprice)$: $\wel(\bprice)$ $\wel_i(\bprice)$ represents the expected welfare achieved by satisfying all requests and request $\gamma_i$ respectively, with $\bprice$ being the price vector for each resource, i.e.,
    \[
    \wel_i(\bprice) ~:=~ \E_{\gamma_i}\left[v_i\big(\theta^*(\bprice; \gamma_i)\big) \right], \quad \text{and} \quad \wel(\bprice) ~:=~ \sum_{i \in [n]} \wel_i(\bprice).
    \]
    \item $\cons_i(\bprice)$ and $\cons(\bprice)$: $\cons_i(\bprice)$ and $\cons(\bprice)$ represents the expected consumption vector of all requests/request $\gamma_i$ respectively, with $\bprice$ being the price vector for each resource, i.e.,
    \[
    \cons_i(\bprice)  ~:=~ \E_{\gamma_i}\left[\alloc_i\big(\theta^*(\bprice; \gamma_i)\big)\right], \quad \text{and} \quad \cons(\bprice) ~:=~ \sum_{i \in [n]} \cons_i(\bprice).
    \]
\end{itemize}

We also introduce the following new notations:
\begin{itemize}
    \item $\{x^*_{i, \gamma_i, \theta} \}$. We use $x^*_{i, \gamma_i, \theta}$ to represent the optimal solution of \ref{program:ora-ub}$(\bD; \B)$. We define $\{x^*_{i, \gamma_i, \theta} \}$ for analytical purposes only, without the need to actually solve for the values.
    \item $\alg(\Lambda)$: It represents the expected welfare gained in the auction phase of \Cref{alg:exp}, assuming the set of price vector is $\Lambda$. To prove \Cref{thm:main-formal}, it's equivalent to show that 
    \[
    \E_{\Lambda}[\alg(\Lambda)] ~\geq~ (1 - O(\epsilon)) \cdot \opt.
    \]
    We remark that the exact value of $\alg(\Lambda)$ may also depend on the arrival order of requests in the auction phase. We will later show that the arrival order does not affect our analysis of lower bounding $\alg(\Lambda)$, and for simplicity, we assume $\alg(\Lambda)$ represents the expected reward achieved in the worst possible arrival order.

    \item $\algt_i(\Lambda)$ and $\algt(\Lambda)$: We define
    \[
    \algt_i(\Lambda) ~:=~ \sum_{t \in [T]} \wel_i(\bprice^{(t)}) \qquad \text{and} \qquad \algt(\Lambda) ~:=~ \sum_{i \in [n]} \algt_i(\Lambda)
    \]
    to be the sum of expected welfare gained via playing price vectors in $\Lambda$ from request $i$ and all requests, respectively.
    \item $\pbe(\Lambda)$: We define $\pbe(\Lambda)$, where $\be$ is short for ``budget exhaustion'', to be the probability that the auction phase of \Cref{alg:exp} exhaust the budget when the set of price vectors is $\Lambda$, i.e., assuming we don't execute the ``Terminate'' step in Line 14 of \Cref{alg:exp} and let the algorithm go through all requests, $\pbe(\Lambda)$ represents the probability that there exists $j \in [m]$ such that the auction phase allocated at least $B_j$ unit of resource $j$.
\end{itemize}

\subsection{Proof of \Cref{thm:main-formal}}

Now, we provide a formal proof of \Cref{thm:main-formal}. Recall that we aim at showing
\[
    \E_{\Lambda}[\alg(\Lambda)] ~\geq~ (1 - O(\epsilon)) \cdot \opt.
\]
We first lower-bound $\alg(\Lambda)$ as
\begin{align}
    \label{eq:alg-lower}
    \alg(\Lambda) ~=~ \sum_{i \in [n]}\pr\left[\text{algorithm reaches request }i\right] \cdot \frac{1}{T} \cdot \algt_i(\Lambda) ~\geq~ \frac{1 - \pbe(\Lambda)}{T} \cdot \algt(\Lambda),
\end{align}
where the inequality follows from the fact that the probability that the auction phase reaches request $i$ is at least the probability that the auction phase never exhaust the budget, which is exactly $1 - \pbe(\Lambda)$. Therefore, to prove \Cref{thm:main-formal}, it's equivalent to show that
\begin{align*}
    \E_{\Lambda} \left[(1 - \pbe(\Lambda)) \cdot \algt(\Lambda) \right] ~\geq~ (1 - O(\epsilon)) \cdot T \cdot \opt.
\end{align*}

\vspace{0.5em} \noindent \textbf{Bounding $\algt(\Lambda)$.} Fix $\Lambda$. We first give a lower bound for $\wel_i(\bpricet)$. To achieve this, we define the following randomized action $\widehat \theta_i$, which only depends on the type of request $i$:
\begin{align*}
    \widehat \theta_i ~:=~  \begin{cases} \phi & \text{with probability } \epsilon  \\ \theta \in \Theta_{i} & \text{with probability } x^*_{i,\gamma_i, \theta} \cdot (1 - \epsilon), \text{ assuming } \gamma_i \text{ is the true realization}
\end{cases}~,
\end{align*}
i.e., we ask $\widehat \theta_i$ to take the null action with probability $\epsilon$, and follow the optimal solution of \eqref{program:ora-ub} otherwise. To bound $\wel_i(\bpricet)$, the main idea is that the utility achieved by playing action $\widehat \theta_i$ can't be more than playing $\theta^*(\bpricet, \gamma_i)$. Then, we have
\begin{align}
    \wel_i(\bpricet) ~&=~ \E_{\gamma_i}\left[v_i\big(\theta^*(\bpricet; \gamma_i) \big) - \left \langle \alloc_i\big(\theta^*(\bpricet; \gamma_i)\big), \bpricet \right \rangle  \right] +  \E_{\gamma_i}\left[ \left \langle \alloc_i\big(\theta^*(\bpricet; \gamma_i)\big), \bpricet \right\rangle  \right] \notag \\
    ~&\geq~ \E_{\gamma_i}\left[v_i\big(\widehat \theta_i \big) - \left \langle \alloc_i\big(\widehat \theta_i\big), \bpricet \right \rangle  \right] +  \E_{\gamma_i}\left[ \left \langle \alloc_i\big(\theta^*(\bpricet; \gamma_i)\big), \bpricet \right\rangle  \right] \notag \\
    ~&=~ \E_{\gamma_i}\left[v_i\big(\widehat \theta_i \big) \right] - \E_{\gamma_i} \left[  \left \langle \alloc_i\big(\widehat \theta_i\big), \bpricet \right \rangle  \right] +  \E_{\gamma_i}\left[ \left \langle \alloc_i\big(\theta^*(\bpricet; \gamma_i)\big), \bpricet \right\rangle  \right].\label{eq:wel-lower-1}
\end{align}

To further simplify \eqref{eq:wel-lower-1}, for the first term in \eqref{eq:wel-lower-1}, we have
\[
\E_{\gamma_i}\left[v_i\big(\widehat \theta_i \big) \right] ~=~ (1 - \epsilon) \cdot \E_{\gamma_i} \left[\sum_{\theta \in \Theta_i} x^*_{i, \gamma_i, \theta} \cdot v_i(\theta)\right].
\]

For the second term in \eqref{eq:wel-lower-1}, we have
\begin{align*}
    \E_{\gamma_i} \left[  \left \langle \alloc_i\big(\widehat \theta_i\big), \bpricet \right \rangle  \right] ~&=~ \left \langle \E_{\gamma_i} \left[ \sum_{\theta \in \Theta_{i}}(1 - \epsilon) \cdot x^*_{i, \gamma_i, \theta} \cdot \alloc_{i} (\theta) \right], \bpricet \right \rangle \\
    ~&=~ \left \langle (1 - \epsilon) \cdot \E_{\gamma_i} \left[ \sum_{\theta \in \Theta_{i}}  x^*_{i, \gamma_i, \theta} \cdot  \alloc_{i} (\theta) \right], \bpricet \right \rangle.
\end{align*}

For the third term in \eqref{eq:wel-lower-1}, we have
\[
\E_{\gamma_i}\left[ \left \langle \alloc_i\big(\theta^*(\bpricet; \gamma_i)\big), \bpricet \right\rangle  \right] ~=~ \left \langle \E_{\gamma_i}\left[ \alloc_i\big(\theta^*(\bpricet; \gamma_i)\big) \right], \bpricet \right\rangle ~=~ \left \langle \cons_i(\bpricet), \bpricet \right\rangle.
\]
Applying the above three equalities to \eqref{eq:wel-lower-1}, we have

\begin{align*}
    \wel_i(\bpricet) ~\geq~ (1 - \epsilon) \cdot \E_{\gamma_i} \left[ \sum_{\theta \in \Theta_{i}} x^*_{i, \gamma_i, \theta} \cdot v_{i}(\theta) \right] + \left \langle \cons_i(\bpricet) - (1 - \epsilon) \cdot \E_{\gamma_i} \left[ \sum_{\theta \in \Theta_{i}}  x^*_{i, \gamma_i, \theta} \cdot  \alloc_{i} (\theta) \right], \bpricet \right\rangle.
\end{align*}

Now summing the above inequality over $i \in [n]$ and $t \in [T]$, we have

\begin{align}
    \algt(\Lambda) ~=~& \sum_{t \in [T]} \sum_{i \in [n]} \wel_i(\bpricet) \notag \\
    ~\geq~& (1 - \epsilon) \cdot T \cdot \sum_{i \in [n]} \E_{\gamma_i} \left[ \sum_{\theta \in \Theta_{i}} x^*_{i, \gamma_i, \theta} \cdot v_{i}(\theta) \right] \notag \\
    &+ \sum_{t \in [T]} \left \langle  \cons(\bpricet) - (1 - \epsilon) \cdot  \sum_{i \in [n]} \E_{\gamma_i} \left[ \sum_{\theta \in \Theta_{i}}  x^*_{i, \gamma_i, \theta} \cdot  \alloc_{i} (\theta) \right], \bpricet \right\rangle \notag \\
    ~\geq~& (1 - \epsilon) \cdot T \cdot \opt + \sum_{t \in [T]} \left \langle  \cons(\bpricet) - (1 - \epsilon) \cdot  \B, \bpricet \right\rangle \notag \\
    ~=~& (1 - \epsilon) \cdot T \cdot \opt \label{eq:first}\\
    &+ \sum_{t \in [T]} \left \langle  \realalloct - (1 - \epsilon) \cdot  \B, \bpricet \right\rangle \label{eq:second}\\
    &+ \sum_{t \in [T]} \left \langle  \cons(\bpricet) - \realalloct , \bpricet \right\rangle. \label{eq:third}
\end{align}

Recall that our goal is to show
\begin{align*}
    \E_{\Lambda} \left[(1 - \pbe(\Lambda)) \cdot \algt(\Lambda) \right] ~\geq~ (1 - O(\epsilon)) \cdot T \cdot \opt.
\end{align*}
Then, it's sufficient to separately lower bound the expectation of $(1 - \pbe(\Lambda)) \cdot \eqref{eq:first}$, $(1 - \pbe(\Lambda)) \cdot \eqref{eq:second}$, and $(1 - \pbe(\Lambda)) \cdot \eqref{eq:third}$.

\vspace{0.5em} \noindent \textbf{Bounding $\E_{\Lambda}[(1 - \pbe(\Lambda)) \cdot \eqref{eq:first}]$.}  This is the relatively easier case. We have
\begin{align}
    \E_{\Lambda} \left[(1 - \pbe(\Lambda)) \cdot (1 - \epsilon) \cdot T \cdot \opt\right] ~\geq~ (1 - \epsilon) \cdot T \cdot \opt - T \cdot \opt \cdot \E_{\Lambda} \left[\pbe(\Lambda)\right]. \label{eq:first-1}
\end{align}

It remains to bound $\E_{\Lambda} \left[\pbe(\Lambda)\right]$. We use the following \Cref{lma:pbesmall} to bound $\E_{\Lambda} \left[\pbe(\Lambda)\right]$:

\begin{restatable}{Lemma}{pbesmall}
    \label{lma:pbesmall}
    For \Cref{alg:exp}, we have $\E_{\Lambda} \left[\pbe(\Lambda)\right] \leq 3 \epsilon$.
\end{restatable}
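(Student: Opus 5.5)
The plan is to reduce budget exhaustion in the auction phase to a bound on the \emph{average} expected consumption $\frac{1}{T}\sum_{t\in[T]} \cons(\bpricet)$, and then control that average using the exponential price updates of the learning phase.

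Fix $\Lambda$. In the auction phase (ignoring termination), request $i$ faces a price drawn uniformly from $\Lambda$, independently across requests. Its consumption of resource $j$ is therefore an independent $[0,1]$-valued variable with mean $\frac{1}{T}\sum_t c_{i,j}(\bpricet)$. These means do not depend on the arrival order, so the order plays no role. If $\frac{1}{T}\sum_t c_j(\bpricet) \le (1-\epsilon/2) B_j$, Bernstein's inequality (\Cref{Bernstein-bounded}, applied as in the proof of \Cref{thm:sample-main}) gives
\[
\pr\Big[\text{consumption of } j \ge B_j-1\Big] \le \exp\big(-\Omega(\epsilon^2 B_j)\big) \le \frac{\epsilon}{m}.
\]
This uses $B_j \ge \Omega(\log(m/\epsilon)\epsilon^{-2})$. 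A union bound over $j$ then gives $\pbe(\Lambda)\le\epsilon$ on the good event
\[
\mathtt{Good} := \Big\{\forall j:\ \textstyle\sum_t c_j(\bpricet) \le (1-\epsilon/2) T B_j\Big\}.
\]
It therefore suffices to show $\pr[\neg\mathtt{Good}]\le 2\epsilon$, using $\pbe\le 1$ off the good event.

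The bound on $\pr[\neg\mathtt{Good}]$ has two parts. The first is a deterministic consequence of the update rule. Unrolling it gives
\[
\ln\frac{\price^{(T+1)}_j}{\initprice} = \frac{\epsilon}{n}\Big(\sum_t \realallocbase_j - (1-\epsilon) T B_j\Big).
\]
Hence $\sum_t \realallocbase_j \le (1-\epsilon)TB_j + \frac{n}{\epsilon}\ln\frac{\price^{(T+1)}_j}{\initprice}$. Next I would show prices never exceed $P := \opt\cdot\poly(n,m,\beta/\epsilon)$ (or do so only with probability at most $\epsilon/m$ per resource). The intended argument is that once $\price_j > n\cdot\max$ per-unit values, no request takes resource $j$ at all. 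Since values are not bounded a priori, I expect I will instead need the following: if $\price_j \gtrsim \opt/\epsilon$, the expected consumption $c_j(\bprice)$ is at most roughly $\epsilon B_j$, because the utility exceeds the payment, the payment is at most the welfare, and the welfare is at most $\opt$. A supermartingale argument then shows the price drifts back down. Taking $\price_j \le P$ and $\initprice = \widehat\opt\,\epsilon^2/m \ge \opt\,\epsilon^2/(m\beta)$, the log ratio is $O(\log(nm\beta/\epsilon))$. So the extra term is $\frac{n}{\epsilon}O(\log(nm\beta/\epsilon))$, and this is at most $\frac{\epsilon}{4}TB_j$ when $T\ge n$ and $B_j \ge \Omega(\log(nm\beta/\epsilon)\epsilon^{-2})$.

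The second part passes from realized to expected consumption. The sequence $\sum_{t\le s}\big(c_j(\bpricet)-\realallocbase_j\big)$ is a martingale, since $\bpricet$ is determined before round $t$'s requests are drawn. Each increment is bounded by $n$ in absolute value, and its conditional variance is at most $c_j(\bpricet)$. Freedman's inequality then bounds the deviation by $O\big(\sqrt{\sum_t c_j(\bpricet)\log(m/\epsilon)} + n\log(m/\epsilon)\big)$ with failure probability $\epsilon/m$. Since the true sum is of order $TB_j$ on the relevant event and $TB_j \ge n\epsilon^{-2}\log(\cdot)$, this deviation is at most $\frac{\epsilon}{4}TB_j$. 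Combining the two parts and taking a union bound over $j$ gives $\mathtt{Good}$ with probability at least $1-2\epsilon$, and therefore $\E_\Lambda[\pbe(\Lambda)]\le 3\epsilon$.

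The main obstacle is the price upper bound $P$, because values are unbounded and prices can only overshoot through the exponential update. The first thing I would try is to show that the log-price exceeding $\ln P$ requires a long run of rounds with realized consumption above $(1-\epsilon)B_j$ while the expected consumption is at most $\epsilon B_j$. A per-round Bernstein bound combined with a union bound over $T$ rounds would make such a run unlikely. Taking $\ln P$ large enough to absorb a single overshoot step of size $\epsilon$ in log-price then completes the argument.
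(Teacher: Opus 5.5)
Your proof is correct, and it follows a genuinely different route from the paper's in one place. The reduction to the event $\bd$ and the Bernstein bound for the auction phase coincide with the paper's \Cref{clm:exp-small-pbe-small}, and the Freedman step from realized to expected consumption coincides with \Cref{clm:real-exp-diff-small}. The difference is how you show $\sum_{t}\realallocbase_j\le(1-3\epsilon/4)TB_j$, which is the paper's \Cref{clm:realallocsmall}.

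\textbf{The paper's route.} It never controls the price path. If the bound fails, then during the final stretch of rounds carrying between $0.1\epsilon TB_j$ and $0.2\epsilon TB_j$ units of $j$, the price of $j$ is at least $\initprice\exp(0.05\epsilon^2TB_j/n)$. The payments there, and hence the realized learning-phase value, are therefore enormous. Markov's inequality against a polynomial bound on the expected learning-phase value makes this rare. This needs no union bound over rounds and holds for every $T\ge n$.

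\textbf{Your route.} You prove the stronger, reusable fact that every price stays below $\poly(n,m,\beta/\epsilon)\cdot\opt$ with high probability. You then read off the consumption from the telescoped update. This works, but it needs the following repairs.

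First, the unconstrained welfare $\wel(\bprice)$ is not bounded by $\opt$. For example, many requests can each take a tiny amount of $j$ bundled with a full unit of a nearly free resource. It is bounded only by $n\cdot\opt$, since each request's expected maximum value is at most $\opt$. So work above the level $P'=4n\opt/B_j$, where payment $\le$ welfare gives $c_j(\bprice)\le B_j/4$; the polynomial slack in $P$ absorbs the extra factor.

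Second, no long run is involved. While $\lambda_j\ge P'$, the log-price can rise only in a round with $\realallocbase_j>(1-\epsilon)B_j$, so it suffices that no such round occurs. Once that holds, the price never exceeds $P'e^{\epsilon}$. Compare against expected consumption $B_j/4$ rather than $\epsilon B_j$; otherwise the Bernstein gap vanishes as $\epsilon\to1/2$.

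Third, the union bound over rounds costs $Te^{-\Omega(B_j)}$, so you need $\log T\lesssim B_j$. This is harmless for $T=n$, which is all \Cref{thm:main-formal} uses. Unlike the paper's argument, however, it does not cover arbitrarily large $T$.

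Finally, in your Freedman step the variance proxy $\sum_t c_j(\bpricet)$ is random and can be far larger than $TB_j$ on $\lnot\bd$. So either stop the martingale when it first exceeds $(1-\epsilon/2)TB_j$, or use the variance-adaptive bound as the paper does.
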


The proof of \Cref{lma:pbesmall} requires arguing the concentration of a martingale process. We defer the proof of \Cref{lma:pbesmall} to Section~\ref{subsec:proof-lma-pbesmall}. Then, applying \Cref{lma:pbesmall} to \eqref{eq:first-1} gives
\begin{align}
    \E_{\Lambda} \left[(1 - \pbe(\Lambda)) \cdot (1 - \epsilon) \cdot T \cdot \opt\right] ~\geq~ (1 - 4 \epsilon) \cdot T \cdot \opt. \label{eq:first-final}
\end{align}

\vspace{0.5em} \noindent \textbf{Bounding $\E_{\Lambda}[(1 - \pbe(\Lambda)) \cdot \eqref{eq:second}]$.} Before taking the expectation for $(1 - \pbe(\Lambda)) \cdot \eqref{eq:second}$, we further apply the idea of ``no-regret w.r.t. zero vector'' from \cite{GSW-STOC25} to simplify \eqref{eq:second}. To be specific, we rely on the following lemma:

\begin{Lemma}[Lemma 2.7 in \cite{GSW-STOC25}] \label{lma:noRegret}
Consider an online setting where in the $t$-th step we play a price $\lambda_{t, j} \geq 0$ for  item $j$. On playing price $\lambda_{t, j}$, we get revenue  $\lambda_{t, j} \cdot r_{t,j}$ for some adversarially chosen $r_{t,j} \in [-1,1]$ (denote $r_{t,j}^+ := \max\{0, r_{t, j}\}$). Then, playing exponential prices   $\lambda_{t, j} = \initprice \cdot \exp\left(\delta \cdot \sum_{\ell \, < \, t}  r_{\ell,j} \right)$ for some $\delta \in (0,1/2)$ and $\initprice >0$ gives  us total revenue 
\[ 
         \sum_{t = 1}^T \lambda_{t, j} \cdot r_{t,j} ~\geq ~ - \frac{2\initprice}{\delta} - 4\delta \cdot \sum_{t = 1}^T \lambda_{t, j} \cdot r_{t,j}^+ .
        \] 
\end{Lemma}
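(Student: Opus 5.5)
We prove the bound for a single coordinate, since the statement concerns one item $j$ with an adversarial sequence $r_{t,j}\in[-1,1]$; we drop the index $j$ and write $\lambda_t$, $r_t$. The natural route is a potential-function argument on the cumulative exponent. Define $S_t := \sum_{\ell<t} r_\ell$, so that $\lambda_t = \initprice e^{\delta S_t}$ and $\lambda_{t+1} = \lambda_t e^{\delta r_t}$. The plan is to relate the per-step revenue $\lambda_t r_t$ to the increment $\lambda_{t+1}-\lambda_t$ of the potential $\lambda_t/\delta$, telescope, and use $\lambda_{T+1}\ge 0$.

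First we establish the per-step inequality. For $x=\delta r_t \in [-\delta,\delta]$ with $\delta<1/2$, we need an upper bound of the form $e^{x}-1 \le x + c x^2$ when $x\ge0$, and $e^{x}-1\le x$ always holds. Concretely, $e^x - 1 \le x + x^2$ for $|x|\le 1$. This gives
\[
\lambda_{t+1}-\lambda_t ~=~ \lambda_t\big(e^{\delta r_t}-1\big) ~\le~ \lambda_t\big(\delta r_t + \delta^2 r_t^2\big).
\]
The quadratic term is only needed for the positive part. The negative part is handled separately, since $e^{x}-1\le x$ suffices there for an upper bound. More carefully, for $r_t<0$ we use $e^{\delta r_t}-1\le \delta r_t$ exactly, and for $r_t\ge 0$ we use $e^{\delta r_t}-1 \le \delta r_t + \delta^2 r_t^2 \le \delta r_t + \delta^2 r_t^+$, using $r_t^2\le r_t^+$ because $r_t\in[0,1]$. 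In both cases,
\[
\lambda_{t+1}-\lambda_t ~\le~ \delta\lambda_t r_t + \delta^2 \lambda_t r_t^+ .
\]

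Next we sum over $t=1,\dots,T$. The left side telescopes to $\lambda_{T+1}-\lambda_1 \ge -\initprice$, which gives
\[
-\initprice ~\le~ \delta\sum_t \lambda_t r_t + \delta^2 \sum_t \lambda_t r_t^+ .
\]
Dividing by $\delta$ yields $\sum_t \lambda_t r_t \ge -\initprice/\delta - \delta\sum_t\lambda_t r_t^+$. This is stronger than the claimed bound, because $\initprice/\delta\le 2\initprice/\delta$ and $\delta \le 4\delta$, and both subtracted terms are nonnegative. So the slack in the constants absorbs any looser elementary inequality we might prefer, for example $e^x\le 1+x+x^2$ valid on $x\le 1$.

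The only step requiring care is the elementary exponential inequality and the sign split. Everything else is telescoping. There is no real obstacle: the adversary's choices may depend on past prices, but the argument is pathwise, so adaptivity is irrelevant.
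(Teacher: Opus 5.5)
Your proof has a genuine gap at the negative-step inequality. You rely on ``$e^x-1\le x$ always holds,'' but this is backwards. Convexity gives $e^x\ge 1+x$ for every real $x$. So whenever $r_t<0$ you have $\lambda_{t+1}-\lambda_t=\lambda_t(e^{\delta r_t}-1)>\delta\lambda_t r_t$, and your per-step bound $\lambda_{t+1}-\lambda_t\le\delta\lambda_t r_t+\delta^2\lambda_t r_t^+$ fails on every such step.

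This is not a slip that the constant slack absorbs. The bound you derive, $\sum_t\lambda_t r_t\ge-\initprice/\delta-\delta\sum_t\lambda_t r_t^+$, is false. Take $r_t=-1$ for all $t$. Your right-hand side is then $-\initprice/\delta$, while the revenue is $-\initprice\sum_{t=1}^{T}e^{-\delta(t-1)}$. As $T\to\infty$ this tends to $-\initprice/(1-e^{-\delta})$, which is below $-\initprice/\delta$ because $1-e^{-\delta}<\delta$. Concretely, $\delta=0.4$ and $T=5$ give about $-2.62\,\initprice<-2.5\,\initprice$. So the second-order loss on negative steps is real, and the factors $2$ and $4$ in the statement are exactly what pay for it.

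Your telescoping structure is the right one; only the negative side needs repair. Use $e^x\le 1+x+x^2$ for $|x|\le 1$ on both signs, together with $r_t^2\le|r_t|$. This gives $\lambda_{t+1}-\lambda_t\le\delta\lambda_t r_t+\delta^2\lambda_t|r_t|$.

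The quadratic term on negative steps cannot be charged to $r_t^+$. Instead, substitute $|r_t|=2r_t^+-r_t$, which folds it back into the linear revenue term:
\[
\lambda_{t+1}-\lambda_t ~\le~ \delta(1-\delta)\,\lambda_t r_t+2\delta^2\,\lambda_t r_t^+ .
\]
Now telescope, using $\lambda_{T+1}\ge 0$ and $\lambda_1=\initprice$. This yields $\delta(1-\delta)\sum_t\lambda_t r_t\ge-\initprice-2\delta^2\sum_t\lambda_t r_t^+$. Divide by $\delta(1-\delta)$ and use $1-\delta\ge 1/2$ to get exactly $-2\initprice/\delta-4\delta\sum_t\lambda_t r_t^+$.

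The missing idea is this absorption step, which costs the factor $1/(1-\delta)$. Your remark that the argument is pathwise, so the adversary's adaptivity is irrelevant, is correct and carries over unchanged.
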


For every $j \in [m]$, we apply \Cref{lma:noRegret} with $r_{t, j} = \frac{1}{n} \cdot (\realallocbase_j - (1 - \epsilon) \cdot B_j)$ and $\delta = \epsilon$, where the required condition $r_{t,j} \in [-1, 1]$ follows from the fact that it's impossible to consume more than $n$ unit of resource $j$ in one round, and $B_j \leq n$. Then, \Cref{lma:noRegret} is applicable by noting that $\lambda_{t,j} = \initprice \cdot \exp \left(\epsilon \cdot \sum_{\ell < t} \frac{1}{n} \cdot (\realallocbase_j - (1 - \epsilon) \cdot B_j)\right)$, and the inequality in \Cref{lma:noRegret} gives 
\[
\sum_{t \in [T]} \lambda^{(t)}_j \cdot (\realallocbase_j - (1 - \epsilon) \cdot B_j) ~\geq~ -\frac{2n \initprice}{\epsilon} - 4\epsilon \cdot \sum_{t \in [T]}  \lambda_{t, j} \cdot (\realallocbase_j - (1 - \epsilon) \cdot B_j)^+.
\]

Summing the above inequality for all $j \in [m]$, we have
\begin{align}
    \sum_{t \in [T]} \left \langle  \realalloct - (1 - \epsilon) \cdot  \B, \bpricet \right\rangle ~\geq~ -\frac{2nm \initprice}{\epsilon} - 4\epsilon \cdot \sum_{t \in [T]} \left \langle  \realalloct , \bpricet \right\rangle, \label{eq:furtheruse}
\end{align}
where we also use the fact that $(\realallocbase_j - (1 - \epsilon) \cdot B_j)^+ \leq \realallocbase_j$ to simplify the right hand side. Now we multiply $1 - \pbe(\Lambda)$ to both sides of \eqref{eq:furtheruse}, which gives
\begin{align*}
  &(1 - \pbe(\Lambda)) \cdot  \sum_{t \in [T]} \left \langle  \realalloct - (1 - \epsilon) \cdot  \B, \bpricet \right\rangle  \\
  ~\geq~& - (1 - \pbe(\Lambda)) \cdot \left(\frac{2nm \initprice}{\epsilon} + 4\epsilon \cdot \sum_{t \in [T]} \left \langle  \realalloct , \bpricet \right\rangle\right)\\
  ~\geq~&  - \frac{2nm \initprice}{\epsilon} - 4\epsilon \cdot \sum_{t \in [T]} \left \langle  \realalloct , \bpricet \right\rangle.
\end{align*}
Taking the expectation over $\Lambda$, we have
\begin{align}
    \E_{\Lambda} \left[(1 - \pbe(\Lambda)) \cdot  \sum_{t \in [T]} \left \langle  \realalloct - (1 - \epsilon) \cdot  \B, \bpricet \right\rangle\right] ~&\geq~ - \frac{2nm \initprice}{\epsilon} - 4\epsilon \cdot \E_{\Lambda} \left[\sum_{t \in [T]} \left \langle  \realalloct , \bpricet \right\rangle \right] \notag \\
    ~&=~ - \frac{2nm \initprice}{\epsilon} - 4\epsilon \cdot \E_{\Lambda} \left[\sum_{t \in [T]} \left \langle  \cons(\bpricet) , \bpricet \right\rangle \right] \notag \\
    ~&\geq~ - \frac{2nm \initprice}{\epsilon} - 4\epsilon \cdot \E_{\Lambda} \left[\algt(\Lambda) \right], \label{eq:second-1}
\end{align}
where the second line uses the fact that when $\bpricet$ is fixed, the realization of $\realalloct$, which only depends on samples $\{\gamma^{(t)}_i\}_{i \in [n]}$, is a random vector with $\cons(\bpricet)$ being its expectation, and the last line uses the fact that $\left \langle  \cons(\bpricet) , \bpricet \right\rangle$ represents the expected total revenue collected by presenting price vector $\bpricet$ to every request $i \in [n]$, which is further upper-bounded by the total welfare collected by presenting $\bpricet$ to every request $i \in [n]$.

It remains to upper-bound $\E_{\Lambda} \left[\algt(\Lambda) \right]$. To achieve this, we rely on the following \Cref{lma:algtsmall}:

\begin{restatable}{Lemma}{algtsmall}
    \label{lma:algtsmall} 
    For \Cref{alg:exp}, we have $\E_{\Lambda} \left[\algt(\Lambda) \right] \leq 2T \cdot \opt$.
\end{restatable}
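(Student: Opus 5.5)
The plan is to bound each summand $\wel(\bpricet)$ by comparing it with the configuration LP under a scaled budget. I will then show that, in expectation, the learning phase cannot spend many rounds at price vectors that heavily over-consume some resource.

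\textbf{Step 1 (per-round bound).} Fix a price vector $\bprice$. The best-response allocation $x_{i,\gamma_i,\theta}=\pr[\theta^*(\bprice;\gamma_i)=\theta]$ is feasible for \ref{program:ora-ub}$(\bD;\cons(\bprice))$, so $\wel(\bprice)\le \opt(\cons(\bprice))$. The LP value is concave and nondecreasing in the budget vector, and it vanishes at budget $\mathbf 0$. Hence for $k\ge 1$ we have $\opt(k\B)\le k\cdot\opt(\B)$. Taking $k=\max\{1,\max_j c_j(\bprice)/B_j\}$ gives
\[
\wel(\bprice)~\le~\Big(1+\sum_{j\in[m]}\frac{(c_j(\bprice)-B_j)^+}{B_j}\Big)\cdot\opt .
\]
It therefore suffices to prove
\[
\E_\Lambda\Big[\sum_{t\in[T]}\sum_{j\in[m]}\frac{(c_j(\bpricet)-B_j)^+}{B_j}\Big]~\le~T .
\]

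\textbf{Step 2 (over-consumption raises prices).} Take any round with $c_j(\bpricet)\ge B_j$. By Bernstein (\Cref{Bernstein-bounded}) and $B_j\ge\Omega(\log(nm\beta/\epsilon)\epsilon^{-2})$, the realized $\realallocbase_j$ exceeds $(1-\epsilon)B_j$ by at least a constant fraction of $(c_j(\bpricet)-B_j)^+ +\epsilon B_j$, except with tiny probability. In such a round the update multiplies $\price_j$ by at least $\exp\big(\tfrac{\epsilon}{n}\,\Omega((c_j-B_j)^+ +\epsilon B_j)\big)$. Alternatively, use the exact identity $\sum_t \price^{(t)}_j r_{t,j}\le (\price^{(T+1)}_j-\initprice)/\delta$, which follows from $e^{x}-1\ge x$. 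Either route shows that the accumulated excess $\sum_t(\realallocbase_j-(1-\epsilon)B_j)$ is at most $\tfrac{n}{\epsilon}\log(\price^{(T+1)}_j/\initprice)$. The difference between realized and expected excess is a martingale, and Freedman's inequality controls it with the same large-budget condition. So the total expected over-consumption $\sum_t (c_j(\bpricet)-B_j)^+$ is at most $O\big(\tfrac{n}{\epsilon}\log(\price^{\max}_j/\initprice)\big)$, where $\price^{\max}_j$ is the largest price reached. Dividing by $B_j$ and using $T\ge n$ and $B_j\ge \Omega(\log(nm\beta/\epsilon)\epsilon^{-2})$, this is at most $T/m$, provided $\log(\price^{\max}_j/\initprice)=O(\log(nm\beta/\epsilon))$.

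\textbf{Step 3 (prices never get too large; the main obstacle).} I need to show $\price_j^{(t)}\le \price^{\max}_j:=\poly(nm\beta/\epsilon)\cdot\initprice$, for instance $\price^{\max}_j= 2\opt$. The natural argument is this. Prices increase only after a round with $\realallocbase_j>(1-\epsilon)B_j$, and one round multiplies a price by at most $e^{\epsilon}$. So it suffices to show that, at any price vector with $\price_j\ge\opt$, we have $c_j(\bprice)\le(1-\epsilon)B_j/2$, after which concentration rules out further increases.

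To get this, I would first try a revenue bound: $\price_j c_j(\bprice)\le\langle\cons(\bprice),\bprice\rangle\le\wel(\bprice)$. I would then bound $\wel(\bprice)$ by Step 1 applied at a scale capped by $n/B_j$, which is circular but only mildly. If this is too lossy, the fallback is a direct LP argument. The requests that buy resource $j$ at a price of at least $\opt$ per unit form a set whose value per unit of $j$ exceeds $\opt$. Pushing a $B_j/(2c_j)$ fraction of their allocation into \ref{program:ora-ub} would then exceed $\opt$, provided the other resources' usage can be rescaled feasibly. Making that rescaling argument rigorous across all $m$ resources, without a circular dependence on the other prices, is the step I expect to be hardest.

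Combining Steps 1--3 gives $\E_\Lambda[\algt(\Lambda)]\le T\opt+T\opt=2T\cdot\opt$.
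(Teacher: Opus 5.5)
There is a genuine gap. It lies in Step~2, not only in the Step~3 you flagged. The update rule gives the exact identity $\sum_{t\in[T]}(\realallocbase_j-(1-\epsilon)B_j)=\frac{n}{\epsilon}\log(\price^{(T+1)}_j/\initprice)$. This controls only the \emph{net} excess, in which under-consuming rounds cancel over-consuming ones. It gives no bound on $\sum_t (c_j(\bpricet)-B_j)^+$, because prices can oscillate around the clearing level for all $T$ rounds while staying bounded.

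Concretely, take $n=mN$ requests with $N\gg B_j$. For each resource $j$, $N$ of the requests are single-minded on one unit of $j$ with value $p_j$ (say all $p_j=p$ and $B_j=B$). Then $c_j(\bprice)=N\cdot\one[\price_j<p_j]$, and each round changes $\log\price_j$ by at most $\epsilon/m$. After a negligible transient, $\log\price_j$ stays within $\epsilon/m$ of $\log p_j$. The identity then says the net excess is only $O(\frac{n}{\epsilon}\log(nm\beta/\epsilon))\ll TB_j$. This forces $\price_j<p_j$ in about a $(1-\epsilon)B_j/N$ fraction of rounds, each with $(c_j-B_j)^+=N-B_j$.

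Hence $\sum_t(c_j(\bpricet)-B_j)^+/B_j\approx(1-\epsilon)(1-B_j/N)\,T$ for every $j$. This is not the $O(\frac{n}{\epsilon B_j}\log(\price^{\max}_j/\initprice))\le T/m$ that Step~2 asserts, and the sum over $j$ is about $m(1-\epsilon)T>T$ once $m\ge 3$. This happens even though $\log(\price^{\max}_j/\initprice)=O(\log(nm\beta/\epsilon))$ here, so your Step~3 hypothesis holds. With unequal parameters the over-consuming rounds of different resources barely overlap, so replacing $\sum_j$ by $\max_j$ does not rescue it either. The lemma itself is true in this instance, since $\E_\Lambda[\algt(\Lambda)]\approx(1-\epsilon)T\cdot\opt$. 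What fails is Step~1: the factor $\max\{1,\cdot\}$ charges a full $\opt$ to every round, including the $\approx T$ rounds in which nothing is sold, and leaves no room for the over-consuming ones.

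The missing idea is to compare with \ref{program:ora-ub} in aggregate over the $T$ rounds rather than round by round. The cumulative consumption $\sum_t\realallocbase_j$ is exactly what your identity controls. The paper does this with a stopping time.
\begin{itemize}
\item Run the learning phase until some resource has been allocated $TB_j-n$ units.
\item Up to that point, the learning phase is an online allocation for the $T$-fold instance that never exceeds budget $T\B$. That instance's configuration LP has value $T\cdot\opt$ (average the allocation probabilities over the $T$ copies), so these rounds contribute at most $T\cdot\opt$.
\item Every later round contributes at most $\sum_i\E[\max_\theta v_i(\theta)]\le n\cdot\opt$.
\item Later rounds exist only if some $\sum_t\realallocbase_j$ exceeds $(1-3\epsilon/4)TB_j$. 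By \Cref{clm:realallocsmall} this has probability at most $\epsilon/n$, so the total is at most $T\cdot\opt+\epsilon T\cdot\opt\le 2T\cdot\opt$.
\end{itemize}
No upper bound on prices is used, so your Step~3 disappears. The tail bound argues in the opposite direction: large cumulative consumption forces $\price_j\ge\initprice\cdot(nm\beta/\epsilon)^{10}$, polynomially larger than $\opt$, on $\Omega(\epsilon TB_j)$ units. If you reproduce that step, compare against the trivial a priori bound $Tn\cdot\opt$ on expected learning-phase welfare rather than $T\cdot\opt$, which is what is being proved; the forced price is large enough for this.
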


The proof of \Cref{lma:algtsmall} is deferred to Section~\ref{subsec:proof-lma-algtsmall}. Applying \Cref{lma:algtsmall} to \eqref{eq:second-1} gives
\begin{align}
    \E_{\Lambda} \left[(1 - \pbe(\Lambda)) \cdot  \sum_{t \in [T]} \left \langle  \realalloct - (1 - \epsilon) \cdot  \B, \bpricet \right\rangle\right] ~\geq~ - \frac{2nm \initprice}{\epsilon} - 8\epsilon \cdot T \cdot  \opt. \label{eq:second-final}
\end{align}

\vspace{0.5em} \noindent \textbf{Bounding $\E_{\Lambda}[(1 - \pbe(\Lambda)) \cdot \eqref{eq:third}]$.} This is the case where the correlation issue arises. Note that if we simply want to lower bound $\E_{\Lambda}\left[\sum_{t \in [T]} \left \langle  \cons(\bpricet) - \realalloct , \bpricet \right\rangle\right]$, the expectation is exactly zero. The key observation is that after the first $t - 1$ rounds in the learning phase are finished, the price vector $\bpricet$ and the expected consumption $\cons(\bpricet)$ are both fixed, and therefore the expectation is zero because $\realalloct$ is an unbiased estimator of $\cons(\bpricet)$. When the factor $(1 - \pbe(\Lambda))$ is multiplied, note that the price vectors $\bprice^{(t+1)} , \cdots, \bprice^{(T)}$ all depend on the realization of $\realalloct$. Therefore, $\pbe(\Lambda)$ is correlated to the inner product $\left \langle  \cons(\bpricet) - \realalloct , \bpricet \right\rangle$, making it impossible to simply treat the expectation $\E_{\Lambda}\left[(1 - \pbe(\Lambda)) \cdot \sum_{t \in [T]} \left \langle  \cons(\bpricet) - \realalloct , \bpricet \right\rangle\right]$  as $0$.

To bypass this correlation issue, we crucially rely on the observation that after the price vector $\bpricet$ is fixed, the realization $\realalloct$ in the learning phase should be close to the expectation $\cons(\bpricet)$. This is simply because $\realalloct = \sum_{i \in [n]}\realallocit$, while each $\realallocit$ is an unbiased estimator of $\cons_i(\bpricet)$. To apply this idea, fix $\Lambda$, we first lower-bound $(1 - \pbe(\Lambda)) \cdot \eqref{eq:third}$ as 
\begin{align}
    (1 - \pbe(\Lambda)) \cdot \sum_{t \in [T]} \left \langle  \cons(\bpricet) - \realalloct , \bpricet \right\rangle ~&=~ -(1 - \pbe(\Lambda)) \cdot \sum_{t \in [T]} \left \langle  \realalloct - \cons(\bpricet) , \bpricet \right\rangle \notag\\
    ~&\geq~ -(1 - \pbe(\Lambda)) \cdot \sum_{t \in [T]} \left \langle  (\realalloct - \cons(\bpricet))^+ , \bpricet \right\rangle \notag \\
    ~&\geq~ -\sum_{t \in [T]} \left \langle  (\realalloct - \cons(\bpricet))^+ , \bpricet \right\rangle, \label{eq:third-1}
\end{align}
where notation $r^+$ represents $\max\{r, 0\}$ when $r$ is a single value, and we extend this notation to a vector, where notation $\boldsymbol{r}^+$ represents the vector obtained by taking the maximum of each coordinate and $0$, i.e., $\boldsymbol{r}^+ = (r_1, \cdots, r_m)^+ = (r_1^+, \cdots, r^+_m)$.

The benefit of introducing \eqref{eq:third-1} is to use the non-negativity of $(\realalloct -\cons(\bpricet) )^+$ to eliminate the term $\pbe(\Lambda)$. Then, the following lemma suggests that the expectation of \eqref{eq:third-1} is not too small:

\begin{restatable}{Lemma}{plussmall}
    \label{lma:plussmall}
    For \Cref{alg:exp}, we have 
    \[
    \E_{\Lambda}\left[\sum_{t \in [T]} \left \langle  (\realalloct - \cons(\bpricet))^+ , \bpricet \right\rangle \right] ~\leq~ \E_{\Lambda}\left[\sum_{t \in [T]} \left \langle  \epsilon \cdot \cons(\bpricet) + \epsilon \cdot \B , \bpricet \right\rangle \right].
    \]
\end{restatable}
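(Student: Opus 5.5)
The plan is to prove the bound round by round, conditioning on the history of the learning phase. Fix $t$ and condition on rounds $1,\dots,t-1$. Then $\bpricet$ is determined, and $\realalloct=\sum_{i\in[n]}\realallocit$ is a sum of $n$ independent random vectors with $\realallocit\in[0,1]^m$ and $\E[\realalloct\mid\bpricet]=\cons(\bpricet)$. Since $\bpricet\ge 0$, it suffices to show, for each resource $j$, that
\[
\E\left[(\realallocbase_j-c_j(\bpricet))^+ ~\Big|~ \bpricet\right] ~\le~ \epsilon\cdot c_j(\bpricet)+\epsilon\cdot B_j .
\]
Multiplying by $\price^{(t)}_j$, summing over $j$ and $t$, and then taking the outer expectation over $\Lambda$ (tower rule) gives the lemma. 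In this way the adaptivity of the prices is absorbed entirely by the conditioning.

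For the per-coordinate bound, write $X=\realallocbase_j$ and $\mu=c_j(\bpricet)$. Then $X$ is a sum of independent $[0,1]$ variables with variance at most $\mu$. Use the layer-cake formula
\[
\E[(X-\mu)^+]=\int_0^\infty \pr[X-\mu>s]\,ds,
\]
and split it at $s_0=\epsilon(\mu+B_j)/2$. The part $s\le s_0$ contributes at most $s_0$. For $s>s_0$, Bernstein's inequality (\Cref{Bernstein-bounded}) gives $\pr[X-\mu>s]\le \exp\!\big(-\tfrac{s^2/2}{\mu+s/3}\big)$. Integrating this tail from $s_0$ to $\infty$ gives roughly $\sqrt{\mu}\,e^{-s_0^2/(2\mu)}+ e^{-s_0}$ up to constants.

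Since $s_0\ge \epsilon B_j/2$ and $s_0^2/\mu\ge \epsilon^2(\mu+B_j)^2/(4\mu)\ge \epsilon^2 B_j$, the large-budget assumption $B_j\ge\Omega(\log(nm\beta/\epsilon)\epsilon^{-2})$ makes the tail term at most, say, $\epsilon B_j/2$. This holds because $\sqrt{\mu}\le\sqrt n$ and $e^{-\Omega(\epsilon^2 B_j)}\le 1/\mathrm{poly}(n)$. Altogether we get $\E[(X-\mu)^+]\le \epsilon(\mu+B_j)$.

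As a cruder alternative, one could use Cauchy--Schwarz, $\E[(X-\mu)^+]\le\sqrt{\var(X)}\le\sqrt{\mu}$, and then AM-GM, $\sqrt\mu\le \epsilon\mu/2+1/(2\epsilon)\le \epsilon\mu+\epsilon B_j$, once $B_j\ge\epsilon^{-2}$. This is actually cleaner and avoids Bernstein entirely, so I would try it first and keep the Bernstein route only as a fallback.

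The main obstacle is justifying the conditioning step. We need $\realalloct$, given the past, to be an unbiased sum of independent per-request consumptions under the same consistent tie-breaking rule that defines $\cons(\bpricet)$. This holds because the round-$t$ samples are fresh and independent of $\bpricet$. Beyond that, only the requirement $B_j\ge\epsilon^{-2}$ from the large-budget assumption is needed.
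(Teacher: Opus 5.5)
Your proposal is correct. The Cauchy--Schwarz route you say you would try first is exactly the paper's proof: condition on the first $t-1$ rounds, bound $\E[(\realallocbase_j-\consbase_j(\bpricet))^+]\le\sqrt{\consbase_j(\bpricet)}$ via the variance of a sum of independent $[0,1]$-valued consumptions, and absorb $\sqrt{\consbase_j(\bpricet)}$ into $\epsilon\cdot\consbase_j(\bpricet)+\epsilon\cdot B_j$ using $B_j\ge\epsilon^{-2}$ (the paper splits into cases on whether $\consbase_j(\bpricet)\le B_j$ rather than using AM--GM). Your Bernstein layer-cake argument also goes through (with somewhat different exponent constants than you wrote), but it needs the $\log n$ large-budget condition and is unnecessary here.
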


The proof of \Cref{lma:plussmall} relies on the observation that $\realalloct$ should be close to its expectation $\cons(\bpricet)$, because the realization of $\realalloct$ consists of the realization of $n$ independent requests, and therefore $(\realalloct - \cons(\bpricet))^+$ can't be too large. We defer the proof of \Cref{lma:plussmall} to Section~\ref{subsec:proof-lma-plussmall}.

Now we take the expectation over $\Lambda$ for \eqref{eq:third-1} and apply \Cref{lma:plussmall}, which gives


\begin{align}
    \E_{\Lambda}\left[(1 - \pbe(\Lambda)) \cdot\sum_{t \in [T]} \left \langle  \cons(\bpricet) - \realalloct , \bpricet \right\rangle \right] ~&\geq~ -\E_{\Lambda}\left[\sum_{t \in [T]} \left \langle  \epsilon \cdot \cons(\bpricet) + \epsilon \cdot \B , \bpricet \right\rangle \right]. \label{eq:third-2}
\end{align}

The expectation $\E_{\Lambda}\left[\sum_{t \in [T]} \left \langle  \cons(\bpricet) , \bpricet \right\rangle \right]$ can be further upper-bounded by $\E_{\Lambda}[\algt(\Lambda)]$, because $\left \langle  \cons(\bpricet) , \bpricet \right\rangle$ represents the expected revenue collected by presenting $\bpricet$ to every request $i \in [n]$, which is further upper-bounded by the welfare collected by presenting $\bpricet$ to every request $i \in [n]$. 
Therefore, it remains to upper-bound $\E_{\Lambda}\left[\sum_{t \in [T]} \left \langle  \B , \bpricet \right\rangle \right]$. 
Note that rearranging \eqref{eq:furtheruse} implies that for every fixed $\Lambda$, we have
\begin{align*}
    \sum_{t \in [T]} \left \langle  \B , \bpricet \right\rangle ~&\leq~ \frac{1}{1- \epsilon} \cdot \sum_{t \in [T]} \left \langle  \realalloct , \bpricet \right\rangle + \frac{2nm \initprice}{\epsilon \cdot (1 - \epsilon)} + \frac{4\epsilon}{1 - \epsilon} \cdot \sum_{t \in [T]} \left \langle  \realalloct , \bpricet \right\rangle \\
    ~&\leq~ 6\sum_{t \in [T]} \left \langle  \realalloct , \bpricet \right\rangle + \frac{4nm \initprice}{\epsilon},
\end{align*}
where the second inequality holds when $\epsilon \leq 0.5$. Then, taking expectation over $\Lambda$ on both sides of the above inequality and applying to \eqref{eq:third-2} gives
\begin{align*}
    \E_{\Lambda}\left[(1 - \pbe(\Lambda)) \cdot\sum_{t \in [T]} \left \langle  \cons(\bpricet) - \realalloct , \bpricet \right\rangle \right] ~&\geq~ -7\epsilon \cdot \E_{\Lambda}[\algt(\Lambda)] - 4nm\initprice,
\end{align*}
where we also use 
\[
\E_{\Lambda}\left[\sum_{t \in [T]} \left \langle  \realalloct , \bpricet \right\rangle \right] = \E_{\Lambda}\left[\sum_{t \in [T]} \left \langle  \cons(\bpricet) , \bpricet \right\rangle \right] \leq \E_{\Lambda}[\algt(\Lambda)] 
\]
in the above inequality. Finally, applying \Cref{lma:algtsmall} gives
\begin{align}
     \E_{\Lambda}\left[(1 - \pbe(\Lambda)) \cdot\sum_{t \in [T]} \left \langle  \cons(\bpricet) - \realalloct , \bpricet \right\rangle \right] ~&\geq~ -14\epsilon \cdot T \cdot \opt - 4nm\initprice. \label{eq:third-final}
\end{align}

\vspace{0.5em} \noindent \textbf{Putting everything together.} Summing \eqref{eq:first-final}, \eqref{eq:second-final}, and \eqref{eq:third-final}, we have

\begin{align*}
    \E_\Lambda \left[\alg(\Lambda)\right] ~\geq~ \E_{\Lambda}\left[(1 - \pbe(\Lambda)) \cdot \algt(\Lambda)\right] ~&\geq~ \E_{\Lambda}\left[(1 - \pbe(\Lambda)) \cdot (\eqref{eq:first} + \eqref{eq:second} + \eqref{eq:third})\right] \\
    ~&\geq~ (1 - 26 \epsilon) \cdot T \cdot \opt - \frac{2nm \initprice}{\epsilon} - 4nm \initprice \\
    ~&\geq~ (1 - 32\epsilon) \cdot  T \cdot \opt,
\end{align*}
where the last inequality uses the definition of $\initprice=\esopt \cdot \frac{\epsilon^2}{m}$, the fact that $T \geq n$, and the assumption that $\opt \geq \widehat \opt$.

\bibliographystyle{alpha}
\bibliography{ref.bib}

\appendix
\section{Basic Probability Inequalities}

\begin{Theorem}[Generalized Freedman's Inequality, single-dimension case of Corollary 1.(b) in \cite{howard2020time}]
\label{thm:freedman} Let $\{Y_t\}_{t \in [T]}$ be a single-dimensional martingale whose increment satisfies $|\Delta Y_t| = |Y_t - Y_{t-1}| \leq 1$ for all $t$. Let $V_t = \sum_{i=1}^t \E[(Y_i - Y_{i-1})^2|\hist_{i-1}]$ be a random variable that represents the conditional variance of the martingale process. Then, for any $\eta, \zeta > 0$, we have
\begin{align*}
    \pr \left[\exists t \in [T]: |Y_t| \geq \eta + \frac{\eta/\zeta - \log(1 + \eta/\zeta)}{\log (1 + \eta/\zeta)} \cdot (V_t - \zeta)\right] ~\leq~ 2\exp \left(-\frac{\eta^2}{\zeta + \eta/3}\right).
\end{align*}
\end{Theorem}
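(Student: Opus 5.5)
The plan is the standard exponential-supermartingale (Freedman/Bennett) argument, combined with Ville's maximal inequality, and then a two-sided union bound.

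\textbf{Step 1: a one-step exponential bound.} Set $\psi(\lambda) := e^{\lambda} - \lambda - 1$. Fix $\lambda \geq 0$. The function $x \mapsto (e^{\lambda x} - 1 - \lambda x)/x^2$ is nondecreasing in $x$, so for $x \leq 1$ we have
\[
e^{\lambda x} \leq 1 + \lambda x + \psi(\lambda) x^2 .
\]
Apply this to the increment $X = \Delta Y_t$, which satisfies $\E[X \mid \hist_{t-1}] = 0$ and $|X| \leq 1$. Taking conditional expectations and using $1+y \leq e^{y}$ gives
\[
\E\left[e^{\lambda \Delta Y_t} \mid \hist_{t-1}\right] \leq \exp\left(\psi(\lambda)\, \E[(\Delta Y_t)^2 \mid \hist_{t-1}]\right).
\]
Consequently $M_t := \exp(\lambda Y_t - \psi(\lambda) V_t)$ is a nonnegative supermartingale with $M_0 = 1$ (taking $Y_0 = 0$). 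Here we use that $V_t$ is predictable, so the conditional-variance increment can be pulled out of the conditional expectation.

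\textbf{Step 2: Ville's inequality and the choice of $\lambda$.} Ville's maximal inequality for nonnegative supermartingales gives $\pr[\exists t: M_t \geq 1/\delta] \leq \delta$. Equivalently, with probability at least $1-\delta$, for all $t$ simultaneously,
\[
Y_t < \frac{\log(1/\delta)}{\lambda} + \frac{\psi(\lambda)}{\lambda}\, V_t .
\]
Let $u := \eta/\zeta$ and choose $\lambda = \log(1+u)$. Then the slope is
\[
\frac{\psi(\lambda)}{\lambda} = \frac{u - \log(1+u)}{\log(1+u)},
\]
which is exactly the slope in the statement. Choose $\delta$ so that the line passes through the point $(V_t, Y_t) = (\zeta, \eta)$, that is, $\log(1/\delta) = \lambda\eta - \psi(\lambda)\zeta$. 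A direct computation shows
\[
\log(1/\delta) = \zeta\left((1+u)\log(1+u) - u\right) =: \zeta\, h(u).
\]
With these choices the boundary becomes exactly $\eta + \frac{\eta/\zeta - \log(1+\eta/\zeta)}{\log(1+\eta/\zeta)}(V_t - \zeta)$.

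\textbf{Step 3: two sides and the tail estimate.} The process $-Y_t$ satisfies the same hypotheses with the same $V_t$, so applying Steps 1–2 to it and taking a union bound gives the event $|Y_t| \geq \cdots$ with probability at most $2\exp(-\zeta h(u))$. It remains to lower-bound $\zeta h(u)$ using the elementary inequality $h(u) \geq \frac{u^2}{2(1+u/3)}$. This yields
\[
\zeta h(u) \geq \frac{\eta^2}{2(\zeta + \eta/3)} .
\]

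\textbf{Main obstacle: the constant.} The bound I can derive this way is $2\exp\!\left(-\frac{\eta^2}{2(\zeta+\eta/3)}\right)$, whereas the statement as written omits the factor $2$ in the denominator. Since $h(u) \sim u^2/2$ as $u \to 0$, I do not expect the stated sharper constant to hold in general. I would first double-check the normalization used in \cite{howard2020time}, for example whether their variance process is defined with a factor $2$. Failing that, I would prove the version with $2(\zeta + \eta/3)$ and verify that the downstream applications (the proof of \Cref{lma:pbesmall}) only lose a constant factor in the large-budget requirement $B_j \geq \Omega(\log(nm\beta/\epsilon)\cdot\epsilon^{-2})$.
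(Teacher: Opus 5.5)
Your argument is correct and is the standard proof behind the cited result; the paper gives no proof of its own, only the citation to \cite{howard2020time}. The steps are: Ville's inequality for the supermartingale $\exp(\lambda Y_t-\psi(\lambda)V_t)$; the choice $\lambda=\log(1+\eta/\zeta)$, so the crossing line passes through $(\zeta,\eta)$, giving $\exp(-\zeta h(\eta/\zeta))$ per side; then the Bennett-to-Bernstein relaxation.

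Your ``main obstacle'' is not a gap in your proof but an error in the statement. The bound as written, with $\zeta+\eta/3$ rather than $2(\zeta+\eta/3)$ in the denominator, is false, so there is nothing further to find. A one-step example settles it. Take $T=1$, $Y_0=0$, $Y_1=\pm 1$ with probability $1/2$ each, and $\eta=\zeta=1$. Then $V_1=1=\zeta$, so the threshold at $t=1$ is exactly $\eta=1$, and the event $|Y_1|\geq 1$ has probability $1$. The claimed right-hand side is $2e^{-3/4}\approx 0.945$.

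This is not an edge effect. For a simple random walk with $\zeta=N$ and $\eta=2\sqrt{N}$, the probability tends to $\pr[|Z|\geq 2]\approx 0.046$ ($Z$ standard normal), while the claimed bound tends to $2e^{-4}\approx 0.037$. This matches your remark that $h(u)\sim u^2/2$. Since the counterexample uses the statement's own $V_t$, you need not check the source's normalization. The correct inequality is the one you derived:
\[
\pr\Bigl[\exists t\in[T]:\ |Y_t|\geq \eta+\tfrac{\eta/\zeta-\log(1+\eta/\zeta)}{\log(1+\eta/\zeta)}\,(V_t-\zeta)\Bigr]\ \leq\ 2\exp\Bigl(-\frac{\eta^2}{2(\zeta+\eta/3)}\Bigr).
\]

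Your plan for the downstream use also works, and in fact loses nothing. The theorem is used only in the proof of Claim~\ref{clm:real-exp-diff-small}, with $\eta=0.1\epsilon B_j$ and $\zeta=B_j$. The corrected exponent is $\frac{0.01\,\epsilon^2 B_j}{2(1+\epsilon/30)}\geq \epsilon^2 B_j/1000$ for $\epsilon\leq 0.5$. So the paper's intermediate bound $2\exp(-\epsilon^2B_j/1000)\leq\epsilon/m$ and its budget requirement stand unchanged, not merely up to constants.
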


\begin{Theorem}[Bernstein's Inequality for Bounded Variables]
\label{Bernstein-bounded}
Let $X_1, \ldots, X_N$ be independent mean-zero random variables such that $|X_i| \leq M$ for all $i$ and $\sigma^2 := \sum_{i \in [N]} \E[X^2_i]$ be the variance of $\sum_{i \in [N]} X_i$. Then, for any $\varepsilon \geq 0$,  we have
\begin{align*}
    \pr \left[\sum_{i = 1}^N X_i \geq \varepsilon \right] ~&\leq~ \exp\left(- \frac{\varepsilon^2/2}{\sigma^2 + M\varepsilon/3}\right) \quad \text{and} \\
    \pr \left[\sum_{i = 1}^N X_i \leq -\varepsilon \right] ~&\leq~ \exp\left(- \frac{\varepsilon^2/2}{\sigma^2 + M\varepsilon/3}\right) \enspace .
\end{align*}
\end{Theorem}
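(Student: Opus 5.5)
We prove the upper tail by the Cramér--Chernoff exponential moment method and obtain the lower tail by applying the same bound to the variables $-X_1,\dots,-X_N$. These are again independent, mean-zero, bounded by $M$ in absolute value, and have the same second moments, so the second inequality follows verbatim from the first.

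\textbf{Step 1: reduction to moment generating functions.} Fix $\lambda \in (0, 3/M)$. By Markov's inequality applied to $e^{\lambda \sum_i X_i}$ and independence,
\[
\pr\left[\sum_{i=1}^N X_i \geq \varepsilon\right] ~\leq~ e^{-\lambda\varepsilon} \prod_{i=1}^N \E\left[e^{\lambda X_i}\right].
\]

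\textbf{Step 2: bound each moment generating function.} This is the only step needing care. Expand $e^{\lambda X_i} = 1 + \lambda X_i + \sum_{k\geq 2} \lambda^k X_i^k / k!$ and use $\E[X_i]=0$. Bound $|\E[X_i^k]| \leq M^{k-2}\E[X_i^2]$, using $|X_i|\le M$. Together with $k! \geq 2\cdot 3^{k-2}$ for $k\geq 2$, this gives
\[
\E\left[e^{\lambda X_i}\right] ~\leq~ 1 + \frac{\lambda^2 \E[X_i^2]}{2}\sum_{k\geq 2}\left(\frac{\lambda M}{3}\right)^{k-2} ~=~ 1 + \frac{\lambda^2 \E[X_i^2]}{2(1-\lambda M/3)} ~\leq~ \exp\left(\frac{\lambda^2 \E[X_i^2]}{2(1-\lambda M/3)}\right),
\]
where the last step is $1+x\leq e^x$. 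Taking the product over $i$ replaces $\E[X_i^2]$ by $\sigma^2$.

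\textbf{Step 3: optimize $\lambda$.} Combining the two steps,
\[
\pr\left[\sum_{i=1}^N X_i \geq \varepsilon\right] ~\leq~ \exp\left(-\lambda\varepsilon + \frac{\lambda^2\sigma^2}{2(1-\lambda M/3)}\right).
\]
Choose $\lambda = \varepsilon/(\sigma^2 + M\varepsilon/3)$. This satisfies $\lambda M/3 < 1$, and $1-\lambda M/3 = \sigma^2/(\sigma^2+M\varepsilon/3)$. Substituting, the exponent becomes
\[
-\frac{\varepsilon^2}{\sigma^2+M\varepsilon/3} + \frac{\varepsilon^2}{2(\sigma^2+M\varepsilon/3)} ~=~ -\frac{\varepsilon^2/2}{\sigma^2+M\varepsilon/3},
\]
which is the claimed bound.

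The degenerate cases are handled directly. If $\sigma^2=0$, then every $X_i=0$ almost surely, so the probability is $0$ for $\varepsilon>0$. If $\varepsilon=0$, the right-hand side equals $1$ and the bound is trivial.

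The main obstacle is the moment generating function estimate in Step 2, specifically the factorial bound $k!\ge 2\cdot 3^{k-2}$ that turns the tail of the series into a geometric sum. The remaining algebra is routine.
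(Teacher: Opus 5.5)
The paper does not prove this statement. It appears in the appendix as a standard probability tool and is simply invoked, always with $M=1$ and an upper bound substituted for the variance.

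Your Cram\'er--Chernoff argument is the standard textbook proof and it is correct. The moment bound $|\E[X_i^k]|\le M^{k-2}\E[X_i^2]$ together with $k!\ge 2\cdot 3^{k-2}$ gives $\E[e^{\lambda X_i}]\le \exp\bigl(\lambda^2\E[X_i^2]/(2(1-\lambda M/3))\bigr)$ for $0<\lambda<3/M$. The choice $\lambda=\varepsilon/(\sigma^2+M\varepsilon/3)$ then yields exactly the stated exponent, and the lower tail follows by applying the same bound to $-X_1,\dots,-X_N$.

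Two small remarks. First, the right-hand side is increasing in $\sigma^2$, so the bound remains valid with any $v\ge\sigma^2$ in its place. That is precisely the form the paper uses, for instance when it replaces the variance by $B_j$. Second, the only genuinely degenerate case is $\sigma^2=\varepsilon=0$, where the right-hand side is the undefined $0/0$. This is a quirk of the statement, not of your proof.
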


\section{Omitted proofs in \Cref{sec:sample}}
\label{sec:sample-omitted}

\subsection{Proof of \Cref{lma:budget-to-value}}
\label{subsec:lma-budget-to-value}

\begin{proof}
    Given the online resource allocation instance with generic request distributions $\bD = (\D_1, \cdots, \D_n)$ and budget $\B$,  we fix $\bprice$ to be the price vector that satisfies
    \[
    \cons(\bprice) ~\in~ [(1 - 3\epsilon) \cdot \B, (1 - \epsilon) \cdot \B].
    \]
    Note that $\bprice$ is a feasible solution of \ref{program:dual}$(\bD, \B)$ with the utility variables
    \[
    u_{i, \gamma_i} ~:=~ \max_{\theta \in \Theta_i} v_i(\theta) - \langle \bprice, \alloc_i(\theta) \rangle,
    \]
    where the feasibility naturally follows from the definitions of $ u_{i, \gamma_i}$ and the fact that the null action $\phi\in\Theta_i$, which guarantees $ u_{i, \gamma_i} \geq 0$. Since $u_{i, \gamma_i}$ also represents the utility when request $\gamma_i$ faces the price vector $\bprice$, we have
    \begin{align*}
        \wel(\bprice) ~&=~ \langle \bprice, \cons(\bprice) \rangle + \sum_{i \in [n]} \E_{\gamma_i \sim \D_i}[u_{i, \gamma_i}] \\
        ~&\geq~ (1 - 3\epsilon) \cdot \langle \bprice, \B \rangle + (1 - 3\epsilon) \cdot \sum_{i \in [n]} \E_{\gamma_i \sim \D_i}[u_{i, \gamma_i}] \\
        ~&\geq~ (1 - 3\epsilon)\cdot \opt,
    \end{align*}
    where the first line follows the fact that the total welfare can be decomposed as the total revenue plus the total utilities collected from each request, the second line uses the fact that $\cons(\bprice) \geq (1 - 3\epsilon) \cdot \B$ and the non-negativity of $u_{i, \gamma_i}$, and the last line follows the weak duality and the fact that $(\bprice, \{u_{i, \gamma_i}\})$ is a feasible solution of \ref{program:dual}$(\bD, \B)$.

    It remains to show that a static pricing mechanism with $\bprice$ being the pricing vector gets an expected welfare of at least $(1 - 4\epsilon) \cdot \opt$. For $i \in [n], j \in [m]$, define random variable 
    \[
    \rho_{i, j} ~:=~ a_{i, j} \left(\theta^*(\bprice; \gamma_i)\right) ~:~ \gamma_i = (v_i, \alloc_i, \Theta_i) \sim \D_i
    \]
    to be the random consumption of resource $j$ by request $i$ when facing price vector $\bprice$. We will show that for every $j \in [m]$, we have
    \begin{align}
        \pr \left[\sum_{i \in [n]} \rho_{i, j} > (1 - 0.5\epsilon) \cdot B_j\right] ~\leq~ \frac{\epsilon}{m}. \label{eq:bounded-consumption}
    \end{align}
    Proving \eqref{eq:bounded-consumption} is sufficient, as when \eqref{eq:bounded-consumption} holds, the expected welfare achieved by the static pricing mechanism can be lower bounded by
    \begin{align*}
        &\sum_{i \in [n]} \pr \left[\text{every resource } j\text{ has at least } 1 \text{ unit before }i \text{ arrives} \right] \cdot \E_{\gamma_i} \left[ v_i\big(\theta^*(\bprice; \gamma_i)\big)\right] \\
        ~\geq~& \sum_{i \in [n]} \left(1 - \sum_{j \in [m]} \pr \left[\sum_{i \in [n]} \rho_{i, j} > B_j - 1\right]\right) \cdot \E_{\gamma_i} \left[ v_i\big(\theta^*(\bprice; \gamma_i)\big)\right] \\
        ~\geq~& \sum_{i \in [n]} \left(1 - \sum_{j \in [m]} \pr \left[\sum_{i \in [n]} \rho_{i, j} > (1 - 0.5\epsilon) \cdot B_j\right]\right) \cdot \E_{\gamma_i} \left[ v_i\big(\theta^*(\bprice; \gamma_i)\big)\right] \\
        ~\geq~& (1 - \epsilon) \cdot \sum_{i \in [n]} \E_{\gamma_i} \left[ v_i\big(\theta^*(\bprice; \gamma_i)\big)\right] ~=~ (1 - \epsilon) \cdot \wel(\bprice) ~\geq~ (1 - 4\epsilon) \cdot \opt,
    \end{align*}
    where the second line follows from the union bound, the third line holds when $B_j \geq 2\epsilon^{-2}$, the last line applies \eqref{eq:bounded-consumption} and the fact that $\wel(\bprice) \geq (1 - 3\epsilon) \cdot \opt$.

    It remains to prove \eqref{eq:bounded-consumption}. Note that $\{\rho_{i,j} - \E[\rho_{i,j}]\}$ is a set of mean-zero random variables bounded in $[-1, 1]$, and by the definition of $\rho_{i, j}$, we have
    \[
    \sum_{i \in [n]} \E[\rho_{i, j}] ~=~ c_j(\bprice) ~\leq~ (1 - \epsilon) \cdot B_j,
    \]
    where the inequality uses the assumption that $\cons(\bprice) \leq (1 - \epsilon) \cdot \B$. Then, we have
\[
\var \left(\sum_{i \in [n]} \rho_{i, j} - \E[\rho_{i, j}]\right) ~=~ \sum_{i \in [n]} \var \left(\rho_{i, j} \right) ~\leq~ \sum_{i \in [n]} \E[\rho_{i, j}^2] ~\leq~ c_j(\bprice) ~\leq~ B_j
\]
By Bernstein's Inequality (\Cref{Bernstein-bounded}), we have
\begin{align*}
    \pr \left[\sum_{i \in [n]} \rho_{i, j} > (1 - 0.5\epsilon) \cdot B_j \right] ~&\leq~ \pr \left[\sum_{i \in [n]} \rho_{i, j} - c_j(\bprice) >  0.5 \epsilon \cdot B_j \right] \\
    ~&\leq~ \exp \left(- \frac{0.25\epsilon^2 \cdot B^2_j/2}{ B_j + 0.5\epsilon \cdot B_j/3}\right) ~\leq~ \frac{\epsilon}{m},
\end{align*}
where the last inequality holds when $\epsilon \leq 0.5$ and $B_j \geq 20 \log (m /\epsilon) \cdot \epsilon^{-2}$.
\end{proof}

\subsection{Proof of \Cref{lma:uniform-convergence}}
\label{subsec:lma-uniform-convergence}

\begin{proof}
    To prove \Cref{lma:uniform-convergence}, it's sufficient to show that for every $j \in [m]$, we have
    \begin{align}
        \pr\left[\sup_{\bprice} \left| \frac{\hat c'_j(\bprice)}{2B_j} - \frac{c'_j(\bprice)}{2B_j} \right| \geq \frac{\epsilon}{4}\right] ~\leq~ \frac{\delta}{m}. \label{eq:uniform-convergence}
    \end{align}
    Then, applying the union bound over all $j \in [m]$ proves \Cref{lma:uniform-convergence}.

    We now prove \eqref{eq:uniform-convergence}. Fix $j$. Let $\F$ be a class of functions, where each function $f_{\bprice}(\bgamma) \in \F$ is parameterized by a price vector $\bprice$, and maps $\bgamma = (\gamma_1, \cdots, \gamma_n)$ to $[0, 1]$, such that 
    \[
    f_{\bprice}(\bgamma) ~:=~ \frac{1}{2B_j} \cdot \min \left \{2B_j, \sum_{i \in [n]} a_{i,j}\big(\theta^*(\bprice; \gamma_i)\big)\right \}
    \]

    Let distribution $\bD$ be the original distribution of the given online resource allocation in \Cref{lma:uniform-convergence} and $\widehat \bD$ be the empirical distribution constructed via $N = C\cdot \frac{m}{\epsilon^2}\log^2\left(\frac{mn|\Theta|_{\max}}{\epsilon}\right)$ samples. For $f \in \F$, we define 
    \begin{align*}
        R(f_{\bprice}) ~:=~ \E_{\bgamma \sim \bD} [f_{\bprice}(\bgamma)] \qquad \text{and} \qquad \widehat R(f_{\bprice}) ~:=~ \E_{\bgamma \sim \widehat \bD} [f_{\bprice}(\bgamma)].
    \end{align*}

    Let $D$ be the Pseudo Dimension upper bound of function class $\F$. We will later show that
    \begin{align}
        D \leq O(m \log (m \cdot n \cdot |\Theta|_{\max})). \label{eq:pseudo}
    \end{align}
    Then, the standard uniform convergence bound via Pseudo Dimension (e.g., see \cite{Pollard-12}) gives that with probability at least $1 - \epsilon/m$, we have
    \begin{align*}
        \sup_{f_{\bprice} \in  \F} \big |R(f_{\bprice}) - \widehat R(f_{\bprice}) \big | ~\leq~ \sqrt{\frac{2D \log (2eN) + 2\log (4m/\epsilon)}{N}} ~\leq~ \frac{\epsilon}{4},
    \end{align*}
    where the last inequality holds when 
    \[
    N \geq C \cdot \frac{m}{\epsilon^2} \cdot \log (mn |\Theta|_{\max}/\epsilon) \cdot \log (m \cdot \epsilon^{-2} \cdot \log (mn |\Theta|_{\max}/\epsilon)) ~\geq~ C \cdot \frac{m}{\epsilon^2} \cdot \log^2 (mn |\Theta|_{\max}/\epsilon)
    \]
    for a sufficiently large $C$.

    It remains to show \eqref{eq:pseudo}. Let $\bgamma^{(1)}, \cdots, \bgamma^{(D)}$ be $D$ request profiles and let $(y^{(1)}, \cdots, y^{(D)}) \in [0, 1]^D$ be a set of thresholds that are shattered, that is, for every configuration $(t^{(1)}, \cdots, t^{(D)})$ such that each $t^{(d)} \in \{0, 1\}$, there exists a price vector $\bprice \in \R^m_{\geq 0}$, such that 
    \[
    \one[f_{\bprice}(\bgamma^{(d)}) > y^{(d)}] = t^{(d)}
    \]
    for every $d \in [D]$.

    Now consider two price vectors $\bprice, \bprice'$, such that there exists at least a $d \in [D]$ that satisfies $\one[f_{\bprice}(\bgamma^{(d)}) > y^{(d)}] \neq \one[f_{\bprice'}(\bgamma^{(d)}) > y^{(d)}]$. Then, there must exist at least a tuple $(d, i, \theta, \theta')$, such that $d \in [D], i \in [n], \theta, \theta' \in \Theta^{(d)}_i$, such that action $\theta$ is preferred under price vector $\bprice$, while $\theta'$ is preferred under $\bprice'$, i.e., we have
    \begin{align*}
        v^{(d)}_i(\theta) - \langle \alloc^{(d)}_i(\theta), \bprice \rangle ~&\geq~ v^{(d)}_i(\theta') - \langle \alloc^{(d)}_i(\theta'), \bprice \rangle, \quad \text{while} \\
        v^{(d)}_i(\theta') - \langle \alloc^{(d)}_i(\theta'), \bprice' \rangle ~&\geq~ v^{(d)}_i(\theta) - \langle \alloc^{(d)}_i(\theta), \bprice' \rangle,
    \end{align*}
    where one of the two inequalities is strict, as we apply a predefined deterministic tie-breaking rule that only depends on the request $\gamma^{(d)}_i$. Therefore, we have
    \begin{align*}
        v^{(d)}_i(\theta) - v^{(d)}_i(\theta') + \langle \alloc^{(d)}_i(\theta') -\alloc^{(d)}_i(\theta), \bprice \rangle ~&\geq~ 0, \quad \text{while} \\
        v^{(d)}_i(\theta) - v^{(d)}_i(\theta') + \langle \alloc^{(d)}_i(\theta') -\alloc^{(d)}_i(\theta), \bprice' \rangle ~&\leq~ 0,
    \end{align*}
    where again one of the two inequalities is strict. Therefore, the price vectors $\bprice$ and $\bprice'$ lie on different sides of the hyperplane defined by $v^{(d)}_i(\theta) - v^{(d)}_i(\theta') + \langle \alloc^{(d)}_i(\theta') -\alloc^{(d)}_i(\theta), \widetilde \bprice \rangle = 0$. 
    
    Since the tuples $(d, i, \theta, \theta')$ define at most $D \cdot n \cdot |\Theta|^2_{\max}$ hyperplanes, the space $\R^m_{\geq 0}$ is partitioned by these hyperplanes into no more than $m \cdot \left(D \cdot n \cdot |\Theta|^2_{\max}\right)^m$ regions (see \cite{Buck-43}). Then, if request profiles $\bgamma^{(1)}, \cdots, \bgamma^{(D)}$ and thresholds $(y^{(1)}, \cdots, y^{(D)}) \in [0, 1]^D$ are shattered, there must be
    \[
    2^D ~\leq~ m \cdot \left(D \cdot n \cdot |\Theta|^2_{\max}\right)^m,
    \]
    which implies $D \leq O(m \log (m \cdot n \cdot |\Theta|_{\max}))$.
\end{proof}

\section{Omitted Proofs from \Cref{sec:query-complexity}}

\subsection{Proof of \Cref{lma:pbesmall}}
\label{subsec:proof-lma-pbesmall}
\pbesmall*

We decompose the proof of \Cref{lma:pbesmall} into the following three claims:

\begin{Claim}
    \label{clm:realallocsmall}
    For \Cref{alg:exp}, we have 
    \[
    \pr_{\Lambda}\left[\exists j \in [m]: \sum_{t \in [T]} \realallocbase_j > (1 - 3\epsilon/4) \cdot T \cdot B_j\right] ~\leq~ \frac{\epsilon}{n},
    \]
    where we recall $\realallocbase_j$ represents the number of units of resource $j$ we allocate in the $t$-th round.
\end{Claim}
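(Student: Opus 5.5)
The plan is to exploit the self-correcting nature of the exponential price update. Fix a resource $j$ and write $S_t := \sum_{s \le t} \big(\realallocbase_j - (1-\epsilon) B_j\big)$, where $\realallocbase_j$ in round $s$ is understood as the round-$s$ consumption. By the update rule in \Cref{alg:exp}, $\price^{(t+1)}_j = \initprice \cdot \exp(\epsilon S_t / n)$. The bad event $\sum_{t \in [T]} \realallocbase_j > (1-3\epsilon/4) T B_j$ is exactly $S_T > \epsilon T B_j / 4$. Since $T \ge n$, this forces $\epsilon S_T / n \ge \epsilon^2 B_j / 4$, so the price of $j$ must have become huge. I will show that once the price is huge, resource $j$ is barely consumed, which produces a negative drift that (after a concentration argument) prevents $S$ from climbing that high. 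At the end I take a union bound over $j \in [m]$.

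\textbf{Step 1 (high price implies small expected consumption).} First I would bound the unconstrained welfare $W := \sum_i \E[\max_\theta v_i(\theta)]$. Any per-type choice of actions is feasible for \ref{program:ora-ub}$(\bD; n\cdot \mathbf{1})$. Scaling that fractional solution by $B_{\min}/n$ (recall $B_j \le n$) makes it feasible for \ref{program:ora-ub}$(\bD;\B)$. Hence $W \le (n/B_{\min}) \cdot \opt \le n \beta \widehat\opt / B_{\min}$. Revenue is at most welfare, so for any price vector, $\price_j \cdot c_j(\bprice) \le \langle \bprice, \cons(\bprice)\rangle \le W$. Consequently, whenever $\price_j \ge \Lambda^* := 2W/((1-\epsilon)B_j)$, we have $c_j(\bprice) \le (1-\epsilon)B_j/2$. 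In terms of $S$, this holds as soon as $S_{t-1} \ge K := \frac{n}{\epsilon} \log(\Lambda^*/\initprice)$. With $\initprice = \widehat\opt\,\epsilon^2/m$, we get $K = \frac{n}{\epsilon} \cdot O(\log(nm\beta/\epsilon))$.

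\textbf{Step 2 (drift after the last crossing).} On the bad event, let $\tau$ be the last round with $S_\tau \le K$; such a round exists since $S_0 = 0$. For every $s > \tau$ we have $S_{s-1} > K$, so $c_j(\bprice^{(s)}) \le (1-\epsilon)B_j/2$. Each round increments $S$ by at most $n$. This gives
\[
S_T \le K + n + \sum_{\tau+1 < s \le T}\big(\realallocbase_j - c_j(\bprice^{(s)})\big) - (T-\tau-1)\,\frac{(1-\epsilon)B_j}{2}.
\]
The needed comparison $K + n \le \epsilon T B_j/8$ follows from $T \ge n$ together with $B_j \ge C \log(nm\beta/\epsilon)\,\epsilon^{-2}$. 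It therefore suffices to show that, with probability $1-\epsilon/(nm)$, the following holds for every pair $\tau < t$: the martingale increments $\sum_{\tau < s \le t}(\realallocbase_j - c_j(\bprice^{(s)}))$ do not exceed $\epsilon T B_j/8$ plus a constant fraction of the deterministic negative drift $(t-\tau)(1-\epsilon)B_j/2$.

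\textbf{Step 3 (concentration; the main obstacle).} This step requires care, because the set of rounds after $\tau$ is random and the conditional variances depend on the adaptively chosen prices. I would apply \Cref{thm:freedman} to the martingale $Y_t = \frac{1}{n}\sum_{s \le t}(\realallocbase_j - c_j(\bprice^{(s)}))$, whose increments lie in $[-1,1]$. Its conditional variance per round is at most $c_j(\bprice^{(s)})/n^2 \cdot n = c_j(\bprice^{(s)})/n$, since the round consumption is a sum of $n$ independent $[0,1]$ terms. Because the drift and the variance are both proportional to $B_j$ per round, the variance-adaptive form of the bound lets the linear-in-$V_t$ slack be absorbed by the drift term. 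The time-uniformity of \Cref{thm:freedman} handles all $t$ at once. To handle the random start $\tau$, I would write the window sum as $Y_t - Y_\tau$ and bound both endpoints using the same uniform event. If the variance from rounds before $\tau$ (where consumption may be large) spoils this, I would fall back to a peeling/union bound over the at most $T$ possible values of $\tau$, which costs only a $\log T$ factor that the budget assumption can absorb. Choosing $\eta \approx \epsilon B_j/8$ and $\zeta \approx B_j$ gives failure probability $\exp(-\Omega(\epsilon^2 B_j)) \le \epsilon/(nm)$. A union bound over $j \in [m]$ then yields \Cref{clm:realallocsmall}.
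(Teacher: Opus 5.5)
\textbf{Verdict: there is a gap in Step 3.} Your Steps 1 and 2 are correct. The bound $W\le (n/B_{\min})\cdot\opt$, the threshold $K=\frac{n}{\epsilon}\,O(\log(nm\beta/\epsilon))$, and the last-crossing decomposition (with round $\tau+1$ absorbed by the $+n$) all check out. The problem is Step 3, which is where the proof actually lives; as written it does not close, for two reasons.

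\emph{The fallback fails.} A union bound over the $T$ possible values of $\tau$ multiplies your per-window tail $\exp(-\Omega(\epsilon^2 B_j))$ by $T$. The budget assumption cannot absorb the resulting $\log T$, because the theorem only requires $T\ge n$ and puts no upper bound on $T$.

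\emph{The primary route is unverified and, with your bookkeeping, has exactly the problem you anticipate.} You bound the normalized per-round conditional variance by $c_j(\bprice^{(s)})/n$. With no control of consumption before $\tau$, the pre-window variance $V_{\tau+1}$ can then be of order $\tau$. The linear-in-$V_t$ slack of \Cref{thm:freedman} has slope about $\epsilon/16$ with your choice of $\eta,\zeta$. So the endpoint error at $\tau+1$ can reach order $\epsilon\tau$, far above the margin $\epsilon TB_j/(8n)$ that Step 2 leaves you whenever $B_j\ll n$.

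\textbf{The fix: use the tight variance bound.} With it, Step 3 becomes a single application of \Cref{thm:freedman}.
\begin{enumerate}
\item Conditioned on $\hist_{s-1}$, the round consumption $d^{(s)}_j$ is a sum of $n$ independent $[0,1]$-valued terms. Hence $\var(d^{(s)}_j\mid\hist_{s-1})\le c_j(\bprice^{(s)})\le n$.
\item So the normalized increment has conditional variance at most $c_j(\bprice^{(s)})/n^2\le 1/n$, not $c_j(\bprice^{(s)})/n$. Your extra factor $n$ is precisely what creates the trouble.
\item Therefore $V_T\le T/n$ deterministically. Take $\zeta=T/n$ and $\eta=\epsilon TB_j/(16n)$, i.e., scale $\eta$ with the true margin rather than with $\epsilon B_j$. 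Then $\eta\ge\zeta$ once $B_j\ge16/\epsilon$, and using $T\ge n$,
\[
\pr\Bigl[\max_{t\le T}|Y_t|\ge\eta\Bigr]\le 2\exp(-3\eta/4)\le 2\exp(-3\epsilon B_j/64)\le\epsilon/(nm).
\]
\item On the complementary event, every window sum $n(Y_T-Y_{\tau+1})$ is below $\epsilon TB_j/8$, whatever the (non-stopping) time $\tau$ is. Your Step 2 inequality then gives $S_T<\epsilon TB_j/4$, contradicting the bad event.
\end{enumerate}
No drift absorption and no union over $\tau$ are needed; the drift only has to be nonpositive.

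\textbf{Comparison with the paper.} The paper uses no concentration for this claim.
\begin{itemize}
\item On the bad event it takes a suffix of rounds that together consume between $0.1\epsilon TB_j$ and $0.2\epsilon TB_j$ units of $j$. Such a suffix exists because a round consumes at most $n\le 0.1\epsilon TB_j$.
\item Every price of $j$ in that suffix is at least $\initprice\exp(0.05\epsilon^2TB_j/n)\ge\initprice(nm\beta/\epsilon)^{10}$.
\item So the learning phase's realized revenue, and hence its realized value, is enormous on the bad event. Markov's inequality against the expected learning-phase value finishes the proof.
\end{itemize}
In short, the paper uses ``high price times high \emph{realized} consumption'' directly. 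You go through ``high price $\Rightarrow$ low \emph{expected} consumption'' and then need a martingale bound to get back to realized consumption. Once repaired, your route yields an exponentially small tail, which the claim does not need.

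One point in your favor: the paper bounds the expected learning-phase value by $T\cdot\opt$, but no budget is enforced during learning. The valid bound is $T\cdot W$, which is exactly your Step 1 estimate, and the $(nm\beta/\epsilon)^{10}$ slack absorbs the difference.
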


\begin{Claim}
    \label{clm:real-exp-diff-small}
    For \Cref{alg:exp}, we have
    \[
    \pr_{\Lambda}\left[\exists j \in [m]: \sum_{t \in [T]} \realallocbase_j + \epsilon/4 \cdot T \cdot B_j \leq \sum_{t \in [T]} \consbase_j(\bpricet)\right] ~\leq~ \epsilon,
    \]
    where we recall function $\consbase_j(\bprice)$ represents the expected unit of resource $j$ we allocate to $n$ requests, if every request responds to price vector $\bprice$.
\end{Claim}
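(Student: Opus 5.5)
We will prove \Cref{clm:real-exp-diff-small} by viewing the discrepancy between expected and realized consumption as a martingale and applying the Generalized Freedman's Inequality (\Cref{thm:freedman}), one resource at a time, followed by a union bound over $j \in [m]$. Fix $j$. Let $\hist_t$ denote the history of the learning phase after round $t$. Since $\bprice^{(t)}$ is determined by $\hist_{t-1}$, and the round-$t$ requests are fresh independent draws, we have $\E[\realallocbase_j \mid \hist_{t-1}] = c_j(\bprice^{(t)})$. We therefore define the rescaled martingale
\[
Y_t ~:=~ \frac{1}{n}\sum_{\ell \le t} \big(c_j(\bprice^{(\ell)}) - d^{(\ell)}_j\big),
\]
whose increments lie in $[-1,1]$ because both $c_j$ and $d^{(\ell)}_j$ lie in $[0,n]$. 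The event in the claim is exactly $Y_T \ge \frac{\epsilon T B_j}{4n}$.

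\textbf{Variance bound.} Given $\hist_{t-1}$, the quantity $d^{(t)}_j$ is a sum of $n$ independent $[0,1]$ variables with mean $c_j(\bprice^{(t)})$. Hence its conditional variance is at most $c_j(\bprice^{(t)})$, and
\[
V_T ~\le~ \frac{1}{n^2}\sum_t c_j(\bprice^{(t)}).
\]
This is where the difficulty lies. $V_T$ is random and need not be small a priori, because the prices could in principle be low enough for expected consumption to be large. I plan to handle this in two cases.

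\textbf{Case 1: $\sum_t c_j(\bprice^{(t)}) \le 2TB_j$.} Then $V_T \le 2TB_j/n^2 =: \zeta$. Take $\eta = \epsilon T B_j/(4n)$. Whenever $V_T \le \zeta$, the slope term in \Cref{thm:freedman} is nonpositive, so the theorem gives
\[
\pr\big[|Y_T| \ge \eta,\ V_T \le \zeta\big] ~\le~ 2\exp\!\left(-\frac{\eta^2}{\zeta+\eta/3}\right) ~=~ 2\exp\big(-\Omega(\epsilon^2 T B_j / n^2 \cdot n^2 / ( n)) \big).
\]
Simplifying, $\eta^2/\zeta = \Theta(\epsilon^2 T B_j)$, since the factors of $n$ cancel as $\eta^2 = \epsilon^2T^2B_j^2/(16n^2)$ and $\zeta = 2TB_j/n^2$. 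Using $T \ge n \ge 1$ and $B_j \ge \Omega(\log(m/\epsilon)\epsilon^{-2})$, this bound is at most $\epsilon/(2m)$.

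\textbf{Case 2: $\sum_t c_j(\bprice^{(t)}) > 2TB_j$.} Here I would use the ``$\zeta$-slack'' form of \Cref{thm:freedman} directly. The bad event then requires $Y_T \ge \eta + \kappa(V_T - \zeta)$, with $\kappa = \Theta(\eta/\zeta) = \Theta(\epsilon)$, and $V_T - \zeta$ is proportional to the excess consumption. Since $nY_T \le \sum_t c_j(\bprice^{(t)})$, I would argue that a large expected consumption forces a large realized consumption except with small probability. The crude alternative I would try first is cleaner: with $\eta$ chosen so that $\kappa \le 1/2$, the threshold $\eta + \kappa(V_T - \zeta)$ is exceeded with probability at most $\epsilon/(2m)$. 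On the complement, $Y_T$ is a constant fraction below $\frac1n\sum_t c_j$, so realized consumption is at least roughly half of expected consumption, which is more than $TB_j$. This event is controlled by \Cref{clm:realallocsmall}, whose failure probability $\epsilon/n$ can be absorbed.

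\textbf{Conclusion.} Combining the two cases and taking a union bound over $j$ yields total failure probability at most $\epsilon$. The main obstacle is handling the random variance $V_T$ in Case 2. My fallback plan there is to invoke the uniform-in-$t$ form of \Cref{thm:freedman} with geometric peeling over the value of $V_T$; this costs at most a $\log(nT)$ factor, which is absorbed by the budget requirement $B_j \ge \Omega(\log(nm\beta/\epsilon)\epsilon^{-2})$.
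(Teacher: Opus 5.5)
Your proposal is correct and is essentially the paper's argument: a martingale with increments $\pm(\consbase_j(\bpricet) - \realallocbase_j)/n$, \Cref{thm:freedman} with its slope term absorbing the random conditional variance, and \Cref{clm:realallocsmall} to rule out the large-consumption regime. Your Case~2 is exactly this; the paper avoids the case split (and any peeling) by bounding $V_T \le \frac{1}{n}\sum_{t} \realallocbase_j + |Y_T|$ and taking $\eta = 0.1\epsilon B_j$, $\zeta = B_j$, so the slope is at most $0.1\epsilon$.

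One small slip in Case~1: once $\epsilon n > 24$, the $\eta/3$ term dominates $\zeta$ in the denominator, so the exponent is $\Theta(\epsilon T B_j/n) \geq \Omega(\epsilon B_j)$ rather than $\Theta(\epsilon^2 T B_j)$. This is still sufficient under the budget assumption.
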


\begin{Claim}
    \label{clm:exp-small-pbe-small}
    For any realization of $\Lambda$ in \Cref{alg:exp}, if the condition $ \sum_{t \in [T]} \cons(\bpricet) \leq (1 - \epsilon/2) \cdot T \cdot \B$ is satisfied, there must be $\pbe(\Lambda) \leq \epsilon$.
\end{Claim}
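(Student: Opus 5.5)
The plan is to prove the claim directly by a concentration bound over the auction phase, with $\Lambda$ held fixed. Once $\Lambda$ is fixed, the auction phase (run without the ``Terminate'' step, as in the definition of $\pbe(\Lambda)$) has the following structure. Each request $i$ independently draws its type $\gamma_i \sim \D_i$. It also independently draws a price vector $\bprice$ uniformly from $\Lambda$, and then best-responds, using the consistent tie-breaking rule fixed in the notation of this section. So for each resource $j$, the consumption of request $i$ is
\[
\rho_{i,j} ~:=~ a_{i,j}\big(\theta^*(\bprice;\gamma_i)\big),\qquad \bprice \sim \mathrm{Unif}(\Lambda).
\]
This is a $[0,1]$-valued random variable that depends only on request $i$'s own randomness. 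Hence the $\rho_{i,j}$ are independent across $i$, and their law does not depend on the arrival order. The arrival order only matters through the termination rule, which we have removed.

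The first step is to compute the mean. By the definition of $\cons_i$,
\[
\E\big[\rho_{i,j}\big] ~=~ \frac{1}{T}\sum_{t\in[T]} \consbase_{i,j}(\bpricet).
\]
Summing over $i$ and using the hypothesis $\sum_{t} \cons(\bpricet) \le (1-\epsilon/2)\cdot T\cdot\B$ gives
\[
\mu_j ~:=~ \sum_{i}\E[\rho_{i,j}] ~\le~ (1-\epsilon/2)\, B_j .
\]
The second step is to bound the variance, exactly as in the proof of \Cref{lma:budget-to-value}. We have
\[
\var\Big(\sum_i \rho_{i,j}\Big) ~\le~ \sum_i \E[\rho_{i,j}^2] ~\le~ \mu_j ~\le~ B_j .
\]

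The third step applies Bernstein's inequality (\Cref{Bernstein-bounded}) to the centered variables $\rho_{i,j}-\E[\rho_{i,j}]$, which lie in $[-1,1]$. Budget exhaustion of resource $j$ requires $\sum_i \rho_{i,j} \ge B_j - 1$. To absorb the $-1$ coming from the threshold in Line 14 of \Cref{alg:exp}, we use $B_j \ge 8/\epsilon$. Under that assumption, exhaustion implies a deviation of at least $\epsilon B_j/4$ above the mean, and Bernstein yields
\[
\pr\Big[\textstyle\sum_i \rho_{i,j} \ge B_j-1\Big] ~\le~ \exp\!\Big(-\frac{\epsilon^2 B_j^2/32}{B_j + \epsilon B_j/12}\Big) ~\le~ \frac{\epsilon}{m}.
\]
The last inequality holds provided $B_j \ge C\log(m/\epsilon)\cdot\epsilon^{-2}$ for a suitable constant $C$, which is covered by the large-budget assumption of \Cref{thm:main-formal}. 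A union bound over $j\in[m]$ then gives $\pbe(\Lambda)\le \epsilon$.

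The only delicate point is justifying the independence across requests and the irrelevance of the arrival order. Two things could break it: adversarial tie-breaking that depends on the history, and the termination rule. The definition of $\pbe(\Lambda)$ explicitly ignores termination. For ties, I would invoke the consistent (possibly random) tie-breaking convention stated in this section's notation, under which each request's action depends only on its own type and the price it draws. With that convention, the remaining steps are routine.
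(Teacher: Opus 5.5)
Your proposal is correct and follows the paper's proof essentially step for step. You fix $\Lambda$, write each request's mean consumption as the average over $t$ of $c_{i,j}$ at the learned prices, and bound the variance by the mean. You then apply Bernstein with deviation $\epsilon B_j/4$ to the event $\sum_i \rho_{i,j}\ge B_j-1$ and finish with a union bound over $j\in[m]$. The differences are only cosmetic. You bound the variance by $B_j$ rather than $(1-\epsilon/2)B_j$ and absorb the slack into the constant. You also make explicit two things the paper leaves implicit: the small-budget condition needed to absorb the $-1$, and the independence and consistent tie-breaking justification.
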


Combining \Cref{clm:realallocsmall}, \Cref{clm:real-exp-diff-small}, \Cref{clm:exp-small-pbe-small} proves \Cref{lma:pbesmall}:

\begin{proof}[Proof of \Cref{lma:pbesmall}]
    For simplicity of notation, define $\bd$ be the event that condition $ \sum_{t \in [T]} \cons(\bpricet) \leq (1 - \epsilon/2) \cdot T \cdot \B$ is satisfied for a realization $\Lambda$. Then \Cref{clm:exp-small-pbe-small} guarantees that $\pr_{\Lambda}[\bd] \leq \epsilon$. Note that
    \begin{align*}
        \E_\Lambda \left[\pbe(\Lambda)\right] ~=~& \E_\Lambda \left[\pbe(\Lambda) \Big| \bd\right] \cdot \pr_\Lambda \left[\bd \right] + \E_\Lambda \left[\pbe(\Lambda) \Big| \lnot \bd\right] \cdot \pr_\Lambda \left[\lnot \bd \right] \\
        ~\leq~& \epsilon \cdot 1 + 1 \cdot \pr_\Lambda \left[\lnot \bd \right].
    \end{align*}
    Therefore, it remains to upper bound the probability of event $\lnot \bd$. Note that when event $\bd$ does not happen, there must be at least one $j \in [m]$ satisfying either $\sum_{t \in [T]} \realallocbase_j > (1 - 3\epsilon/4) \cdot T \cdot B_j$ or $\sum_{t \in [T]} \realallocbase_j + \epsilon/4 \cdot T \cdot B_j \leq \sum_{t \in [T]} \consbase_j(\bpricet)$. Therefore, by the union bound, \Cref{clm:realallocsmall}, and \Cref{clm:real-exp-diff-small}, there must be $\pr_\Lambda[\lnot \bd] \leq \epsilon + \epsilon/n \leq 2\epsilon$, which further upper-bound $\E_\Lambda \left[\pbe(\Lambda)\right]$ by $3\epsilon$.
\end{proof}

Now, we finish the proofs of \Cref{clm:realallocsmall}, \Cref{clm:real-exp-diff-small}, and \Cref{clm:exp-small-pbe-small}.

\subsubsection{Proof of \Cref{clm:realallocsmall}}

It's sufficient to show that for every $j \in [m]$, we have
\begin{align}
    \pr_{\Lambda}\left[\sum_{t \in [T]} \realallocbase_j > (1 - 3\epsilon/4) \cdot T \cdot B_j\right] ~\leq~ \frac{\epsilon}{n\cdot m}, \label{eq:realalloc-sum-bounded}
\end{align}

as applying the union bound over all $j \in [m]$ proves \Cref{clm:realallocsmall}.

Fix the realization of requests $\{\gamma^{(t)}_i\}$. Suppose the inequality
\[
\sum_{t \in [T]} \realallocbase_j > (1 - 3\epsilon/4) \cdot T \cdot B_j
\]
holds. Let $\hat t < T$ be the index that satisfies
\[
\sum_{t = \hat t}^T \realallocbase_j \in [0.1 \epsilon \cdot T \cdot B_j, 0.2 \epsilon \cdot T \cdot B_j].
\]
Such index $\hat t$ must exist, as each $\realallocbase_j$, which represents the unit of consumed resource in round $t$, must fall between $[0, n]$, and there must be
\[
0.1 \cdot \epsilon \cdot T \cdot B_j ~\geq~ n
\]
when $T \geq n$ and $B_j \geq \frac{10}{\epsilon^2} \geq 10/\epsilon$ holds. Then, for every $t' \in [\hat t, T]$, the price $\price^{(t')}_j$ for resource $j$ must satisfy
\begin{align*}
    \price^{(t')}_j ~&=~ \initprice \cdot \exp \left(\frac{\epsilon}{n} \cdot \Big( \sum_{t = 1}^{t' - 1} \realallocbase_j - (1 - \epsilon) \cdot (t' - 1) \cdot B_j \Big)\right) \\
    ~&\geq~ \initprice \cdot \exp \left(\frac{\epsilon}{n} \cdot \Big( \sum_{t \in [T]} \realallocbase_j - (1 - \epsilon) \cdot T \cdot B_j - 0.2 \epsilon \cdot T \cdot B_j \Big)\right) \\
    ~&\geq~ \initprice \cdot \exp \left(\frac{\epsilon}{n} \cdot  0.05\epsilon \cdot T \cdot B_j  \right) \\
    ~&\geq~ \initprice \cdot \exp\big(10 \log(n \cdot m \cdot \beta/\epsilon) \big) ~\geq~ \frac{10nm \cdot \opt}{\epsilon^2},
\end{align*}
where the last line holds when $T \geq n$ and $B_j \geq 200 \log (nm \beta/\epsilon) \cdot \epsilon^{-2}$, and we use the assumption that $\opt \leq \beta \cdot \widehat \opt$ in the last inequality. 

With the above lower bound for $\price^{(t')}_j$, note that the total value collected from round $\hat t$ to round $T$ is at least the payment for buying resource $j$. Then, when the inequality $\sum_{t \in [T]} \realallocbase_j > (1 - 3\epsilon/4) \cdot T \cdot B_j$ holds,  the total value collected in the learning phase of \Cref{alg:exp} is at least
\[
\frac{10nm \cdot \opt}{\epsilon^2} \cdot 0.1 \epsilon \cdot T \cdot B_j ~\geq~ \frac{nm}{\epsilon} \cdot T \cdot \opt.
\]
Note that the above inequality implies 
\[
    \pr_{\Lambda}\left[\sum_{t \in [T]} \realallocbase_j > (1 - 3\epsilon/4) \cdot T \cdot B_j\right] ~\leq~ \frac{\epsilon}{n\cdot m},
\]
as if the bound does not hold, the expected value collected in the learning phase of \Cref{alg:exp} is at least $T \cdot \opt$, which is in contrast to the fact that no algorithm can collect more than $T \cdot \opt$ value in the learning phase.

\subsubsection{Proof of \Cref{clm:real-exp-diff-small}}

   It's sufficient to show that for every $j \in [m]$, we have
    \begin{align}
    \label{eq:real-exp-diff-small}
        \pr_{\Lambda}\left[\sum_{t \in [T]} \realallocbase_j + \epsilon/4 \cdot T \cdot B_j \leq \sum_{t \in [T]} \consbase_j(\bpricet)\right] ~\leq~ \frac{\epsilon}{m}.
    \end{align}
    Then, taking union bound over $j \in [m]$ proves \Cref{clm:real-exp-diff-small}.

    Fix $j$ and the realization in the learning phase. To prove \eqref{eq:real-exp-diff-small}, note that when the event $\sum_{t \in [T]} \realallocbase_j + \epsilon/4 \cdot T \cdot B_j \leq \sum_{t \in [T]} \consbase_j(\bpricet)$ happens, there must be either
\[
    \sum_{t \in [T]} \realallocbase_j > (1 - 3\epsilon/4) \cdot T \cdot B_j \quad \text{or} \quad \Big| \sum_{t \in [T]} \realallocbase_j - \sum_{t \in [T]} \consbase_j(\bpricet) \Big| ~\geq~ \frac{0.1n\epsilon \cdot B_j + 0.1\epsilon \cdot \sum_{t \in [T]} \realallocbase_j}{1 - 0.1\epsilon},
\]
    as if neither of the two events hold, there must be
\begin{align*}
     \Big| \sum_{t \in [T]} \realallocbase_j - \sum_{t \in [T]} \consbase_j(\bpricet) \Big| ~&\leq~ 0.11 T\epsilon \cdot B_j + 0.11 \epsilon \cdot \sum_{t \in [T]} \realallocbase_j \\
    ~&\leq~ 0.22 \epsilon \cdot T \cdot B_j ~\leq~ \epsilon/4 \cdot T \cdot B_j,
\end{align*}
where the first inequality holds when $\epsilon \leq 0.5$ and $T \geq n$, and the second inequality uses the assumption that $\sum_{t \in [T]} \realallocbase_j \leq (1 - 3\epsilon/4) \cdot T \cdot B_j \leq T \cdot B_j$.

    Therefore, we have
\begin{align*}
    \pr_{\Lambda} &\left[\sum_{t \in [T]} \realallocbase_j + \epsilon/4 \cdot T \cdot B_j \leq \sum_{t \in [T]} \consbase_j(\bpricet)\right] ~\leq~ \pr_{\Lambda}\left[\sum_{t \in [T]} \realallocbase_j > (1 - 3\epsilon/4) \cdot T \cdot B_j\right]\\
    & + \pr_{\Lambda}\left[\Big| \sum_{t \in [T]} \realallocbase_j - \sum_{t \in [T]} \consbase_j(\bpricet) \Big| ~\geq~ \frac{0.1n\epsilon \cdot B_j + 0.1\epsilon \cdot \sum_{t \in [T]} \realallocbase_j}{1 - 0.1\epsilon}\right] \\
    \leq& ~\frac{\epsilon}{n  \cdot m} + \pr_{\Lambda}\left[\Big| \sum_{t \in [T]} \realallocbase_j - \sum_{t \in [T]} \consbase_j(\bpricet) \Big| ~\geq~ \frac{0.1n\epsilon \cdot B_j + 0.1\epsilon \cdot \sum_{t \in [T]} \realallocbase_j}{1 - 0.1\epsilon}\right],
\end{align*}
where the second inequality follows from \eqref{eq:realalloc-sum-bounded}, and it remains to show
\begin{align}
    \pr_{\Lambda}\left[\Big| \sum_{t \in [T]} \realallocbase_j - \sum_{t \in [T]} \consbase_j(\bpricet) \Big| ~\geq~ \frac{0.1n\epsilon \cdot B_j + 0.1\epsilon \cdot \sum_{t \in [T]} \realallocbase_j}{1 - 0.1\epsilon}\right] ~\leq~ \frac{\epsilon}{m}. \label{eq:ineq-via-martingale}
\end{align}

    To prove \eqref{eq:ineq-via-martingale}, we model the consumption of resource $j$ as a martingale process. Let $\{Y_t\}_{t \in [T]}$ be a martingale, such that 
    \[
    \Delta Y_t ~=~ Y_t - Y_{t-1} ~:=~ \frac{\realallocbase_j - \consbase_j(\bpricet)}{n}
    \]
    is a mean-zero random variable bounded between $[-1, 1]$, since when the requests till round $t - 1$ are realized, $\consbase_j(\bpricet)$ represents the expectation of $\realallocbase_j$, and we allocate at most $n$ unit of resource $j$ in round $t$. Let $\hist_{t}$ denote the history of requests till round $t$, and let
    \[
    V_t ~:=~ \sum_{t' = 1}^t \var(\Delta Y_{t'} ~|~ \hist_{t' - 1})
    \]
    be the conditional variance of the martingale process. Note that when the requests till round $t - 1$ are realized, we have
    \begin{align*}
        \var(\Delta Y_{t} ~|~ \hist_{t-1}) = \var \left( \frac{\realallocbase_j}{n} ~|~ \hist_{t-1} \right) \leq \E \left[\frac{\realallocbase_j}{n} \cdot \frac{\realallocbase_j}{n} ~|~ \hist_{t-1}  \right] \leq \E \left[\frac{\realallocbase_j}{n}  ~|~ \hist_{t-1} \right] = \frac{\consbase_j(\bpricet)}{n},
    \end{align*}
    where the first equality uses the fact that $\consbase_j(\bpricet)$ is deterministic when the history $\hist_{t-1}$ till round $t - 1$ is fixed and does not affect the conditional variance; the second inequality uses the fact that $\var(X)\leq \E(X^2)$ for any random variable $X$; and the second inequality uses the fact that $\frac{\realallocbase_j}{n} \in [0, 1]$. Therefore,
    \begin{align}
        V_T ~=~ \sum_{t \in [T]} \var(\Delta Y_{t} ~|~ \hist_{t-1}) ~\leq~ \frac{1}{n} \cdot \sum_{t \in [T]} \consbase_j(\bpricet) ~\leq~ \frac{1}{n} \cdot \sum_{t \in [T]} \realallocbase_j + |Y_T|. \label{eq:martingale-var-bound}
    \end{align}

    Now, we apply \Cref{thm:freedman} with $\eta = 0.1\epsilon \cdot B_j$ and $\zeta =  B_j$ to prove \Cref{eq:ineq-via-martingale}.  We have
    \begin{align*}
        &\pr_\Lambda \left[\Big| \sum_{t \in [T]} \realallocbase_j - \sum_{t \in [T]} \consbase_j(\bpricet) \Big| ~\geq~ \frac{0.1n\epsilon \cdot B_j + 0.1\epsilon \cdot \sum_{t \in [T]} \realallocbase_j}{1 - 0.1\epsilon} \right] \\
        ~=~& \pr_\Lambda \left[|Y_T| ~\geq~ \eta + 0.1\epsilon \cdot \left( \frac{1}{n} \cdot \sum_{t \in [T]} \realallocbase_j + |Y_T|\right) \right] \\
        ~\leq~& \pr_\Lambda \left[|Y_T| ~\geq~ \eta + 0.1\epsilon \cdot V_T \right] \\
        ~\leq~& \pr_\Lambda \left[|Y_T| ~\geq~ \eta + \frac{\eta/\zeta - \log(1 + \eta/\zeta)}{\log(1 + \eta/\zeta)} \cdot (V_T - \zeta) \right] \\
        ~\leq~& 2\exp \left( -\frac{\eta^2}{\zeta + \eta/3} \right) \\
        ~=~& 2\exp\left(-\frac{0.01\epsilon^2B_j^2}{B_j+0.1\epsilon B_j/3}\right) \\
        ~\leq~& 2\exp \left(-\epsilon^2 B_j / 1000 \right) ~\leq~ \epsilon/m,
    \end{align*}
    where the third line uses \eqref{eq:martingale-var-bound}; the fourth line uses the fact that $\eta/\zeta = 0.1\epsilon \leq 1$, and inequality $\frac{x - \log (1 + x)}{\log(1 + x)} \leq  x$ holds when $x \in (0, 1]$; the fifth line applies \Cref{thm:freedman}; the sixth line applies the value of $\eta$ and $\zeta$; and the last inequality holds when $B_j \geq 10000 \log (m/\epsilon) \cdot \epsilon^{-2}$.

\subsubsection{Proof of \Cref{clm:exp-small-pbe-small}}

It's sufficient to show that for each resource $j \in [m]$, the probability that the auction phase allocates more than $B_j - 1$ unit of resource $j$ is at most $\epsilon/m$. Then, applying union bound over $j \in [m]$ proves \Cref{clm:exp-small-pbe-small}. 

Fix $j$ and $\Lambda$. Let $X_{i,j} \in [0, 1]$ be the random variable representing the unit of resource $j$ consumed by request $i$ in the auction phase. Then, the desired inequality is
\[
\pr \left[ \sum_{i \in [n]} X_{i, j} ~\geq~ B_j - 1\right] ~\leq~ \frac{\epsilon}{m}.
\]

We prove the above inequality via concentration. 
Since request $i$ faces a price vector chosen from $\Lambda$ uniformly at random, we have
\[
\E[X_{i,j}] ~=~ \frac{1}{T} \cdot \sum_{t \in [T]} \consbase_{i,j} (\bpricet), \quad \text{and} \quad \var(X_{i, j}) = \E[X^2_{i, j}] ~\leq~ \E[X_{i, j}] ~=~ \frac{1}{T} \cdot \sum_{t \in [T]} \consbase_{i,j} (\bpricet).
\]
Then, since \Cref{clm:exp-small-pbe-small} assumes $ \sum_{t \in [T]} \cons(\bpricet) \leq (1 - \epsilon/2) \cdot T \cdot \B$, we have
\begin{align}
    \sum_{i \in [n]} \var(X_{i, j}) ~\leq~ \sum_{i \in [n]} \E[X_{i,j}] ~=~ \frac{1}{T} \cdot \sum_{t \in [T]} \consbase_j(\bpricet) ~\leq~ (1 - \epsilon/2) \cdot B_j \label{eq:consbase-sum-bounded} 
\end{align}

Now, we apply \Cref{Bernstein-bounded} to the summation of mean-zero random variables $X_{i, j} - \E[X_{i, j}]$, which gives
\begin{align*}
    \pr \left[ \sum_{i \in [n]} X_{i, j} ~\geq~ B_j - 1\right] ~&\leq~ \pr \left[ \sum_{i \in [n]} X_{i, j} ~\geq~  (1 - \epsilon/4) \cdot B_j\right] \\
    ~&=~ \pr \left[ \sum_{i \in [n]} \left(X_{i, j} - \E[X_{i,j}]\right) ~\geq~  (1 - \epsilon/4) \cdot B_j - \sum_{i \in [n]} \E[X_{i,j}]\right] \\
    ~&\leq~ \pr \left[ \sum_{i \in [n]} \left(X_{i, j} - \E[X_{i,j}]\right) ~\geq~ \epsilon/4 \cdot B_j \right] \\
    ~&\leq~ \exp \left(-\frac{\epsilon^2 \cdot B^2_j /32}{\sum_{i \in [n]} \var(X_{i, j}) + \epsilon \cdot B_j / 12} \right) \\
    ~&\leq~ \exp\left( -\epsilon^2 \cdot B_j/32\right) ~\leq~ \epsilon/m,
\end{align*}
where the third line and the last line follows \eqref{eq:consbase-sum-bounded}, the fourth line follows \Cref{Bernstein-bounded}, and the last inequality holds when $B_j \geq 32 \log(m/\epsilon) \cdot \epsilon^{-2}$.

\subsection{Proof of \Cref{lma:algtsmall}}
\label{subsec:proof-lma-algtsmall}

\algtsmall*

\begin{proof}

We first observe that conditioning on the learning phase till round $t - 1$, the expected total value collected in the $t$-th round of the learning phase is exactly $\sum_{i \in [n]} \wel_i(\bpricet)$. Taking the expectation over the learning phase till round $t - 1$ and summing this equality over $t \in [T]$, it implies the expected total value collected by the learning phase of \Cref{alg:exp} is exactly $\E_{\Lambda}[\algt(\Lambda)]$. Therefore, to prove \Cref{lma:algtsmall}, it's equivalent to show that the total value collected in the learning phase of \Cref{alg:exp} is at most $2T \cdot \opt$.

To bound the total value collected in the learning phase, the main idea is to show that the total budget consumption in the learning phase is bounded by $T \cdot \B$ with high probability. To be specific, we consider the following two stages algorithm as an alternate learning phase:
\begin{itemize}
    \item Stage 1: Run the learning phase algorithm of \Cref{alg:exp}, until some resource $j$ is allocated at least $T \cdot B_j - n$ units at the end of round $\tau < T$, or reaching $T$ rounds.
    \item Stage 2: For the remaining rounds $\tau + 1, \cdots, T$, satisfy the highest value action of each request.
\end{itemize}

Clearly the total value collected in the learning phase of \Cref{alg:exp} is upper-bounded by the total value collected in the above alternate algorithm. It remains to show that the alternate algorithm collects at most $2 T \cdot \opt$ value. For Stage 1 of the algorithm, the total budget consumption is at most $T \cdot \B$, and therefore the total value collected in Stage 1 can't exceed $T \cdot \opt$. For Stage 2, the total collected value is bounded by
\begin{align*}
    \sum_{t \in [T]} \pr[t \text{ is in Stage 2}] \cdot \sum_{i \in [n]} \E_{\gamma_i}\left[\max_{\theta \in \Theta_i} v_i(\theta)\right] ~&\leq~ \sum_{t \in [T]} \pr[t \text{ is in Stage 2}] \cdot n \cdot \opt \\
    ~&\leq~ T \cdot n \cdot \opt \cdot \pr[T \text{ is in Stage 2}],
\end{align*}
where the first inequality uses the fact that the expected maximum value of a request can't be more than $\opt$, and the second inequality uses the fact that the probability that a round $t$ is in Stage 2 is non-increasing with respect to $t$.

It remains to bound $\pr[T \text{ in Stage 2}]$. Note that when the event ``$T$ is in Stage 2'' happens, there must be a resource $j$, such that the Stage 1 of the algorithm, which is equivalent to the learning phase of \Cref{alg:exp}, allocates at least $T \cdot B_j - n > T \cdot B_j - 3\epsilon/4 \cdot T \cdot B_j$ unit of resource $j$ in $T$ rounds. Since \Cref{clm:realallocsmall} guarantees that the learning phase of \Cref{alg:exp} allocates more than $(1 - 3\epsilon/4) \cdot T \cdot B_j$ unit of resource $j$ for some $j \in [m]$ with probability at most $\epsilon/n$, there must be $\pr[T \text{ in Stage 2}] \leq \epsilon/n$. Therefore, the total expected value collected by the alternate algorithm is at most 
\[
T \cdot \opt + \frac{\epsilon}{n} \cdot T \cdot n \cdot \opt ~\leq~ 2T \cdot \opt. \qedhere
\]
\end{proof}

\subsection{Proof of \Cref{lma:plussmall}}
\label{subsec:proof-lma-plussmall}

\plussmall*

\begin{proof}
    Note that
    \begin{align*}
        \E_{\Lambda}\left[\sum_{t \in [T]} \left \langle  (\realalloct - \cons(\bpricet))^+ , \bpricet \right\rangle \right] ~=~ \sum_{t \in [T]} \E_{t-1} \left[ \left \langle  \E_{\realalloct}\left[(\realalloct - \cons(\bpricet))^+ \right] , \bpricet \right\rangle \right],
    \end{align*}
    where the notation $\E_{t-1}$ represents taking the expectation over only the first $t-1$ rounds of the learning phase, and the equality follows from the fact that price vector $\bpricet$ is fixed after the first $t-1$ rounds of the learning phase are realized. Similarly, we have
    \[
    \E_{\Lambda}\left[\sum_{t \in [T]} \left \langle  \epsilon \cdot \cons(\bpricet) + \epsilon \cdot \B , \bpricet \right\rangle \right] ~=~ \sum_{t \in [T]}  \E_{t-1}\left[\left \langle  \epsilon \cdot \cons(\bpricet) + \epsilon \cdot \B , \bpricet \right\rangle \right].
    \]
    Therefore, to prove \Cref{lma:plussmall}, it's sufficient to show that for any realization of the first $t - 1$ rounds of the learning phase, there must be 
    \begin{align*}
        \E_{\realalloct}\left[(\realalloct - \cons(\bpricet))^+ \right] ~\leq~ \epsilon \cdot \cons(\bpricet) + \epsilon \cdot \B ,
    \end{align*}
    which is further equivalent to show that for every $j \in [m]$, we have
    \begin{align}
        \E[(\realallocbase_j - \consbase_j(\bpricet))^+] \leq \epsilon \cdot \consbase_j(\bpricet) + \epsilon \cdot B_j.\label{eq:plussmall-1} 
    \end{align}

    Let random variable $Z_i = \realallocbase_{i,j} - \consbase_{i,j}(\bpricet)$, i.e., the difference between the real unit of resource $j$ consumed by request $i$ in round $t$ and its difference. Then, we have $Z_i \in [-1, 1]$, as each request consumes at most one unit of each resource, and $\E[Z_i] = 0$. Note that
    \begin{align*}
        \E^2\left[(\realallocbase_j - \consbase_j(\bpricet))^+\right] ~=~ \E^2\left[\Big(\sum_{i \in [n]} Z_i\Big)^+\right] ~&\leq~ \E\left[\Big(\sum_{i \in [n]} Z_i\Big)^2\right] \\
        ~&=~  \var  \left(\sum_{i \in [n]} Z_i\right) \\
        ~&\leq~ \sum_{i \in [n]} \var(Z_i) \\
        ~&\leq~ \sum_{i \in [n]} \var(\realallocbase_{i,j}) ~\leq~ \sum_{i \in [n]} \consbase_{i,j} (\bpricet) ~=~ \consbase_j(\bpricet),
    \end{align*}
    where the first line uses the standard Cauchy-Schwarz inequality for random variables, together with the fact that $(r^+)^2 \leq r^2$ for every value $r$, the second line uses the fact that each $Z_i$ is a mean-zero random variable, the third line uses the fact that each $Z_i$ is independent, and the last line uses the fact that $\realallocbase_{i,j} \in [0,1]$ to upper-bound its variance as its expectation. Therefore, we have
\[
\E\left[(\realallocbase_j - \consbase_j(\bpricet))^+\right] ~\leq~ \sqrt{\consbase_j(\bpricet)} ~\leq~ \epsilon \cdot \consbase_j(\bpricet) + \epsilon \cdot B_j,
\]
where the last inequality holds from the fact that we have $\sqrt{\consbase_j(\bpricet)} \leq \sqrt{B_j} \leq \epsilon \cdot B_j$ when $\consbase_j(\bpricet) \leq B_j$, while $\sqrt{\consbase_j(\bpricet)} \leq \consbase_j(\bpricet) / \sqrt{B_j} \leq \epsilon \cdot \consbase_j(\bpricet)$ when $ \consbase_j(\bpricet) \geq B_j$.
\end{proof}

\end{document}